\documentclass[11pt,a4paper]{article}
\usepackage[T1]{fontenc}
\usepackage[utf8]{inputenc}
\usepackage{lmodern}
\usepackage{amsmath,amssymb,amsthm,mathtools,mathrsfs}
\usepackage{microtype}
\usepackage{geometry}
\usepackage{booktabs}
\usepackage{enumitem}
\usepackage{xcolor}
\usepackage{tikz}
\usetikzlibrary{arrows.meta,calc,decorations.pathmorphing,positioning}
\usepackage{hyperref}
\definecolor{deepblue}{RGB}{28,52,84}
\definecolor{midblue}{RGB}{54,91,135}
\definecolor{burgundy}{RGB}{128,36,52}
\definecolor{softgray}{RGB}{245,246,248}

\hypersetup{
	colorlinks=true,
	linkcolor=blue!55!black,
	citecolor=blue!55!black,
	urlcolor=blue!55!black,
	pdftitle={A generalization of the Fredenhagen-Haag derivation of Hawking radiation for a class of Vaidya space-times.},
	pdfauthor={Felipe Dilho Alves}
}

\newtheorem{theorem}{Theorem}[section]
\newtheorem{proposition}[theorem]{Proposition}
\newtheorem{lemma}[theorem]{Lemma}
\newtheorem{corollary}[theorem]{Corollary}
\theoremstyle{definition}
\newtheorem{definition}[theorem]{Definition}
\newtheorem{assumption}[theorem]{Assumption}
\theoremstyle{remark}

\newcommand{\M}{\mathcal M}
\newcommand{\D}{\mathcal D}
\newcommand{\Ocal}{\mathcal O}
\newcommand{\scrI}{\mathscr I}
\newcommand{\HH}{\mathcal H}
\newcommand{\Cinf}{C_0^\infty}
\newcommand{\eps}{\varepsilon}
\newcommand{\dd}{\mathrm d}
\newcommand{\supp}{\operatorname{supp}}
\newcommand{\WF}{\operatorname{WF}}

\title{A generalization of the Fredenhagen-Haag derivation of Hawking radiation for a class of Vaidya space-times.}
\author{Felipe Dilho Alves}
\date{\today}

\begin{document}
	\maketitle
	
	\begin{abstract}
		We develop a quantitative Fredenhagen--Haag like approach for describing Hawking radiation using massless scalar
		fields on controlled spherically symmetric Vaidya space-times.  The local
		thermal character of the system is supplied by the universal scaling limit of Hadamard
		two-point functions at an outer trapping horizon
		\cite{KurpiczPinamontiVerch2021}.  On regular detector--horizon windows, we
		construct globally hyperbolic developments and compare the nonautonomous
		Vaidya evolution with a frozen Schwarzschild propagator on Sobolev energy
		spaces.  We derive an explicit Duhamel estimate that is uniform in angular
		momentum, calculate the exact linear null-peeling coefficient, and bound the
		quadratic remainder of the ray map.  Combining these estimates with
		positivity gives a two-sided detector-response inequality relative to the
		local thermal reference.  Its error terms quantify operator variation,
		stationary scattering tails, finite Hadamard scaling, horizon localisation,
		and the outgoing channel.  The inequality is valid at finite parameters
		because each of these contributions is retained.  Under the decay hypotheses, the detector response converges to the corresponding
		Fredenhagen--Haag form for asymptotically stationary accretion and for
		evaporation--accretion turnaround profiles.  We also construct a Hadamard
		state by Cauchy transport from an eventually stationary Unruh covariance and
		show that a finite evaporating slab does not determine a late-time response
		without a prescribed future extension.  For asymptotic evaporation with
		$m(u)>0$ at every finite time and $m(u)\to0$, a mass-rescaled conformal
		formulation yields a scale-covariant finite-window estimate for
		scale-following detectors.
	\end{abstract}
	
	\tableofcontents
	
	\section{Introduction}
	
	Fredenhagen and Haag related the response of a detector at late Schwarzschild
	time to the short-distance behaviour of a quantum state near the sphere at
	which a collapsing star crosses its Schwarzschild radius
	\cite{FredenhagenHaag1990}.  Their argument has two conceptually distinct
	parts.  First, a classical solution obtained by propagating the detector
	smearing function backwards decomposes into an asymptotic packet and a packet
	which is exponentially compressed towards the horizon.  Second, the
	universal leading singularity of every Hadamard two-point function converts
	that compression into a Planck factor.  The first part is global and uses
	stationary Schwarzschild scattering; the second is local and
	state-independent at leading order.
	
	In a Vaidya space-time the Vaidya mass function $m(w)$ depends on the null
	coordinate $w$.  It equals the Misner--Sharp mass and, in the outgoing
	asymptotically flat orientation, the Bondi mass.  When the completed
	space-time contains a black hole, $m(w)$ is the evolving mass parameter of
	the black-hole geometry.  The Schwarzschild time translation is therefore no
	longer an isometry.  Consequently, separation of variables
	does not reduce the radial field equation to a time-independent one-dimensional
	scattering problem, frequency is not conserved, and a greybody coefficient
	$D_\ell(\omega)$ must in general be replaced by a frequency-mixing operator.
	It is therefore not enough to replace the Schwarzschild mass by a function in
	the final formulas of \cite{FredenhagenHaag1990}.
	
	The local component of the desired generalisation is supplied by an existing
	horizon scaling theorem.
	For spherically symmetric space-times with an outer trapping horizon,
	Kurpicz, Pinamonti and Verch proved that the horizon scaling limit of a
	Hadamard two-point function is universal and thermal with respect to the
	projected Kodama flow, with inverse temperature $2\pi/\kappa$
	\cite{KurpiczPinamontiVerch2021}.  Related characteristic constructions of
	Hadamard states and local observables have been developed in
	\cite{JanssenVerch2023}.  These results do not, by themselves, identify the
	response of a detector in the asymptotic region.  That identification requires
	controlled propagation from the detector to the horizon.
	
	The purpose of this paper is to obtain a quantitative propagation result from
	scattering and energy estimates and to combine it with the local horizon
	scaling theorem.  The conclusion is expressed by the bound
	\begin{equation}
		\max\{0,\mathcal F_{\mathrm{fr}}-\mathcal E\}
		\leq \mathcal F_g
		\leq \mathcal F_{\mathrm{fr}}+\mathcal E,
		\label{eq:intro-bound}
	\end{equation}
	where $\mathcal F_g$ is the response in the dynamical geometry,
	$\mathcal F_{\mathrm{fr}}$ is an exactly defined frozen reference response,
	and $\mathcal E$ is an explicit positive error functional.  Section
	\ref{sec:reduction} proves the abstract positivity inequality,
	Section~\ref{sec:target-propagation} proves the Vaidya PDE comparison, and
	Theorem~\ref{thm:certified-response} records precisely the additional bridge
	needed to combine it with local horizon scaling.
	
	Throughout, the metric has signature $(-,+,+,+)$ and
	$G=c=\hbar=k_{\mathrm B}=1$.
	
	\section{The Fredenhagen--Haag mechanism}
	\label{sec:FH}
	
	We briefly separate the ingredients of the Fredenhagen--Haag argument that
	survive in a dynamical geometry from those that depend on stationarity.
	Assume that $(\M,g)$ is globally hyperbolic, and let $\Phi$ be a neutral
	scalar quantum field satisfying, in the distributional sense, the massless
	covariant wave equation
	\begin{equation}
		\Box_g\Phi=0,
		\qquad
		\Box_g
		=
		|g|^{-1/2}\partial_\mu
		\left(
		|g|^{1/2}g^{\mu\nu}\partial_\nu
		\right),
		\qquad
		|g|=-\det(g_{\mu\nu}).
		\label{eq:massless-wave-equation}
	\end{equation}
	For $h\in\Cinf(\M)$ supported in a distant observation region, define the
	smeared field
	\begin{equation}
		Q=\Phi(h)
		=\int_{\M}\Phi(x)h(x)\,\dd\mathrm{vol}_g(x).
		\label{eq:smeared-detector}
	\end{equation}
	
	Global hyperbolicity guarantees the existence and uniqueness of the retarded
	and advanced Green operators $E_{\mathrm{ret}}$ and $E_{\mathrm{adv}}$ of
	$\Box_g$.  Define the causal propagator and its associated classical solution
	by
	\begin{equation}
		E:=E_{\mathrm{ret}}-E_{\mathrm{adv}},
		\qquad
		f:=Eh.
		\label{eq:causal-solution}
	\end{equation}
	Then $f\in C^\infty_{\mathrm{sc}}(\M)$ is smooth, spacelike compact, and
	satisfies
	\[
	\Box_g f=0.
	\]
	The corresponding symplectic current is
	\begin{equation}
		J_a:=\Phi\nabla_a f-f\nabla_a\Phi.
		\label{eq:symplectic-current}
	\end{equation}
	Using the field equations, its divergence vanishes:
	\begin{equation}
		\nabla^aJ_a
		=
		\Phi\Box_gf-f\Box_g\Phi
		=0.
		\label{eq:symplectic-current-conservation}
	\end{equation}
	Stokes's theorem therefore gives
	\begin{equation}
		\int_{\Sigma_1}J_a n_1^a\,\dd\Sigma_1
		=
		\int_{\Sigma_2}J_a n_2^a\,\dd\Sigma_2
		\label{eq:symplectic-conservation}
	\end{equation}
	for any two Cauchy surfaces $\Sigma_1$ and $\Sigma_2$.
	
	Conservation alone shows only that the surface integral is independent of the
	chosen Cauchy surface.  Its identification with the smeared field follows
	from the particular choice $f=Eh$.  Indeed, Green's identity gives
	\begin{equation}
		Q=\Phi(h)
		=
		\int_\Sigma
		\left(
		\Phi\nabla_a f-f\nabla_a\Phi
		\right)n^a\,\dd\Sigma
		\label{eq:symplectic-representation}
	\end{equation}
	for every Cauchy surface $\Sigma$, with the orientations of $\Sigma$ and the
	sign convention in \eqref{eq:causal-solution} chosen consistently.  The
	contraction
	\[
	n^a\nabla_a
	\]
	is the derivative normal to $\Sigma$ and hence contains the appropriate time
	derivative.  This construction requires a well-posed hyperbolic problem, but
	does not require a Killing vector field.
	
	Figure~\ref{fig:FH-collapse-detector} depicts the causal content of this
	time-slice step.  The detector remains in the exterior, while its associated
	classical solution is represented by data on an earlier Cauchy surface.
	
	\begin{figure}[htbp]
		\centering
			\begin{tikzpicture}[x=1.05cm,y=0.73cm,
				boundary/.style={black!65,line width=0.9pt},
				backray/.style={-{Latex[length=2.2mm]},midblue,line width=1pt,dashed}]
				\coordinate (iminus) at (0,-1.25);
				\coordinate (izero) at (4.15,3.05);
				\coordinate (iplus) at (2.6,5.9);
				\coordinate (sing0) at (0,5.9);
				\coordinate (form) at (0,2.25);
				\fill[softgray] (iminus)--(izero)--(iplus)--(sing0)--cycle;
				\fill[burgundy!7] (form)--(iplus)--(sing0)--cycle;
				\draw[boundary] (iminus)--(izero) node[pos=.62,below right] {$\mathscr I^-$};
				\draw[boundary] (izero)--(iplus) node[pos=.76,above right] {$\mathscr I^+$};
				\draw[boundary] (iminus)--(sing0);
				\draw[decorate,decoration={zigzag,segment length=4pt,amplitude=1.4pt},
				black!75,line width=1.1pt] (sing0)--(iplus)
				node[pos=.47,above left=1pt] {spacelike singularity};
				\draw[burgundy,line width=2.7pt] (form)--(iplus)
				node[pos=.52,below right=-5pt,text=burgundy] {$\mathcal H^+$};
				\node[left] at (iminus) {$i^-$};
				\node[right] at (izero) {$i^0$};
				\node[above] at (iplus) {$i^+$};
				\node[left,align=right] at (0,0.85) {$r=0$\\regular centre};
				\fill[burgundy] (form) circle (1.6pt);
				\node[left,text=burgundy] at (form) {horizon forms};
				
				\draw[deepblue,line width=1.3pt]
				(0.05,2.78) .. controls (1.38,2.62) and (3.10,3.06) .. (3.96,3.54);
				\node[below,text=deepblue,fill=softgray,inner sep=1pt] at (2.10,2.82) {$\Sigma$};
				
				\coordinate (det) at (2.70,4.18);
				\filldraw[fill=white,draw=midblue,line width=1.2pt,rounded corners=3pt]
				($(det)+(-.72,-.42)$) rectangle ($(det)+(.72,.42)$);
				\node[align=center,text=deepblue,font=\scriptsize] at (det)
				{$\mathcal O_T:\ \Phi(h_T)$\\$f^T=Eh_T$};
				\coordinate (nearhor) at (0.58,2.72);
				\coordinate (outdata) at (3.42,3.30);
				\draw[backray] ($(det)+(-.34,-.40)$) -- (nearhor);
				\draw[backray] ($(det)+(.34,-.40)$) -- (outdata);
				\node[circle,fill=midblue,text=white,font=\bfseries\scriptsize,
				inner sep=1.4pt] at (2.05,4.54) {1};
				\node[circle,fill=burgundy,text=white,font=\bfseries\scriptsize,
				inner sep=1.4pt] at (nearhor) {2};
				
				\node[align=left,text width=4.90cm,anchor=north west,font=\small]
				at (4.92,5.10)
				{\textcolor{midblue}{\bfseries 1. Late detector}\\[-1pt]
					$\operatorname{supp}h_T$ lies in the distant exterior.  On $\Sigma$, the same observable is represented by $f^T=Eh_T$.};
				\node[align=left,text width=4.65cm,anchor=north west,font=\small]
				at (4.92,2.42)
				{\textcolor{burgundy}{\bfseries 2. Near-horizon image}\\[-1pt]
					Backward propagation compresses part of $f^T$ near $\Sigma\cap\mathcal H^+$.};
			\end{tikzpicture}
		\caption{One-sided collapse diagram for the Fredenhagen--Haag time-slice
			construction.}
		\label{fig:FH-collapse-detector}
	\end{figure}
	
	In the stationary Schwarzschild exterior considered by Fredenhagen and Haag,
	let $h_T$ be the Schwarzschild-time translate of $h$ and put
	\[
	Q_T=\Phi(h_T),
	\qquad
	f^T=Eh_T.
	\]
	Their scattering analysis shows that the restriction of $f^T$ to the early
	surface $\tau=0$ admits the asymptotic decomposition
	\begin{equation}
		f^T=f_+^T+f_-^T+\Delta^T,
		\qquad
		\Delta^T\longrightarrow0
		\quad\text{as }T\longrightarrow+\infty,
		\label{eq:FH-decomposition}
	\end{equation}
	in the topology specified by their propagation estimates.  The packet
	$f_+^T$ moves towards spatial infinity, whereas $f_-^T$ is exponentially
	compressed towards the horizon.  More precisely, if
	\[
	r_0=2M,
	\qquad
	\xi=\frac{r-r_0}{r_0},
	\qquad
	\kappa=\frac{1}{2r_0}=\frac{1}{4M},
	\]
	then Fredenhagen and Haag give the following leading near-horizon expression,
	after suppressing factors which tend to unity as $r\to r_0$:
	\begin{equation}
		f_-^T(\tau,r)
		=
		\psi\left(
		\frac{\xi}{\lambda_T}e^{-\kappa\tau}
		\right),
		\qquad
		\lambda_T=e^{-\kappa T}.
		\label{eq:FH-dilation}
	\end{equation}
	In particular, on $\tau=0$,
	\[
	f_-^T(0,r)
	=
	\psi\left(\frac{\xi}{\lambda_T}\right).
	\]
	Thus the part displayed explicitly by Fredenhagen and Haag shows that a late
	Schwarzschild-time translation corresponds, near the horizon, to a dilation
	towards $\xi=0$, with scaling parameter
	$\lambda_T=e^{-\kappa T}$.
	
	The three terms in \eqref{eq:FH-decomposition} are represented schematically
	in Figure~\ref{fig:FH-three-channels}.  The wave shapes indicate localisation
	only; the decay of $\Delta^T$ is the content of the scattering estimate, not
	an assumption inferred from the drawing.
	
	\begin{figure}[htbp]
		\centering
		\begin{tikzpicture}[x=1cm,y=0.80cm,
			flow/.style={-{Latex[length=2.3mm]},line width=1.1pt}]
			\fill[softgray] (0.7,0.55) rectangle (13.1,5.8);
			\draw[deepblue,line width=1.2pt] (0.75,0.95)--(11.15,0.95)
			node[right,text=deepblue] {early $\Sigma$};
			\draw[burgundy,line width=2.4pt] (1.05,0.58)--(1.05,4.95)
			node[above,text=burgundy] {$\mathcal H^+$};
			\draw[black!55,-{Latex[length=2mm]}] (1.32,0.72)--(10.85,0.72)
			node[right] {$r_*$};
			
			\draw[black,line width=1.7pt,domain=0:2.2,samples=100,variable=\x]
			plot ({6.05+\x},{4.28+0.33*exp(-2.4*(\x-1.1)^2)*sin(720*\x)});
			\node[above,text=black] at (7.15,4.62) {late packet $f^T=Eh_T$};
			\draw[flow,burgundy] (6.65,3.92) .. controls (5.8,3.05) and (3.0,2.15) .. (1.75,1.42);
			\draw[flow,midblue] (7.58,3.92) .. controls (8.1,3.00) and (8.7,2.05) .. (9.05,1.42);
			\draw[flow,black!38,dashed] (7.12,3.90) -- (6.12,1.48);
			\draw[burgundy,line width=1.55pt,domain=0:1.35,samples=150,variable=\x]
			plot ({1.22+\x},{1.25+0.27*exp(-3.1*(\x-.52)^2)*sin(1500*\x)});
			\node[align=center,text=burgundy, rounded corners=2pt,inner sep=2pt]
			at (2.66,3.0) {\hspace*{-36pt}$f_-^T$ \\ compressed at $\mathcal H^+$};
			\draw[midblue,line width=1.55pt,domain=0:2.0,samples=110,variable=\x]
			plot ({8.18+\x},{1.25+0.27*exp(-2.5*(\x-1)^2)*sin(650*\x)});
			\node[align=center,text=deepblue, rounded corners=2pt,inner sep=2pt]
			at (9.20,2.40) {\hspace*{-4pt} $f_+^T$\\ \hspace*{85pt} towards spatial infinity};
			\foreach \k/\opa in {0/0.48,1/0.31,2/0.16}{
				\draw[black!55,opacity=\opa,line width=1.15pt,decorate,
				decoration={snake,amplitude=1.5pt,segment length=7pt}]
				(5.50+0.35*\k,1.20+0.18*\k)--(6.55-0.18*\k,1.20+0.18*\k);
			}
			\node[align=center,text=black!55, rounded corners=2pt,inner sep=2pt]
			at (6.05,1.90) {$\Delta^T$\\ \hspace*{-57pt}$\|\Delta^T\|\to0$};
		\end{tikzpicture}
		\caption{Schematic backward decomposition of the late detector packet into
			the horizon channel, the outgoing channel and the vanishing stationary
			remainder.}
		\label{fig:FH-three-channels}
	\end{figure}
	
	After transferring the normal derivatives to the two-point function and taking the radial distributional scaling limit, the
	leading Hadamard singularity becomes supported on the angular diagonal
	$\Omega_1=\Omega_2$.  After integration over the angular variables, or
	equivalently after projection onto spherical harmonics, its remaining radial
	part is proportional to
	\begin{equation}
		K(\xi_1,\xi_2)
		:=
		\lim_{\varepsilon\downarrow0}
		\frac{1}{
			(\xi_1-\xi_2+i\varepsilon)^2
		},
		\qquad
		\text{with the limit taken in }\D'(\mathbb R^2).\footnote{This expression is the precise distributional meaning sometimes abbreviated by $(\xi_1-\xi_2+i0)^{-2}$.}
		\label{eq:horizon-kernel}
	\end{equation}
	
	On the exterior side of the horizon, where $\xi>0$, introduce the logarithmic coordinate
	\[
	\xi=e^{\kappa y},
	\]
	and define
	\[
	\phi(y):=\psi(e^{\kappa y}).
	\]
	
	Since
	\[
	\dd\xi=\kappa e^{\kappa y}\,\dd y
	\]
	and
	\[
	e^{\kappa y_1}-e^{\kappa y_2}
	=
	2e^{\kappa(y_1+y_2)/2}
	\sinh\left(\frac{\kappa}{2}(y_1-y_2)\right),
	\]
	the quadratic form determined by \eqref{eq:horizon-kernel} becomes,
	up to the overall normalisation inherited from the Hadamard kernel,
	\begin{equation}
		\frac{\kappa^2}{4}
		\int_{\mathbb R^2}
		S_\kappa(y_1-y_2)\,
		\overline{\phi(y_1)}\phi(y_2)\,
		\dd y_1\dd y_2,
		\label{eq:logarithmic-horizon-form}
	\end{equation}
	where
	\begin{equation}
		S_\kappa(s)
		:=
		\lim_{\varepsilon\downarrow0}
		\left[
		\sinh\left(\frac{\kappa s}{2}+i\varepsilon\right)
		\right]^{-2}
		\quad\text{in }\D'(\mathbb R).
		\label{eq:sinh-kernel}
	\end{equation}
	The kernel $S_\kappa$ depends only on $y_1-y_2$ and is therefore
	translation invariant in the logarithmic horizon coordinate.
	
	With the chosen Fourier-transform convention, the Fourier transform of
	$S_\kappa$ is proportional to the nonsymmetrised bosonic KMS two-point
	spectral density
	\begin{equation}
		\frac{\omega}{1-e^{-\beta\omega}},
		\qquad
		\beta=\frac{2\pi}{\kappa}=8\pi M.
		\label{eq:KMS-factor}
	\end{equation}
	The proportionality constant depends on the normalisation and
	Fourier-transform convention.
	Stationary radial scattering then multiplies the transmitted part of the
	response by the greybody factor $|D_\ell(\omega)|^2$.
	
	The dynamical problem is therefore not to derive
	\eqref{eq:horizon-kernel} again: that distribution is fixed by the universal
	leading Hadamard singularity.  The genuinely nonstationary problem is to
	replace the Schwarzschild decomposition
	\eqref{eq:FH-decomposition} and the exponential relation
	\eqref{eq:FH-dilation} by corresponding statements, with controlled
	remainders, for the nonautonomous Vaidya wave equation.
	
	\section{Vaidya geometry and causal hypotheses}
	\label{sec:geometry}
	
	\subsection{Advanced and retarded forms}
	
	It is useful to write both Vaidya forms as
	\begin{equation}
		g_{\epsilon}
		=-C(w,r)\,\dd w^2+2\epsilon\,\dd w\,\dd r+r^2\dd\Omega^2,
		\qquad
		C(w,r)=1-\frac{2m(w)}{r},
		\qquad \epsilon\in\{+1,-1\}.
		\label{eq:unified-Vaidya}
	\end{equation}
	For $\epsilon=+1$, $w=v$ is an advanced coordinate and
	\begin{equation}
		g_+=-\left(1-\frac{2m(v)}r\right)\dd v^2
		+2\,\dd v\,\dd r+r^2\dd\Omega^2.
		\label{eq:ingoing-Vaidya}
	\end{equation}
	For $\epsilon=-1$, $w=u$ is a retarded coordinate and
	\begin{equation}
		g_-=-\left(1-\frac{2m(u)}r\right)\dd u^2
		-2\,\dd u\,\dd r+r^2\dd\Omega^2.
		\label{eq:outgoing-Vaidya}
	\end{equation}
	The Einstein tensor corresponds to a null fluid with
	\begin{equation}
		T_{ww}=\frac{\epsilon m'(w)}{4\pi r^2}.
		\label{eq:Vaidya-stress}
	\end{equation}
	Thus positive-energy ingoing null dust requires \(m'(v)\geq 0\), whereas positive-energy outgoing null dust requires \(m'(u)\leq 0\). Conversely, a decreasing mass in the advanced metric, \(m'(v) < 0 \), represents an effective negative-energy flux in the ingoing null channel and is often used to model the near-horizon component of semiclassical evaporation. This flux is not locally identical to the positive-energy outgoing flux measured near future null infinity: the two occupy different null channels and correspond respectively to \( T_{vv} < 0 \) and \( T_{uu} > 0 \). A single Vaidya patch represents only one of these pure null components. A complete evaporation model therefore generally requires matched ingoing and outgoing patches or a more general double-null geometry, as in the classical constructions initiated in \cite{Hiscock1981,Hiscock1981II}.
	
	\subsection{A temporal function and stable causality}
	
	Recall that a space-time is \emph{causal} if it contains no closed causal
	curve.  It is \emph{stably causal} if its light cones can be widened slightly
	without producing a closed causal curve.  More precisely, $(\M,g)$ is stably
	causal if there exists a Lorentzian metric $\widetilde g$ whose causal cones
	are strictly wider than those of $g$ and for which $(\M,\widetilde g)$ is
	causal.
	
	A smooth function $\tau:\M\to\mathbb R$ is called a \emph{temporal function}
	if its gradient is everywhere timelike:
	\[
	g(\nabla\tau,\nabla\tau)
	=
	g^{-1}(\dd\tau,\dd\tau)<0.
	\]
	After replacing $\tau$ by $-\tau$ if necessary, its gradient may be taken to
	be past-directed.  In that convention, $\tau$ is strictly increasing along
	every future-directed causal curve.  The existence of a temporal function
	implies stable causality.
	
	The Vaidya metrics considered here possess a particularly simple temporal
	function.
	
	\begin{proposition}
		Let $m$ be smooth and nonnegative on the coordinate interval under
		consideration, and restrict to $r>0$.  For the metric
		\eqref{eq:unified-Vaidya}, the function
		\begin{equation}
			\tau_\epsilon=w-\epsilon r
			\label{eq:temporal-function}
		\end{equation}
		has an everywhere timelike gradient.  Consequently, every Vaidya coordinate
		region satisfying these assumptions is stably causal.
	\end{proposition}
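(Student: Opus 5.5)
The plan is a direct computation of $g^{-1}(\dd\tau_\epsilon,\dd\tau_\epsilon)$ in the coordinates $(w,r,\theta,\varphi)$ of \eqref{eq:unified-Vaidya}, followed by the standard implication, recalled above, that a temporal function gives stable causality.

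\textbf{Inverse metric.} The metric \eqref{eq:unified-Vaidya} is block diagonal, with a $(w,r)$ block and the round angular block $r^2\dd\Omega^2$. It therefore suffices to invert the $2\times2$ matrix
\[
\begin{pmatrix} -C & \epsilon\\ \epsilon & 0\end{pmatrix}.
\]
Its determinant is $-\epsilon^2=-1$, so the block is invertible for every $(w,r)$ with $r>0$, independently of the sign of $C$. This includes the trapped region $r<2m(w)$ and the horizon $r=2m(w)$. The inverse is
\[
g^{ww}=0,\qquad g^{wr}=\epsilon,\qquad g^{rr}=C(w,r).
\]

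\textbf{Norm of the gradient.} Since $\dd\tau_\epsilon=\dd w-\epsilon\,\dd r$ has no angular components, only the $(w,r)$ block contributes:
\[
g^{-1}(\dd\tau_\epsilon,\dd\tau_\epsilon)
=g^{ww}-2\epsilon g^{wr}+\epsilon^2 g^{rr}
=-2\epsilon^2+C
=-1-\frac{2m(w)}{r}.
\]
This is at most $-1<0$ whenever $m\ge0$ and $r>0$, so the gradient is everywhere timelike. Smoothness of $\tau_\epsilon$ follows from smoothness of $m$ and of the coordinates. Only $m(w)>-r/2$ is actually needed, so the hypothesis $m\ge0$ is more than enough.

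\textbf{Orientation and stable causality.} On each connected coordinate region, the timelike vector field $\nabla\tau_\epsilon$ is nowhere vanishing, so it is either past-directed everywhere or future-directed everywhere. After replacing $\tau_\epsilon$ by $-\tau_\epsilon$ if necessary, as in the convention stated before the proposition, $\tau_\epsilon$ becomes a temporal function in that convention. It then increases strictly along future-directed causal curves, and the recalled fact that a temporal function implies stable causality gives the second claim.

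\textbf{Possible obstacle.} There is no analytic obstacle here. The one point to watch is the bookkeeping of signs. The cross term $-2\epsilon g^{wr}$ must equal $-2$ for both values of $\epsilon$, and it does, because $g^{wr}=\epsilon$ and $\epsilon^2=1$. Using $w-\epsilon r$ rather than $w+\epsilon r$ is exactly what makes this cross term negative. This is why a single formula covers both the advanced and the retarded Vaidya forms, and why the bound does not degenerate across the horizon.
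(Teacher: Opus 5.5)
Your proposal is correct and follows the paper's proof: the same inverse $(w,r)$ block $g^{ww}=0$, $g^{wr}=\epsilon$, $g^{rr}=C$, and the same computation giving $g^{-1}(\dd\tau_\epsilon,\dd\tau_\epsilon)=-1-2m(w)/r\leq-1$. The only difference is that you cite the recalled fact that a temporal function implies stable causality. The paper instead sketches why it holds: timelikeness of $\dd\tau_\epsilon$ is an open condition, so the cones can be widened slightly while $\tau_\epsilon$ stays strictly monotone along causal curves.
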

	
	\begin{proof}
		The inverse of the $(w,r)$ part of the metric
		\eqref{eq:unified-Vaidya} is
		\begin{equation}
			g^{ww}=0,
			\qquad
			g^{wr}=g^{rw}=\epsilon,
			\qquad
			g^{rr}=C(w,r).
		\end{equation}
		Since
		\[
		\dd\tau_\epsilon=\dd w-\epsilon\,\dd r,
		\]
		we obtain
		\begin{align}
			g^{-1}(\dd\tau_\epsilon,\dd\tau_\epsilon)
			&=
			g^{ww}
			-2\epsilon g^{wr}
			+\epsilon^2g^{rr}\notag\\
			&=
			-2+C(w,r)\notag\\
			&=
			-1-\frac{2m(w)}r.
			\label{eq:temporal-calculation}
		\end{align}
		Because $m(w)\geq0$ and $r>0$,
		\[
		g^{-1}(\dd\tau_\epsilon,\dd\tau_\epsilon)
		\leq-1<0.
		\]
		Thus $\dd\tau_\epsilon$, or equivalently
		$\nabla\tau_\epsilon$, is everywhere timelike, so
		$\tau_\epsilon$ is a temporal function up to an overall choice of sign.
		
		The condition that a covector be timelike is open.  The light cones of $g$
		can therefore be widened slightly while keeping
		$\dd\tau_\epsilon$ timelike.  The function $\tau_\epsilon$ remains strictly
		monotone along every causal curve of the widened metric, and hence such a
		curve cannot be closed.  This proves stable causality.
	\end{proof}
	
	This result provides an important causal property of the Vaidya coordinate
	region: in particular, it excludes closed causal curves and implies strong
	causality.  It does not, however, establish global hyperbolicity.  A temporal
	function need not be a \emph{Cauchy temporal function}.  For the latter, every
	inextendible causal curve must intersect each of its level surfaces exactly
	once.
	
	Accordingly, global hyperbolicity of the space-time region used below must be
	verified separately.  One may do this either by constructing a Cauchy
	temporal function or by proving that the region is causal and that every
	causal diamond
	\[
	J^+(p)\cap J^-(q)
	\]
	is compact.  This additional global analysis depends on the chosen
	space-time domain, including its treatment of the central singularity,
	matching surfaces and any artificial boundaries.  The calculation
	\eqref{eq:temporal-calculation} establishes stable causality throughout the
	regular Vaidya region, but it does not by itself control those global
	boundaries.
	
	\subsection{Trapping and event horizons}
	
	For the advanced metric, choose the future-directed radial null fields
	\begin{equation}
		\ell=\partial_v+\frac{C}{2}\partial_r,
		\qquad n=-\partial_r,
		\qquad g(\ell,n)=-1.
	\end{equation}
	Their spherical expansions are
	\begin{equation}
		\theta_{(\ell)}=\frac{C}{r},
		\qquad
		\theta_{(n)}=-\frac{2}{r}.
	\end{equation}
	The spherical future trapping horizon is therefore
	\begin{equation}
		\HH_{\mathrm{tr}}=\{r=2m(v)\}.
		\label{eq:trapping-horizon}
	\end{equation}
	A tangent vector $X=\partial_v+2m'(v)\partial_r$ has, on
	$\HH_{\mathrm{tr}}$,
	\begin{equation}
		g(X,X)=4m'(v).
		\label{eq:horizon-signature}
	\end{equation}
	It follows that the horizon is spacelike during positive-energy accretion,
	null in the stationary case, and timelike for an effective negative ingoing
	flux.
	
	The event horizon is instead global.  If it can be written in the advanced
	patch as $r=r_{\mathrm{EH}}(v)$, its null generators obey
	\begin{equation}
		\frac{\dd r_{\mathrm{EH}}}{\dd v}
		=\frac12\left(1-\frac{2m(v)}{r_{\mathrm{EH}}(v)}\right).
		\label{eq:event-horizon-ode}
	\end{equation}
	The terminal or asymptotic condition selecting a particular solution depends
	on the complete future geometry.  Thus $r=2m(v)$ must not be called the event
	horizon without additional argument.
	
	There is no preferred Killing field in a general dynamical spherical
	space-time.  The Kodama field is the canonical replacement which uses only
	the spherical symmetry.  Write
	\begin{equation}
		g=h_{AB}(x)\,\dd x^A\dd x^B+r(x)^2\dd\Omega^2,
		\qquad A,B\in\{0,1\},
	\end{equation}
	and let $\epsilon_{AB}$ be the volume form of the two-dimensional orbit
	metric $h$.  The Kodama vector is
	\begin{equation}
		K^A=\epsilon^{AB}\nabla_Br,
		\qquad K^a=0\quad\hbox{in the angular directions}.
		\label{eq:Kodama-definition}
	\end{equation}
	It is orthogonal to $\nabla r$ and obeys
	\begin{equation}
		g(K,K)=-g^{-1}(\dd r,\dd r)=-C.
		\label{eq:Kodama-norm}
	\end{equation}
	Thus $K$ is timelike in the untrapped exterior, null at $C=0$, and spacelike
	in the trapped region.  With a corresponding choice of orbit-space
	orientation, $K=\partial_w$ in either Vaidya chart considered separately.
	These facts,
	together with $\nabla_aK^a=0$, explain why the Kodama flow supplies a
	geometrically preferred local time even when no time-translation isometry
	exists \cite{Kodama1980}.
	
	Hayward's surface gravity is the orbit-space scalar
	\begin{equation}
		\kappa_{\mathrm K}:=\frac12\Box_h r.
		\label{eq:Hayward-definition}
	\end{equation}
	This definition is adapted to trapping horizons rather than Killing horizons
	\cite{Hayward1998}.  Sign-sensitive Kodama identities on the horizon also
	depend on the choice of orbit-space orientation; we therefore use the scalar
	definition directly.  For
	\eqref{eq:unified-Vaidya}, $|\det h|=1$ and
	\begin{equation}
		\Box_h r
		=\partial_A(h^{AB}\partial_Br)
		=\partial_r C.
	\end{equation}
	Consequently a cross-section of \eqref{eq:trapping-horizon} has
	\begin{equation}
		\kappa_{\mathrm K}(w)
		=\frac12\left.\partial_r C(w,r)\right|_{r=2m(w)}
		=\frac{1}{4m(w)}.
		\label{eq:Kodama-kappa}
	\end{equation}
	The sign is positive precisely for an outer horizon.  This definition is
	local: in a dynamical geometry it need not equal the peeling function
	controlling null rays which eventually reach $\scrI^+$.
	
	Figure~\ref{fig:Vaidya-horizons-Kodama} collects the distinctions used below.
	In particular, the trapping horizon is locally determined, the event horizon
	is selected by the complete future geometry, and the Kodama field is nonzero
	when it becomes null.
	
	\begin{figure}[htbp]
		\centering
		\begin{minipage}[c]{0.60\textwidth}
			\centering
			\begin{tikzpicture}[x=0.90cm,y=0.82cm,
				kvec/.style={-{Latex[length=2mm]},deepblue,line width=1.15pt},
				nullray/.style={-{Latex[length=1.6mm]},black!38,line width=0.8pt}]
				\fill[softgray] (0.55,0.45) rectangle (8.25,5.47);
				\fill[burgundy!8]
				(0.57,0.47)--(2.35,0.47)--(2.60,1.35)--(2.93,2.35)--
				(3.30,3.35)--(3.68,4.35)--(4.08,5.43)--(0.57,5.43)--cycle;
				\draw[->,black!65] (0.60,0.50)--(8.10,0.50) node[right] {$r$};
				\draw[->,black!65] (0.60,0.50)--(0.60,5.30) node[above] {$v$};
				\node[text=burgundy!85!black,font=\small] at (1.55,4.65) {trapped};
				\node[text=black!55,font=\small] at (6.72,5.28) {untrapped exterior};
				
				\draw[burgundy,line width=2.7pt]
				plot[smooth] coordinates {(2.35,0.50) (2.60,1.35) (2.93,2.35)
					(3.30,3.35) (3.68,4.35) (4.08,5.42)};
				\draw[orange!85!black,line width=2.3pt,dashed]
				plot[smooth] coordinates {(3.18,0.50) (3.38,1.35) (3.61,2.35)
					(3.87,3.35) (4.15,4.35) (4.42,5.42)};
				
				\draw[nullray] (4.55,0.78) .. controls (4.73,1.90) and (5.12,3.33) .. (5.83,5.10);
				\draw[nullray] (6.20,0.78) .. controls (6.42,1.92) and (6.88,3.38) .. (7.63,5.10);
				
				\foreach \x in {4.75,5.75,6.75,7.75}{
					\draw[kvec] (\x,1.05)--(\x,1.78);
					\draw[kvec] (\x,2.30)--(\x,3.03);
				}
				\draw[kvec,line width=1.7pt] (3.23,3.18)--(3.23,3.62);
				\filldraw[fill=white,draw=deepblue,line width=.9pt] (3.23,3.18) circle (2pt);
			\end{tikzpicture}
		\end{minipage}\hfill
		\begin{minipage}[c]{0.36\textwidth}
			\raggedright
			\small
			Advanced Vaidya, $m'(v)>0$.
			
			\medskip
			\tikz[baseline=-.5ex]{\draw[burgundy,line width=2.5pt] (0,0)--(.65,0);}
			\quad $\mathcal H_{\mathrm{tr}}$: $r=2m(v)$, local.
			
			\medskip
			\tikz[baseline=-.5ex]{\draw[orange!85!black,line width=2.2pt,dashed] (0,0)--(.65,0);}
			\quad $\mathcal H^+$: global and future-dependent.
			
			\medskip
			\tikz[baseline=-.5ex]{\draw[-{Latex[length=1.5mm]},black!42,line width=.8pt] (0,0)--(.65,.25);}
			\quad outgoing null curves.
			
			\medskip
			\tikz[baseline=-.5ex]{\draw[-{Latex[length=1.8mm]},deepblue,line width=1.2pt] (.32,-.15)--(.32,.35);}
			\quad Kodama flow.
			
			\smallskip
			$K^a=(\partial_v)^a\neq0$ and $g(K,K)=-C$.  At the white base point
			$C=0$, so the short arrow gives the future-directed null tangent vector.
		\end{minipage}
		\caption{Trapping horizon, event horizon and Kodama flow in an accreting
			advanced-Vaidya patch.  The separation of the two horizon curves is
			geometric.}
		\label{fig:Vaidya-horizons-Kodama}
	\end{figure}
	
	\subsection{Admissible mass functions}
	
	We distinguish the following regimes.  The notation below classifies the mass
	profile only.  Proposition~\ref{prop:local-global-hyperbolicity} treats finite
	regular causal windows, whereas global causal conditions on an entire
	exterior are imposed separately in Assumption~\ref{ass:causal-domain}.
	
	\begin{definition}[Accreting class $\mathfrak V_{\mathrm A}$]
		The advanced mass function is smooth, positive after the onset of collapse,
		nondecreasing, and approaches a positive limiting mass:
		\begin{equation}
			m'(v)\geq0,
			\qquad
			\lim_{v\to+\infty}m(v)=M_+>0.
		\end{equation}
		If an early flat region is included, $m(v)=0$ there and the matching at the
		onset is assumed smooth to the differentiability order used in the field
		equation.
	\end{definition}
	
	\begin{definition}[Finite-slab evaporating class $\mathfrak V_{\mathrm E}$]
		On a prescribed compact interval $I=[w_-,w_+]$, the mass is smooth,
		nonincreasing and bounded away from zero:
		\begin{equation}
			m'(w)\leq0,
			\qquad
			0<m_I:=\inf_{w\in I}m(w)\leq \sup_{w\in I}m(w)<\infty.
		\end{equation}
		The advanced realisation describes negative ingoing energy near a future
		horizon; the retarded realisation describes positive outgoing null energy in
		an exterior radiation region.
	\end{definition}
	
	\begin{definition}[Asymptotically evaporating class
		$\mathfrak V_{\mathrm E_\infty}$]
		The mass function is smooth and satisfies
		\begin{equation}
			m(w)>0 \quad\text{for every finite }w,
			\qquad m'(w)\leq0,
			\qquad \lim_{w\to+\infty}m(w)=0.
			\label{eq:asymptotic-evaporation}
		\end{equation}
		For each compact $I\Subset\mathbb R$, the restriction belongs to
		$\mathfrak V_{\mathrm E}$.
	\end{definition}
	
	\begin{definition}[Schwarzschild--Vaidya--Schwarzschild class
		$\mathfrak V_{\mathrm S}$]
		\label{def:sandwich}
		A mass profile belongs to $\mathfrak V_{\mathrm S}$ when there are finite
		$w_-<w_+$ and constants $M_-,M_+>0$ such that
		\begin{equation}
			m(w)=M_-\quad(w\leq w_-),
			\qquad
			m(w)=M_+\quad(w\geq w_+),
			\qquad
			\inf_w m(w)>0.
			\label{eq:sandwich-profile}
		\end{equation}
		The intermediate Vaidya region may be increasing or decreasing.  A member of
		$\mathfrak V_{\mathrm S}$ is a complete scattering background only after the
		exterior patches and their time orientation have been specified.  The
		outgoing decreasing model treated in \cite{Coudray2024} is a white-hole
		exterior; its time reverse is an ingoing accreting black-hole exterior.  A
		single outgoing decreasing Vaidya patch must not, merely by changing its
		interpretation, be called an evaporating black hole.
	\end{definition}
	
	\begin{definition}[Evaporation--accretion turnaround class
		$\mathfrak V_{\mathrm T}$]
		\label{def:turnaround}
		A smooth positive mass profile belongs to $\mathfrak V_{\mathrm T}$ if there
		is a finite turnaround time $w_b$ such that
		\begin{equation}
			m'(w)\leq0\quad(w\leq w_b),\qquad
			m'(w_b)=0,\qquad
			m'(w)\geq0\quad(w\geq w_b),
			\label{eq:turnaround-signs}
		\end{equation}
		and $m_b:=m(w_b)>0$.  The asymptotically stationary subclass additionally
		satisfies $m(w)\to M_+\in(m_b,\infty)$ as $w\to+\infty$.
	\end{definition}
	
	The point $w_b$ is a local mass minimum, or turnaround, and need not be an
	inflection point (which would mean $m''(w_b)=0$).  In the advanced metric,
	the first phase of \eqref{eq:turnaround-signs} is an effective negative
	ingoing flux and the second is positive ingoing null dust.  This is a smooth
	ordinary Vaidya metric, although the null energy condition fails during the
	first phase.  A model containing both the compensating negative horizon flux
	and positive outgoing Hawking radiation, followed by independent incoming
	dust, requires matched ingoing/outgoing regions, as in
	\cite{Hiscock1981II}, or a more general metric with two independent
	null-flux components; it is then Vaidya-like rather than a single Vaidya
	solution.  Chirenti and Saa give a useful double-null formulation of the
	general one-stream Vaidya metric \cite{ChirentiSaa2012}, but that coordinate
	formulation must not itself be confused with a two-stream stress tensor.
	
	For $\kappa=1/(4m)$ define the dimensionless variation parameters
	\begin{equation}
		\delta_j(I)
		:=\sup_{w\in I}
		\frac{|\partial_w^j\kappa(w)|}{\kappa(w)^{j+1}},
		\qquad 1\leq j\leq N.
		\label{eq:adiabatic-hierarchy}
	\end{equation}
	In particular,
	\begin{equation}
		\delta_1(I)=4\sup_{w\in I}|m'(w)|.
		\label{eq:first-adiabatic-parameter}
	\end{equation}
	The quantities \eqref{eq:adiabatic-hierarchy} control variation in units of
	the local horizon scale.  They do not control curvature.  Indeed, for Vaidya
	the Kretschmann scalar has the Schwarzschild value
	\begin{equation}
		R_{abcd}R^{abcd}=\frac{48m(w)^2}{r^6},
	\end{equation}
	and hence
	\begin{equation}
		\left.R_{abcd}R^{abcd}\right|_{r=2m(w)}
		=\frac{3}{4m(w)^4}.
		\label{eq:horizon-curvature}
	\end{equation}
	Thus a profile in $\mathfrak V_{\mathrm E_\infty}$ may become increasingly
	adiabatic while its horizon curvature diverges as $w\to+\infty$.
	
	\subsection{Finite globally hyperbolic developments and global exteriors}
	
	We first distinguish two causal questions which are easily conflated.  If an
	asymptotically flat completion has past and future null infinities, its domain
	of outer communications is
	\begin{equation}
		\D_{\mathrm{oc}}
		:=I^-(\scrI^+)\cap I^+(\scrI^-),
		\label{eq:DOC-definition}
	\end{equation}
	where the chronological sets are evaluated in the conformal completion and
	then intersected with the physical space-time.  The function $m(w)$ does not
	by itself determine \eqref{eq:DOC-definition}: one must also prescribe the
	regular centre or singular boundary, the collapse and evaporation matchings,
	the asymptotic ends and the event horizon.  It is therefore impossible to
	deduce global hyperbolicity of an entire domain of outer communications from
	the sign of $m'(w)$ alone.  By contrast, the compact detector--horizon windows
	used in the finite-time propagation argument can be placed in globally
	hyperbolic domains of dependence under a directly checkable finite-hull
	condition.
	
	Let $O$ be a regular Vaidya coordinate region and let $S\subset O$ be an
	achronal set.  We write $D_O^+(S)$ for the points through which every
	past-inextendible causal curve in $O$ meets $S$, define $D_O^-(S)$ with the
	time orientation reversed, and put
	\begin{equation}
		D_O(S):=D_O^+(S)\cup D_O^-(S).
		\label{eq:domain-of-dependence}
	\end{equation}
	
	\begin{proposition}[Globally hyperbolic development of a finite causal window]
		\label{prop:local-global-hyperbolicity}
		Let $O\subset\{r>0\}$ be a Vaidya region, let
		$\tau_\epsilon$ be the temporal function
		\eqref{eq:temporal-function}, and let $K\Subset O$ be a compact finite causal
		window.  Choose $\tau_0<\inf_K\tau_\epsilon$ and an open subset
		$S\Subset\Sigma_0:=\{\tau_\epsilon=\tau_0\}$ with smooth boundary.  If
		\begin{equation}
			K\subset\operatorname{int}_O D_O^+(S),
			\label{eq:finite-hull-condition}
		\end{equation}
		then
		\begin{equation}
			\D_K:=\operatorname{int}_O D_O(S)
			\label{eq:finite-development}
		\end{equation}
		is an open, causally convex and globally hyperbolic subspace-time containing
		$K$.  The hypersurface $S\cap\D_K$ is a Cauchy hypersurface of $\D_K$.
		
		A sufficient way to verify \eqref{eq:finite-hull-condition} is the following.
		Every past-inextendible causal curve in $O$ through $K$ must meet $\Sigma_0$,
		and its footprint
		\begin{equation}
			A_K:=J_O^-(K)\cap\Sigma_0
			\label{eq:causal-footprint}
		\end{equation}
		must have compact closure contained in $S$.
	\end{proposition}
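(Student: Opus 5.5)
The plan is to reduce everything to standard results on domains of dependence of acausal hypersurfaces in a strongly causal space-time, using the temporal function from the preceding proposition.

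\emph{Step 1: properties of $S$.} Since $\dd\tau_\epsilon$ is timelike and $\tau_\epsilon$ is strictly increasing along future-directed causal curves, the level set $\Sigma_0$ is a smooth spacelike hypersurface. It is \emph{acausal} in $O$: a causal curve joining two of its points would have equal values of $\tau_\epsilon$ at its endpoints. Hence $S\subset\Sigma_0$ is an acausal, edgeless-in-itself spacelike hypersurface (open in $\Sigma_0$). The region $O$ is stably causal, hence strongly causal, and this is all we need about $O$.

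\emph{Step 2: the interior of the domain of dependence.} I would invoke the classical theorem (Geroch; see O'Neill Ch.~14 or Hawking--Ellis Prop.~6.6.3) that for an acausal topological hypersurface $S$ in a strongly causal space-time, $\operatorname{int}D_O(S)$ is globally hyperbolic. I would check the two hypotheses directly: causality is inherited from $O$, and compactness of causal diamonds $J^+(p)\cap J^-(q)$ inside $\operatorname{int}D_O(S)$ follows from the limit-curve argument, since every inextendible causal curve through a point of $\operatorname{int}D(S)$ meets $S$. Moreover, $S\cap\D_K$ is met exactly once by every inextendible causal curve of $\D_K$. The "once" part follows from acausality. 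The "at least once" part follows because such a curve, extended in $O$, meets $S$, and the meeting point lies in $\D_K$ by openness; here one uses $S\subset\operatorname{int}D_O(S)$, which holds because $S$ is open in the spacelike $\Sigma_0$. Openness of $\D_K$ is by definition.

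\emph{Step 3: causal convexity.} For $p,q\in\D_K$ and a causal curve $\gamma$ from $p$ to $q$, I need $\gamma\subset\D_K$. The argument is a case split on where a point $x\in\gamma$ lies relative to $\Sigma_0$. Suppose $x$ lies to the future of $S$ and lies between $p$ and $q$. Any past-inextendible curve from $x$ can be concatenated with a reversed piece of $\gamma$ and an inextendible curve from $q$, so it must cross $S$. The case of the past side is symmetric, and the interior statement follows by applying the argument to neighbourhoods. This is the step I expect to be the main obstacle, specifically the bookkeeping where $\gamma$ crosses $\Sigma_0$, and the passage from $D_O$ to its interior. The cleanest route is to use the characterisation $\operatorname{int}D(S)=I^+\!\!$-/$I^-$-type descriptions, namely $\operatorname{int}D^+(S)=D^+(S)\setminus(S\cup H^+(S))$ together with $S$. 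Finally, $K\subset\D_K$ is immediate from \eqref{eq:finite-hull-condition}.

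\emph{Step 4: the sufficient condition.} Let $x\in K$ and let $\gamma$ be a past-inextendible causal curve through $x$. By hypothesis $\gamma$ meets $\Sigma_0$, at a point of $A_K\subset\overline{A_K}\subset S$, so $x\in D_O^+(S)$. To obtain the interior, I would use compactness. $\overline{A_K}$ is a compact subset of the open set $S$, so there is an $\eta>0$-neighbourhood of $\overline{A_K}$ in $\Sigma_0$ still contained in $S$. By openness of the timelike condition, the footprint of $J^-$ of a small neighbourhood $U\supset K$ stays within that neighbourhood. This uses upper semicontinuity of $J^-\cap\Sigma_0$, which follows from limit-curve compactness in the strongly causal region together with the hypothesis that past causal curves reach $\Sigma_0$. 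Hence $U\subset D_O^+(S)$, and so $K\subset\operatorname{int}_O D^+_O(S)$.
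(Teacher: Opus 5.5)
Your proposal follows the paper's proof: acausality of $\Sigma_0$ from the temporal function, the standard domain-of-dependence theorem for global hyperbolicity, causal convexity and the Cauchy property, and then $A_K\subset S$ giving $K\subset D_O^+(S)$, upgraded to the interior by a compactness/limit-curve argument. One point to tighten is that the limit-curve argument must also show that past-inextendible curves from points near $K$ reach $\Sigma_0$ at all, not just where they land (the limit curve from $K$ crosses $\Sigma_0$ inside $S$, so the approximating curves eventually do too); similarly, in Step~2 the meeting point lies on the original curve because of causal convexity (your Step~3), not merely because $\D_K$ is open.
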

	
	\begin{proof}
		Since $\tau_\epsilon$ has timelike gradient, $\Sigma_0$ is spacelike and
		acausal.  The standard domain-of-dependence theorem then states that the
		interior of $D_O(S)$ is causally convex and globally hyperbolic and that the
		part of $S$ lying in this interior is a Cauchy hypersurface
		\cite[Ch.~1]{BaerGinouxPfaeffle2007}.  Hypothesis
		\eqref{eq:finite-hull-condition} puts $K$ in that interior.
		
		For the final assertion, every past-inextendible causal curve through a point
		of $K$ meets $\Sigma_0$ by hypothesis, and every such intersection belongs to
		$A_K\subset S$.  Hence $K\subset D_O^+(S)$.  Because
		$\overline{A_K}\subset S$, the intersections stay a positive distance from
		the edge of $S$.  The limit-curve theorem then gives the same property for
		all points in a neighbourhood of $K$; equivalently,
		$K\subset\operatorname{int}_O D_O^+(S)$.
	\end{proof}
	
	The causal mechanism may also be represented by extending the finite physical
	data segment $S_U$ to a Cauchy surface $\widehat\Sigma_U$ of an auxiliary
	globally hyperbolic development $\widehat{\M}_U$, as in
	Figure~\ref{fig:finite-window-completion}.  This auxiliary extension is not a
	completion of the full domain of outer communications and is not needed for
	the definition of $\D_K$.  It only makes finite propagation visible: once
	$K_U\subset D_{\widehat\M_U}(S_U)$, changing data on the dashed portions of
	$\widehat\Sigma_U$ cannot change the solution on $K_U$.
	
	\begin{figure}[htbp]
		\centering
		\begin{tikzpicture}[x=0.98cm,y=0.74cm,
			cone/.style={black!48,dashed,line width=0.9pt},
			outer/.style={draw=black!48,line width=0.95pt,rounded corners=4pt}]
			\filldraw[outer,fill=softgray] (0.40,0.55) rectangle (12.10,6.00);
			
			\fill[white] (2.00,0.82) rectangle (10.10,5.48);
			\node[anchor=north west,text=black!58,font=\small] at (0.66,5.82)
			{$\widehat{\mathcal M}_U$};
			\node[text=deepblue,font=\bfseries\small] at (6.25,5.20)
			{physical Vaidya window $\mathcal N_U$};
			\node[rotate=90,text=black!42,font=\scriptsize] at (1.20,3.10)
			{auxiliary extension};
			\node[rotate=90,text=black!42,font=\scriptsize] at (11.30,3.10)
			{auxiliary extension};
			
			\draw[burgundy,line width=2.5pt]
			(2.28,0.92) .. controls (2.42,2.20) and (2.58,3.80) .. (2.78,5.22);
			\node[rotate=82,text=burgundy,font=\scriptsize,fill=white,inner sep=1pt]
			at (2.49,3.22) {$\mathcal H_{\mathrm{tr}}$ neighbourhood};
			\draw[black!58,line width=1.2pt] (9.78,0.92)--(9.78,5.22);
			\node[rotate=90,text=black!55,font=\scriptsize,fill=white,inner sep=1pt]
			at (9.61,3.20) {outer worldtube};
			
			\filldraw[fill=midblue!18,draw=midblue,line width=1.25pt]
			(7.48,4.66) ellipse (0.76 and 0.40);
			\node[text=deepblue,font=\scriptsize] at (7.48,4.66)
			{$\operatorname{supp}h_U$};
			\coordinate (detL) at (6.88,4.38);
			\coordinate (detR) at (8.08,4.38);
			\coordinate (sL) at (3.08,1.58);
			\coordinate (sR) at (9.42,1.58);
			\fill[midblue!11] (sL)--(sR)--(detR)--(detL)--cycle;
			\draw[cone] (detL)--(sL);
			\draw[cone] (detR)--(sR);
			\node[align=center,text=deepblue,font=\small] at (6.28,3.02)
			{controlled causal region\\$K_U\subset D_{\widehat{\mathcal M}_U}(S_U)$};
			
			\draw[black!42,dashed,line width=1.25pt] (0.82,1.58)--(3.08,1.58);
			\draw[deepblue,line width=2.7pt] (sL)--(sR);
			\draw[black!42,dashed,line width=1.25pt] (9.42,1.58)--(11.68,1.58);
			\node[above,text=deepblue] at (6.25,1.64) {$S_U$};
			\node[below,text=black!48,font=\scriptsize] at (1.95,1.52) {$A_U^-$};
			\node[below,text=black!48,font=\scriptsize] at (10.55,1.52) {$A_U^+$};
			\node[text=black!58,font=\small] at (6.25,0.82) {$\widehat\Sigma_U$};
		\end{tikzpicture}
		\caption{A finite detector--horizon window embedded in an auxiliary Cauchy
			development.  The solid segment contains the physical data relevant to
			$K_U$; the dashed segments complete the Cauchy surface but do not influence
			the solution on $K_U$.}
		\label{fig:finite-window-completion}
	\end{figure}
	
	The sufficient condition is geometric rather than circular.  On a compact
	regular cylinder
	\begin{equation}
		\tau_0\leq\tau_\epsilon\leq\tau_1,
		\qquad r_{\min}\leq r\leq r_{\max},
		\qquad r_{\min}>0,
		\label{eq:regular-causal-cylinder}
	\end{equation}
	the set of causal tangent vectors normalised by
	$\dd\tau_\epsilon(X)=1$ is compact.  Relative to any auxiliary Riemannian
	metric, their spatial speeds therefore have a uniform upper bound.  If the
	causal curves from $K$ remain in such a cylinder until they reach
	$\Sigma_0$, their footprint has bounded radial displacement; compactness of
	$\mathbb S^2$ then makes $\overline{A_K}$ compact.  Enlarging its footprint
	slightly inside $\Sigma_0$ produces the required $S$.  The condition that
	the curves remain in the regular cylinder is precisely what excludes a
	central singular boundary or an artificial patch boundary from intruding
	into the finite experiment.
	
	\begin{corollary}[Application to the five mass-profile classes]
		\label{cor:mass-class-hyperbolicity}
		Let $K$ be a finite detector--horizon window whose causal curves satisfy the
		regular-cylinder condition just described.  Then $K$ has a causally convex,
		globally hyperbolic development $\D_K$ for each of the classes
		$\mathfrak V_{\mathrm A}$, $\mathfrak V_{\mathrm E}$,
		$\mathfrak V_{\mathrm E_\infty}$, $\mathfrak V_{\mathrm S}$ and
		$\mathfrak V_{\mathrm T}$, provided the chosen compact coordinate interval
		does not contain a zero-mass singular endpoint.
	\end{corollary}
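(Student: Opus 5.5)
The plan is to reduce the corollary to Proposition~\ref{prop:local-global-hyperbolicity}, using the sufficient footprint criterion stated there. The argument is class-independent once two facts are secured on the relevant compact coordinate interval. First, $\tau_\epsilon$ is a temporal function. Second, the regular-cylinder compactness argument applies.

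For the first fact, I would fix the class and the compact interval $I$ of $w$-values swept out by the causal curves from $K$ down to $\Sigma_0$. I would then check that $m$ is smooth and nonnegative on $I$, so that the earlier temporal-function proposition applies and gives $g^{-1}(\dd\tau_\epsilon,\dd\tau_\epsilon)\le -1$ on $\{r>0\}$.

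Smoothness and positivity should be read off from the definitions class by class.
\begin{itemize}
\item For $\mathfrak V_{\mathrm A}$, $m\ge0$ holds everywhere, with smooth matching at the onset.
\item For $\mathfrak V_{\mathrm E}$, $m\ge m_I>0$ on $I$.
\item For $\mathfrak V_{\mathrm E_\infty}$, compact $I\Subset\mathbb R$ lies in $\mathfrak V_{\mathrm E}$ by definition. The only possible zero-mass endpoint is $w=+\infty$, which is excluded by hypothesis.
\item For $\mathfrak V_{\mathrm S}$ and $\mathfrak V_{\mathrm T}$, $m$ is smooth and $\inf m>0$.
\end{itemize}
In every case nonnegativity is all that the temporal calculation \eqref{eq:temporal-calculation} uses.

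For the second fact, I would use the regular-cylinder hypothesis. The causal curves from $K$ stay in a set $\tau_0\le\tau_\epsilon\le\tau_1$, $r_{\min}\le r\le r_{\max}$ with $r_{\min}>0$, until they reach $\Sigma_0$. In particular, every past-inextendible causal curve through $K$ must meet $\Sigma_0$. Such a curve is confined to a compact set while $\tau_\epsilon$ decreases monotonically along it. It cannot remain imprisoned in that compact set, because strong causality follows from the temporal function, so it must exit through $\{\tau_\epsilon=\tau_0\}$.

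On this cylinder, the set of causal vectors normalised by $\dd\tau_\epsilon(X)=1$ is compact, since the metric coefficients are continuous there. Relative to an auxiliary Riemannian metric, this gives a uniform bound on spatial speed. Since $\tau_\epsilon$ changes by at most $\tau_1-\tau_0$, the radial displacement is bounded. Compactness of $\mathbb S^2$ then makes $\overline{A_K}$ compact.

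I would then choose $S\Subset\Sigma_0$ to be a smooth-boundary open neighbourhood of $\overline{A_K}$. Proposition~\ref{prop:local-global-hyperbolicity} then yields $\D_K$ with all the stated properties.

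The main obstacle I expect is the step asserting that every past-inextendible causal curve actually reaches $\Sigma_0$, rather than escaping to a boundary of the coordinate region $O$. The boundaries at issue are $r\to0$, a patch boundary, or $w\to\pm\infty$ at finite $\tau_\epsilon$. I would handle this by noting that along a causal curve, $w$ and $r$ are tied through $\tau_\epsilon=w-\epsilon r$. Bounded $r$ and bounded $\tau_\epsilon$ therefore force bounded $w$, so the curve stays in the compact coordinate region where the mass hypotheses were verified. The non-imprisonment argument then applies.
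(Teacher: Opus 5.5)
Your proposal is correct and follows essentially the same route as the paper. Both check positivity of $m$ class by class on the compact interval, use smoothness of $m$ to get bounded coefficients, apply the speed bound on the regular cylinder to obtain a compact footprint, and then invoke Proposition~\ref{prop:local-global-hyperbolicity}. The differences are ones of emphasis. The paper uses positivity chiefly to place the horizon collar at $r_{\mathrm H}=2m(w)\geq 2m_I>0$, so that $r_{\min}>0$ is compatible with a window containing the horizon, and it simply cites the absence of intervening singular or patch boundaries to ensure all relevant curves reach $\Sigma_0$. You instead use positivity for the temporal function and supply an explicit non-imprisonment argument for that step.
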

	
	\begin{proof}
		For $\mathfrak V_{\mathrm E}$, positivity on the compact interval gives
		$r_{\mathrm H}=2m(w)\geq2m_I>0$, so a compact horizon collar can be chosen
		with $r_{\min}>0$.  Every finite restriction of
		$\mathfrak V_{\mathrm E_\infty}$ belongs to this case.  For
		$\mathfrak V_{\mathrm A}$, any compact interval after the positive-mass onset
		has the same property; an earlier flat portion may also be included when its
		centre and matching are regular.  The definitions of
		$\mathfrak V_{\mathrm S}$ and $\mathfrak V_{\mathrm T}$ keep the mass
		strictly positive on every compact window under consideration, including a
		window across the turnaround.  Smoothness of $m$ makes all metric
		coefficients bounded on the compact cylinder in every case.  The speed-bound
		argument above verifies the compact-footprint part of the proposition, and
		the assumed absence of an intervening singular or patch boundary verifies
		that all relevant curves reach $\Sigma_0$.  Proposition
		\ref{prop:local-global-hyperbolicity} now supplies $\D_K$.
	\end{proof}
	
	Corollary~\ref{cor:mass-class-hyperbolicity} is the global-hyperbolicity
	statement actually needed for a single finite-window Cauchy evolution.  It
	does not turn a mass profile into a complete black-hole space-time.  The
	global geometry requires additional information in each mass class.
	
	For $\mathfrak V_{\mathrm A}$, a globally hyperbolic domain of outer
	communications requires a regular collapse matching and asymptotic
	predictability.  These properties do not follow from $m'(v)\geq0$.  For
	example, the self-similar profile $m(v)=\mu v$ admits locally outgoing null
	curves from $(v,r)=(0,0)$ when $1-16\mu\geq0$
	\cite{DwivediJoshi1989}.
	
	A member of $\mathfrak V_{\mathrm E}$ is a finite slab and consequently has
	no intrinsic future null infinity or domain of outer communications.  Its
	finite development is covered by the corollary, whereas a global statement
	depends on the extensions attached at its two ends.  For
	$\mathfrak V_{\mathrm E_\infty}$, the absence of a finite zero-mass endpoint
	removes the corresponding finite-endpoint obstruction, and every finite
	window is covered by the corollary.  Global hyperbolicity and affine
	completeness as $w\to+\infty$ require separate analysis because the curvature
	\eqref{eq:horizon-curvature} is not uniformly bounded as $m\to0$.
	
	The definition of $\mathfrak V_{\mathrm S}$ likewise does not determine the
	global causal structure.  The decreasing outgoing exterior considered in
	\cite{Coudray2024} is supplied with a global Cauchy and scattering
	construction and is Schwarzschild near both temporal ends.  That background
	and its time reverse provide the global sandwich examples used below.  An
	arbitrary interpolation satisfying \eqref{eq:sandwich-profile} does not
	inherit these properties from its mass profile alone.
	
	A $\mathfrak V_{\mathrm T}$ turnaround produces no local causal pathology
	while $m(w_b)>0$, and windows crossing $w_b$ are therefore covered by the
	corollary.  A complete black-hole exterior additionally requires a compatible
	global matching, event horizon and asymptotic end; the change of sign of
	$m'$ does not determine those structures.
	
	The field algebra for one finite experiment can therefore be constructed on
	$\D_K$.  Statements comparing an unbounded family of late-time detectors,
	or referring to scattering at a common $\scrI^+$, require one common global
	exterior.  We state that extra hypothesis explicitly.
	
	\begin{assumption}[Global causal domain]
		\label{ass:causal-domain}
		Whenever a result uses a common asymptotic state, global scattering or a
		late-time family of detector windows, there is a causally convex, globally
		hyperbolic open subspace-time $(\D,g)\subset(\M,g)$ containing:
		\begin{enumerate}[label=(\roman*)]
			\item the supports and finite causal developments of all detector smearings
			under consideration;
			\item the relevant outer-horizon neighbourhoods;
			\item an asymptotically flat end with a fixed normalisation of Bondi time.
		\end{enumerate}
		No causal curve enters $\D$ from a naked central singular boundary.  For a
		complete asymptotically flat black-hole model, $\D$ may be chosen as its
		globally hyperbolic domain of outer communications.  This assumption is a
		condition on the completed geometry, not on $m(w)$ alone.
	\end{assumption}
	
	Global hyperbolicity and geodesic completeness remain logically independent:
	a future-inextendible causal curve can terminate at a spacelike singularity
	in a globally hyperbolic space-time.  Finite-time complete evaporation is
	excluded from the main analysis for a different reason.  In the usual
	semiclassical diagram obtained by joining a decreasing Vaidya region to a
	regular post-evaporation region, causal diamonds can accumulate at the
	missing terminal point and fail to be compact.  This is not a theorem that
	every conceivable completion must fail: P\^egas et al. construct a different
	globally hyperbolic completion by modifying the final Planckian region
	\cite{PegasEtAl2025}.  Moreover, the law $m'=-\alpha/m^2$ reaches zero in
	finite time and loses both curvature and adiabatic control at its endpoint.
	
	\section{The Klein--Gordon field and Hadamard states}
	\label{sec:AQFT}
	
	Let $(\D,g)$ be either a finite development $\D_K$ supplied by
	Proposition~\ref{prop:local-global-hyperbolicity} or a common global domain
	satisfying Assumption~\ref{ass:causal-domain}, according to the result under
	consideration.  On this globally hyperbolic space-time, consider the normally
	hyperbolic operator
	\begin{equation}
		P=\Box_g-\mu^2-\xi R.
		\label{eq:KG-operator}
	\end{equation}
	Global hyperbolicity gives unique retarded and advanced Green operators
	\begin{equation}
		E^\pm:\Cinf(\D)\longrightarrow C^\infty(\D),
		\qquad
		PE^\pm=E^\pm P=\mathrm{id},
		\qquad
		\supp(E^\pm h)\subset J^\pm(\supp h).
		\label{eq:Green-operators}
	\end{equation}
	Their difference $E=E^- -E^+$ is the causal propagator.  The field algebra is
	generated by $\Phi(h)$, $h\in\Cinf(\D)$, subject to linearity, hermiticity,
	$\Phi(Ph)=0$ and
	\begin{equation}
		[\Phi(h_1),\Phi(h_2)]=iE(h_1,h_2)\mathbf 1.
	\end{equation}
	
	A quasifree state $\omega$ is fixed by its two-point distribution
	\begin{equation}
		W_2(h_1,h_2)=\omega\!\left(\Phi(h_1)\Phi(h_2)\right).
	\end{equation}
	We require $W_2$ to be Hadamard.  We recall why this is a precise local
	condition rather than an ansatz for a preferred vacuum.  A neighbourhood
	$\Ocal$ is geodesically convex when every two of its points are joined by a
	unique geodesic lying in $\Ocal$.  Synge's world function
	\begin{equation}
		\sigma(x,x')=\frac12\,\bigl(\hbox{signed squared geodesic distance}
		\bigr)
	\end{equation}
	is then smooth on $\Ocal\times\Ocal$.  Choose any smooth temporal function
	$T$ on $\Ocal$ and set
	\begin{equation}
		\sigma_\eps(x,x')
		=\sigma(x,x')+2i\eps\bigl(T(x)-T(x')\bigr)+\eps^2,
		\qquad \eps\downarrow0.
		\label{eq:iepsilon}
	\end{equation}
	Changing $T$ or the harmless details of this prescription changes the
	parametrix only by a smooth kernel.
	
	In four dimensions the local Hadamard condition says that the kernel has the
	form
	\begin{equation}
		W_2(x,x')=
		\frac{1}{8\pi^2}
		\left(
		\frac{U(x,x')}{\sigma_\eps(x,x')}
		+V(x,x')\log\frac{\sigma_\eps(x,x')}{\ell^2}
		\right)
		+H(x,x'),
		\label{eq:Hadamard-form}
	\end{equation}
	where the equality is understood as the distributional boundary value of
	\eqref{eq:iepsilon}.  The numerical coefficient assumes that $\sigma$ is one
	half the squared distance; authors who use the full squared distance write a
	different prefactor.  The displayed sign of $V$ is likewise tied to the
	conventions for $P$ and the $i\epsilon$ prescription; changing those
	conventions changes the transport coefficients without changing the
	Hadamard condition.
	
	The structure of \eqref{eq:Hadamard-form} is forced by the Klein--Gordon
	equation.  Substitution of the singular ansatz into $P_xW_2=0$ and matching
	powers of $\sigma$ gives transport equations along the unique geodesic.  The
	most singular coefficient obeys
	\begin{equation}
		2\sigma^{;a}\nabla_aU+(\Box\sigma-4)U=0,
		\qquad U(x,x)=1,
		\label{eq:U-transport}
	\end{equation}
	whose unique solution is $U=\Delta^{1/2}$, with $\Delta$ the van Vleck--Morette
	determinant.  The coefficient $V$ has a formal expansion
	$V=\sum_{j\geq0}V_j\sigma^j$; successive transport equations determine each
	$V_j$ from $P$, the metric and the preceding coefficient.  The residual
	freedom is smooth and is denoted by $H$.  Hence $U$ and $V$ are local geometric
	data, while the choice of state first appears in $H$.  In particular, the
	two-point functions of any two Hadamard states differ by a smooth bisolution.
	
	There is an equivalent coordinate-free formulation.  If $(x,k)\sim(x',k')$
	means that $x$ and $x'$ lie on one null geodesic and the covectors are related
	by parallel transport, then
	\begin{equation}
		\WF(W_2)=
		\left\{(x,k;x',-k'):(x,k)\sim(x',k'),\ k\in\overline V_+\right\}.
		\label{eq:microlocal-spectrum}
	\end{equation}
	Radzikowski's theorem proves that this microlocal spectrum condition is
	equivalent to the local representation \eqref{eq:Hadamard-form}, modulo a
	smooth kernel \cite{Radzikowski1996}.  It is this equivalence which justifies
	the phrase ``$W_2$ is Hadamard'' used below.  The condition fixes the
	ultraviolet singularity and permits local renormalisation; it does not fix
	the infrared content or the flux at infinity.
	
	Generalised Riesz distributions are the local building blocks for the
	Hadamard expansion of the Green operators.  They provide systematic control
	of the singular terms and their transport coefficients, but they do not give
	an explicit global solution of the time-dependent scattering problem.  The
	formal Riesz series need not converge; cutoffs and hyperbolic PDE estimates
	are needed to pass from a local parametrix to true local and global Green
	operators \cite{BaerGinouxPfaeffle2007}.  In the present problem their role is
	to isolate the universal singular kernel and, potentially, to calculate
	subleading curvature corrections.  They do not determine the smooth
	state-dependent term $H$.
	
	Hadamard regularity alone is sufficient for the universal local scaling limit
	but not for a unique flux at infinity.  We therefore use the quantitative
	continuity bound \eqref{eq:state-continuity} and record the second-channel
	contribution explicitly as \eqref{eq:outgoing-state-error}.  Its decay for an
	Unruh-type state depends on the detector switching and a further asymptotic
	state/scattering argument; it is not a consequence of Hadamard regularity.
	
	\section{Detector observables and backward propagation}
	\label{sec:detectors}
	
	Let $h_U\in\Cinf(\D)$ be a family of smearing functions supported near a
	distant detector at Bondi time $U$.  What Fredenhagen and Haag call a detector
	is not a new fundamental measuring apparatus added to the theory.  It is a
	local algebraic operation $Q_U$ whose occurrence is tested by the positive
	element $Q_U^*Q_U$.  For the free field take
	\begin{equation}
		Q_U=\Phi(h_U),
		\qquad
		\mathcal F_g[h_U]=\omega(Q_U^*Q_U)
		=W_2(\overline{h_U},h_U).
		\label{eq:response-functional}
	\end{equation}
	The abstract polynomial field is unbounded, so $Q_U^*Q_U$ is not literally a
	bounded POVM effect without an additional functional-analytic construction.
	What is needed here is the unambiguous nonnegative expectation
	\eqref{eq:response-functional}; a bounded detector can equivalently be made
	from Weyl operators.  The smearing $h_U$ incorporates the detector's spatial
	profile, switching and frequency filter.  A detector which is passive in the
	asymptotic vacuum is obtained by choosing that filter to have positive energy
	transfer with respect to the asymptotic time translation.
	
	This observable is closely related to the usual Unruh--DeWitt model.  For a
	two-level detector on a worldline $\gamma$, with gap $\Omega>0$ and switching
	$\chi$, second-order perturbation theory gives
	\begin{equation}
		\mathcal P_{g\to e}
		=\lambda^2|m_{eg}|^2
		\int\chi(\tau)\chi(\tau')e^{-i\Omega(\tau-\tau')}
		W_2(\gamma(\tau),\gamma(\tau'))\,\dd\tau\dd\tau'.
		\label{eq:UDW-response}
	\end{equation}
	After replacing the delta distribution on $\gamma$ by a smooth spatial
	profile, \eqref{eq:UDW-response} is precisely a quadratic smearing of $W_2$
	of the form \eqref{eq:response-functional}.  Thus the algebraic detector
	retains the two-point-function content of a conventional probe while avoiding
	irrelevant details of a particular internal detector Hamiltonian.
	
	\subsection{Why the observable may be propagated backwards}
	
	Let $E=E^--E^+$ and put $f_U=Eh_U$.  This is a smooth, spacelike-compact
	solution of $Pf_U=0$.  If $\Sigma$ is a Cauchy surface, Green's identity and
	$\Phi(Ph)=0$ give
	\begin{equation}
		\Phi(h_U)=\int_\Sigma
		\left(\Phi\nabla_af_U-f_U\nabla_a\Phi\right)n^a\,\dd\Sigma,
		\label{eq:detector-timeslice}
	\end{equation}
	up to the overall sign fixed by the convention for $E$.  To verify this,
	choose Cauchy surfaces $\Sigma_-\prec\supp h_U\prec\Sigma_+$, apply the
	divergence theorem to
	$j^a(\Phi,f)=\Phi\nabla^af-f\nabla^a\Phi$, and use
	$\nabla_aj^a=\Phi Pf-fP\Phi$.  The retarded and advanced support properties
	leave exactly the source integral $\Phi(h_U)$.  Conservation of $j^a$ in the
	source-free region then makes the right-hand side independent of $\Sigma$.
	This is the concrete form of the time-slice property: the same late detector
	observable can be represented by its classical Cauchy data at any earlier
	time.
	
	On a surface lying strictly to the past of $\supp h_U$, these data may equally
	be computed with the advanced Green solution $E^-h_U$ (with the present
	support convention).  In a stationary geometry, time translation relates all
	$f_U$ to one fixed solution.  In a dynamical geometry, $f_U$ is governed by a
	genuinely nonautonomous evolution.
	
	\subsection{Why a horizon/outgoing decomposition is expected}
	
	The decomposition used later has an elementary part and a difficult part
	which must not be conflated.  On an early Cauchy surface $\Sigma$, choose a
	smooth partition of unity
	\begin{equation}
		\chi_{\mathrm{hor}}+\chi_{\mathrm{mid}}+\chi_{\mathrm{out}}=1
		\label{eq:channel-partition}
	\end{equation}
	subordinate respectively to a horizon collar, a compact intermediate region,
	and the asymptotically flat end.  Multiplying both components of the Cauchy
	data by these cutoffs gives the exact identity
	\begin{equation}
		f_U=f_U^{\mathrm{hor}}+f_U^{\mathrm{out}}+r_U,
		\qquad r_U:=f_U^{\mathrm{mid}},
		\label{eq:dynamical-decomposition}
	\end{equation}
	at the level of Cauchy data.  Each summand has a unique homogeneous evolution,
	so the equality holds everywhere.  No approximation has yet been made.
	
	What stationary scattering proves, and what the Vaidya analysis has to replace,
	is that the middle datum becomes small as the detector is moved late, while
	the other two pieces converge after the correct rescalings to channel data.
	In Schwarzschild tortoise coordinate the two ends are
	$r^*\to-\infty$ (horizon) and $r^*\to+\infty$ (spatial/null infinity), and
	local energy decay turns \eqref{eq:channel-partition} into the familiar
	two-channel scattering decomposition
	\cite{DimockKay1987,DafermosRodnianski2013,Nicolas2016}.  For a time-dependent Vaidya
	potential, finite propagation speed still gives the exact partition, but
	smallness of $r_U$, control of cutoff commutators, and convergence of the
	channels require energy, redshift and local-decay estimates.
	Theorem~\ref{thm:target} supplies the finite-window energy and boundary-tail
	comparison.  The interior redshift-localisation part is retained separately
	as \eqref{eq:bridge-defect}; the distinction here explains why it cannot be
	deduced from channel scattering alone.
	
	After spherical harmonic decomposition, the Schwarzschild equation
	\begin{equation}
		\left(\partial_t^2-\partial_{r^*}^2+V_\ell(r)\right)f_{\ell m}=0
	\end{equation}
	is replaced by a $1+1$ dimensional hyperbolic equation with time-dependent
	coefficients.  There is no conserved Fourier frequency and no scalar
	transmission coefficient.  The dynamical greybody object is expected to be a
	kernel
	\begin{equation}
		\mathsf D_\ell(U;\omega,\omega'),
	\end{equation}
	which reduces in the stationary limit to
	\begin{equation}
		\mathsf D_\ell(U;\omega,\omega')
		\longrightarrow
		D_\ell(\omega)\,\delta(\omega-\omega').
	\end{equation}
	
	\section{The reduced Vaidya wave equation and a first energy estimate}
	\label{sec:wave-equation}
	
	The propagation problem can be written explicitly.  Let
	$Y_{\ell m}$ satisfy
	$\Delta_{\mathbb S^2}Y_{\ell m}=-\ell(\ell+1)Y_{\ell m}$ and set
	\begin{equation}
		\Phi(w,r,\nu)=\frac{1}{r}\psi_{\ell m}(w,r)Y_{\ell m}(\nu).
	\end{equation}
	
	\begin{proposition}[Characteristic radial equation]
		On every smooth Vaidya region with $r>0$, the equation $P\Phi=0$ is
		equivalent in the $(\ell,m)$ sector to
		\begin{equation}
			2\epsilon\,\partial_w\partial_r\psi_{\ell m}
			+\partial_r\!\left(C\partial_r\psi_{\ell m}\right)
			-\mathcal V_\ell(w,r)\psi_{\ell m}=0,
			\label{eq:reduced-Vaidya}
		\end{equation}
		where
		\begin{equation}
			\mathcal V_\ell(w,r)
			=\frac{\ell(\ell+1)}{r^2}+\frac{2m(w)}{r^3}+\mu^2+\xi R.
			\label{eq:Vaidya-potential}
		\end{equation}
		In the smooth null-fluid region $R=0$.
	\end{proposition}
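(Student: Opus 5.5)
The plan is to compute $\Box_g$ directly in the coordinates $(w,r,\nu)$ of \eqref{eq:unified-Vaidya}, using the divergence formula \eqref{eq:massless-wave-equation}. Then I substitute the separated ansatz $\Phi=r^{-1}\psi_{\ell m}Y_{\ell m}$, multiply by $r/Y_{\ell m}$, and identify the terms. Equivalence in the $(\ell,m)$ sector follows because $r>0$ and the $Y_{\ell m}$ are orthonormal on $\mathbb S^2$. So $P\Phi=0$ for $\Phi$ in that sector holds if and only if the scalar coefficient of $Y_{\ell m}$ vanishes. Multiplication by $r^{-1}$ is an invertible operation on smooth functions on $r>0$.

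\textbf{Step 1 (metric data).} The $(w,r)$ block $\begin{pmatrix}-C&\epsilon\\ \epsilon&0\end{pmatrix}$ has determinant $-\epsilon^2=-1$. Hence $|g|^{1/2}=r^2\sin\theta$, and the inverse is $g^{ww}=0$, $g^{wr}=\epsilon$, $g^{rr}=C$, exactly as in the proof of the temporal-function proposition. Therefore
\[
\Box_g\Phi=\frac{1}{r^2}\Bigl[\epsilon\,\partial_w\bigl(r^2\partial_r\Phi\bigr)+\epsilon\,\partial_r\bigl(r^2\partial_w\Phi\bigr)+\partial_r\bigl(r^2C\,\partial_r\Phi\bigr)\Bigr]+\frac{1}{r^2}\Delta_{\mathbb S^2}\Phi .
\]
Since $r$ and $w$ are independent coordinates, the mixed terms give $\epsilon\bigl(2\partial_w\partial_r\Phi+\tfrac{2}{r}\partial_w\Phi\bigr)$.

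\textbf{Step 2 (substitution).} Insert $\Phi=\psi/r$, so that $\partial_r\Phi=\psi_r/r-\psi/r^2$. The mixed terms then collapse to $2\epsilon\,\partial_w\partial_r\psi/r$, because the first-order $\partial_w\psi$ terms cancel. For the $C$-term, write $r^2\partial_r\Phi=r\psi_r-\psi$. Differentiating gives
\[
r^{-2}\bigl[\partial_rC\,(r\psi_r-\psi)+Cr\psi_{rr}\bigr]=r^{-1}\partial_r(C\partial_r\psi)-r^{-2}\partial_rC\,\psi .
\]
With $\partial_rC=2m(w)/r^2$, the last piece equals $r^{-1}\bigl(-2m(w)/r^3\bigr)\psi$. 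The angular term contributes $-\ell(\ell+1)\psi/r^3$. The terms $-\mu^2-\xi R$ in $P$ simply multiply $\psi/r$. Multiplying the whole expression by $r$ yields \eqref{eq:reduced-Vaidya} with potential \eqref{eq:Vaidya-potential}.

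\textbf{Step 3 ($R=0$).} The stress tensor \eqref{eq:Vaidya-stress} has only a $T_{ww}$ component. Its trace is $g^{ww}T_{ww}=0$, since $g^{ww}=0$. Tracing Einstein's equations then gives $R=-8\pi T=0$. Alternatively, one can compute $R$ directly from the metric, and the $m'(w)$ terms drop out.

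There is no real obstacle here; the argument is a bookkeeping computation. The only point requiring care is the cancellation of the first-order $\partial_w\psi$ terms in Step 2. That cancellation is what makes the factor $1/r$ in the ansatz the right choice, and it is where a sign slip with $\epsilon$ would show up. I would double-check it separately for $\epsilon=\pm1$.
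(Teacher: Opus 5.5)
Your proposal is correct and follows essentially the same route as the paper: you compute $\Box_g$ from $g^{ww}=0$, $g^{wr}=\epsilon$, $g^{rr}=C$ and $\sqrt{|g|}=r^2\sin\theta$, substitute $\Phi=r^{-1}\psi_{\ell m}Y_{\ell m}$, note the cancellation of the first-order terms, and multiply by $r$ using $C_r=2m(w)/r^2$. Your trace argument for $R=0$ (only $T_{ww}\neq0$, and $g^{ww}=0$) is a small addition; the paper simply asserts $R=0$.
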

	
	\begin{proof}
		The inverse orbit metric has
		$g^{ww}=0$, $g^{wr}=\epsilon$, $g^{rr}=C$, while
		$\sqrt{|g|}=r^2\sin\theta$.  Directly from
		$\Box\Phi=|g|^{-1/2}\partial_a(|g|^{1/2}g^{ab}\partial_b\Phi)$,
		\begin{equation}
			\Box\Phi=
			2\epsilon\Phi_{wr}+\frac{2\epsilon}{r}\Phi_w
			+C\Phi_{rr}+\left(\frac{2C}{r}+C_r\right)\Phi_r
			+\frac1{r^2}\Delta_{\mathbb S^2}\Phi.
			\label{eq:box-Vaidya}
		\end{equation}
		Insert $\Phi=r^{-1}\psi Y_{\ell m}$.  The terms proportional to
		$\psi_w/r^2$, $C\psi_r/r^2$, and $C\psi/r^3$ cancel pairwise.  Multiplication
		by $r$ leaves
		\begin{equation*}
			2\epsilon\psi_{wr}+C\psi_{rr}+C_r\psi_r
			-\left(\frac{C_r}{r}+\frac{\ell(\ell+1)}{r^2}
			+\mu^2+\xi R\right)\psi=0.
		\end{equation*}
		Since $C_r=2m(w)/r^2$, this is
		\eqref{eq:reduced-Vaidya}--\eqref{eq:Vaidya-potential}.
	\end{proof}
	
	Equation \eqref{eq:reduced-Vaidya} exhibits the precise obstruction to the
	Schwarzschild proof: the potential depends on $w$, so a temporal Fourier
	transform does not diagonalise the evolution.  The null slices $w=\mathrm
	{const.}$ are characteristic; energy estimates are instead made on the
	spacelike levels of $\tau_\epsilon$ from \eqref{eq:temporal-function}.
	
	For the minimally coupled equation, define
	\begin{equation}
		T_{ab}[\Phi]=\nabla_a\Phi\nabla_b\Phi
		-\frac12g_{ab}\left(\nabla^c\Phi\nabla_c\Phi+\mu^2\Phi^2\right).
	\end{equation}
	
	For any vector field \(X\), set
	\[
	J_X^a=T^a{}_b X^b
	\qquad\text{and}\qquad
	\pi^X_{ab}=\frac12\mathcal{L}_X g_{ab}.
	\]
	A direct differentiation yields the exact multiplier identity
	\begin{equation}
		\nabla_a J_X^a
		=T^{ab}\pi^X_{ab}+(\Box\Phi-\mu^2\Phi)X\Phi.
		\label{eq:multiplier-identity}
	\end{equation}
	Choose \(X\) to be uniformly future-timelike on a compact causal slab and integrate \eqref{eq:multiplier-identity} between two \(\tau_\epsilon\)-slices. For a solution the second term vanishes. The dominant energy condition makes the slice energy coercive, while bounded geometry supplies the estimate \(|T^{ab}\pi^X_{ab}|\leq C_X e_X[\Phi]\). With nonnegative lateral flux (or for data that remain compactly supported until that flux reaches the boundary) Gronwallâ€™s inequality then gives
	\begin{equation}
		E_X(\tau_2)
		\leq e^{C_X(\tau_2-\tau_1)}E_X(\tau_1).
		\label{eq:first-energy-estimate}
	\end{equation}
	This establishes continuous dependence and yields the basic propagator norm on every compact Vaidya slab. By itself, however, the estimate does not control the decay of the middle channel nor the exponentially squeezed bulk form of the horizon channel. The stationary boundary tail is treated below, while the bulk localisation contribution is encoded by the defect defined in \eqref{eq:bridge-defect}. When a smooth curvature coupling or mass potential is present the same compact-slab estimate continues to hold after the standard bounded lower-order potential term is added to the energy; in the Vaidya applications that follow one has \(R=0\), and the principal estimates are applied in the massless, minimally coupled sector.
	
	\begin{definition}[Compact-window energy spaces]
		\label{def:energy-spaces}
		Let \(D_{12}\) be the globally hyperbolic causal slab lying between two levels \(\Sigma_{t_1}\) and \(\Sigma_{t_2}\) of the function \(t=w-\epsilon r\). Fix a compact set \(K_1\Subset\Sigma_{t_1}\) and assume that the compact set \(J_D(K_1)\cap D_{12}\) remains inside the region \(r\geq r_{\min}>0\). For each integer \(s\geq0\) define
		\begin{equation}
			X^s(\Sigma_t)=H^{s+1}(\Sigma_t)\oplus H^s(\Sigma_t),
			\label{eq:concrete-Xs}
		\end{equation}
		where the two summands are the field and its future unit-normal derivative. The space is applied to data supported in \(K_t:=J_D(K_1)\cap\Sigma_t\). Sobolev norms are constructed with respect to the induced Riemannian metric; on any fixed compact family of slabs all such norms on the sets \(K_t\) are uniformly equivalent after transport along the foliation.
	\end{definition}
	
	\begin{proposition}[Uniform compact-window evolution]
		\label{prop:compact-evolution}
		Assume \(m\in C^{s+2}\) and \(0<m_0\leq m\leq m_1\) on the causal slab just defined. Then the Cauchy evolution of \(P\) for data supported in \(K_1\) defines a bounded map
		\begin{equation}
			\mathcal{U}(t_2,t_1):X^s(\Sigma_{t_1})\longrightarrow X^s(\Sigma_{t_2})
		\end{equation}
		whose image is supported in \(K_2=J_D(K_1)\cap\Sigma_{t_2}\). This map is an isomorphism onto its range. Fix a reference length \(\ell_0>0\) for this bounded coefficient class. If
		\begin{equation}
			\mathfrak{m}_s(t):=\sum_{j=1}^{s+2}\ell_0^{j-1}\sup_{K_t}|\partial_w^j m(w)|,
		\end{equation}
		then
		\begin{equation}
			\|\mathcal{U}(t_2,t_1)\|\leq C_s\exp\!\left(C_s|t_2-t_1|+C_s\int_{t_1}^{t_2}\mathfrak{m}_s(t)\,\mathrm{d}t\right).
			\label{eq:compact-evolution-bound}
		\end{equation}
		The reverse evolution satisfies the analogous estimate on the range. Here \(C_s\) depends only on uniform zeroth-order bounds for the metric, the foliation and the lower-order potential, and is independent of the particular data.
	\end{proposition}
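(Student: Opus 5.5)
The plan is a standard higher-order energy argument on the compact window, organised so that every appearance of $w$-derivatives of $m$ is kept explicit. In the chart $(t,r,\nu)$ with $t=\tau_\epsilon=w-\epsilon r$, the foliation $\Sigma_t$ is spacelike by the computation \eqref{eq:temporal-calculation}, since $g^{-1}(\dd t,\dd t)=-1-2m/r\leq-1$. On the compact region $J_D(K_1)\cap D_{12}\subset\{r\geq r_{\min}\}$ with $0<m_0\leq m\leq m_1$, the lapse, shift and induced metric are bounded above and below by constants depending only on $m_0,m_1,r_{\min},r_{\max}$. I will fix the multiplier $X=-\nabla t$, or equivalently the future unit normal $n$ rescaled by the lapse. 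Then $\nabla X$ involves $\partial_r C$ but no $\partial_w m$ beyond first order, because $\dd t$ itself does not depend on $m$. The $s=0$ case then follows from \eqref{eq:multiplier-identity} and \eqref{eq:first-energy-estimate}. The deformation tensor $\pi^X$ contains $\partial_w C=-2m'(w)/r$, so $|T^{ab}\pi^X_{ab}|\leq (C_0+C_0|m'|)\,e_X[\Phi]$, and Gronwall gives the factor $\exp(C_0|t_2-t_1|+C_0\int|m'|\,\dd t)$. Support in $K_2$ follows from finite propagation speed, since $E^\pm$ has the causal support property \eqref{eq:Green-operators}. Because the data remain compactly supported inside the slab, the lateral flux issue never arises.

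For $s\geq1$, I will commute the equation with a fixed family of vector fields that span the tangent space uniformly on the compact window, namely $\partial_r$ at fixed $t$, $\partial_t$, and the rotation fields $\Omega_{i}$. The rotations are Killing, so they commute exactly with $P$. Write $\mathcal C^{(k)}$ for the collection of $k$-fold commuted fields $Z^\alpha\Phi$ with $|\alpha|\leq k$. Each commutator $[P,Z]$ is a second-order operator whose coefficients are built from $C$ and its derivatives. Commutation with $\partial_t$, which acts as $\partial_w$ at fixed $r$ up to the change of chart, produces coefficients $\partial_w^jm$. I will count carefully: to control $\|\Phi\|_{H^{s+1}}$ one needs up to $s$ commutations. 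The commutator terms involve $\partial^{\leq s+1}$ of the coefficients of $P$, and the coefficients themselves already involve $m$, giving at most $s+2$ derivatives of $m$ once the first-order coefficient $C_r$ and the zeroth-order potential $2m/r^3$ in \eqref{eq:Vaidya-potential} are included. This is exactly the range $1\leq j\leq s+2$ in $\mathfrak m_s$. The factors $\ell_0^{j-1}$ are introduced only to make the units match.

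The resulting inequality has the form $\frac{\dd}{\dd t}E_s(t)\leq C_s\bigl(1+\mathfrak m_s(t)\bigr)E_s(t)$ with $E_s=\sum_{|\alpha|\leq s}E_X(Z^\alpha\Phi)$. Gronwall then yields \eqref{eq:compact-evolution-bound}, once $E_s$ is shown to be uniformly equivalent to the $X^s$ norm. That equivalence uses elliptic recovery of transversal derivatives from the equation, uniform on the compact window by Definition~\ref{def:energy-spaces}.

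The main obstacle is the structure of the commutator terms. One must ensure they are linear in $\partial_w^jm$ times lower-order energies, rather than products such as $(m')^2$. Products of derivatives of $m$ do appear; for example, $\partial_t^2$ of $C\,\partial_r^2\Phi$ produces $m''$ while reusing lower energies. I would handle them by bounding every product $|\partial^{j_1}m|\cdots$ using $\sup|m^{(j)}|$ on the compact set, absorbing all but one factor into $C_s$. If this is not legitimate because $C_s$ is supposed to depend only on zeroth-order bounds, I would instead rely on an induction in $s$ in which each step adds only a linear $\mathfrak m$ term, exponentiated at the end.

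Bijectivity onto the range follows from time reversal. The same argument applies backwards from $\Sigma_{t_2}$ on $J_D(K_1)$, and uniqueness of the Cauchy problem makes $\mathcal U(t_1,t_2)\mathcal U(t_2,t_1)=\mathrm{id}$, which gives the analogous reverse estimate on the range.
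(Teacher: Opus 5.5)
Your proposal is essentially the paper's own argument. It uses the spacelike $\tau_\epsilon$-slices from \eqref{eq:temporal-calculation}, applies the multiplier identity \eqref{eq:multiplier-identity} to $\Phi$ and to its commuted derivatives through order $s$ with error bounded by $C_s(1+\mathfrak m_s)E_s$, invokes Gronwall and finite propagation to rule out lateral flux, and reverses the foliation for the inverse.

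The product-of-derivatives issue you flag is not resolved in the paper either. Its proof also lets the constants depend on bounds for the foliation and coefficients through order $s$, so absorbing those factors into $C_s$ matches its level of rigour. Note, however, that in the $(t,r)$ chart the coefficients of $\Box_g$ are affine in the undifferentiated $m(w)$. The commutator terms are therefore automatically linear in $\partial_w^j m$. Products of derivatives enter only the prefactor, via expressing the $\partial_t$-derivatives of the data through the equation.
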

	
	\begin{proof}
		Equation~\eqref{eq:temporal-calculation} ensures that every \(\Sigma_t\) is spacelike. On the stated compact coefficient class the lapse, the induced metric, the second fundamental form and all their derivatives through order \(s\) are uniformly bounded, while the lapse and induced metric remain uniformly nondegenerate. Apply \eqref{eq:multiplier-identity} to \(\Phi\) and to all coordinate derivatives through order \(s\). The differentiated equations involve only finitely many coefficient derivatives up to the displayed order and are controlled by \(C_s(1+\mathfrak{m}_s(t))E_s\). The boundary flux is nonnegative for an outgoing causal boundary. More invariantly, the standard Cauchy energy theorem on the globally hyperbolic slab applies directly to compactly supported data; finite propagation confines the solution to \(J_D(K_1)\), so no artificial timelike lateral boundary arises. Gronwallâ€™s inequality then yields \eqref{eq:compact-evolution-bound}. Reversing the foliation gives the corresponding estimate for the inverse. This is the standard energy argument for a normally hyperbolic operator, expressed here in the concrete function spaces used below; existence, uniqueness and finite propagation follow from \cite{BaerGinouxPfaeffle2007}.
	\end{proof}
	
	\section{Local horizon scaling}
	\label{sec:scaling}
	
	Fix a spherical cross-section $S_*$ of an outer trapping horizon and let
	$\kappa_*>0$ be its Kodama--Hayward surface gravity.  Choose adapted
	double-null coordinates $(\mathsf U,\mathsf V,\nu)$, with $S_*$ at
	$\mathsf U=\mathsf V=0$, and let $\mathcal T_*=\{\mathsf V=0\}$ be the null
	tangent hypersurface used in the scaling construction.
	
	\subsection{Linearisation of the projected Kodama flow}
	
	The actual Kodama field need not preserve $\mathcal T_*$.  Project it onto
	$T\mathcal T_*$ along the complementary null direction and call the result
	$\widehat K$.  Normalise the affine coordinate $\mathsf U$ so that
	$\mathsf U=0$ on $S_*$.  The horizon identity following
	\eqref{eq:Hayward-definition} fixes the first transverse derivative of the
	projected field:
	\begin{equation}
		\widehat K
		=\left(-\kappa_*\mathsf U+O(\mathsf U^2)\right)\partial_{\mathsf U}
		\quad\hbox{on }\mathcal T_*.
		\label{eq:projected-Kodama-generator}
	\end{equation}
	The sign corresponds to the orientation in which positive flow parameter
	moves points towards $S_*$.  A coordinate change
	$\mathsf U\mapsto a\mathsf U+O(\mathsf U^2)$ leaves the coefficient
	$\kappa_*$ unchanged.
	
	Let $\mathsf U_s$ be the integral curve satisfying
	$\dot{\mathsf U}_s=-\kappa_*\mathsf U_s+O(\mathsf U_s^2)$ and
	$\mathsf U_0=\mathsf U$.  Variation of constants gives, uniformly for $s$ in
	a compact interval,
	\begin{equation}
		\mathsf U_s=e^{-\kappa_*s}\mathsf U+O_s(\mathsf U^2).
		\label{eq:Kodama-dilation}
	\end{equation}
	Indeed, after multiplication by $e^{\kappa_*s}$ the nonlinear remainder is
	an integral bounded by a constant times $|\mathsf U|^2$.  This also explains
	why only the displayed linear term enters the scaling limit.  Put
	$\mathsf U=\lambda u$.  Then
	\begin{equation}
		\frac{\mathsf U_s(\lambda u)}{\lambda}
		=e^{-\kappa_*s}u+O_s(\lambda u^2)
		\longrightarrow e^{-\kappa_*s}u.
		\label{eq:linear-term-survives}
	\end{equation}
	Every quadratic or higher Taylor coefficient carries a positive power of
	$\lambda$ and disappears as $\lambda\downarrow0$.  Thus the limiting
	automorphism is exactly the dilation generated by
	$-\kappa_*\mathsf U\partial_{\mathsf U}$, even though the unscaled flow is not
	exactly linear.
	
	\subsection{Extraction of the universal kernel}
	
	Test the two-point function with derivatives
	$\partial_{\mathsf U}f$ and $\partial_{\mathsf U}f'$ of compactly supported
	functions.  This removes the zero mode and is the form naturally induced by
	the horizon symplectic structure.  Dilate their $\mathsf U$ support and
	concentrate a second test profile onto $\mathcal T_*$ in the $\mathsf V$
	direction, with the order and normalisation of limits fixed in
	\cite{KurpiczPinamontiVerch2021}.
	
	Near the diagonal of $S_*$ the world function has the Taylor form
	\begin{equation}
		\sigma(x,x')=
		-A_*\bigl(\mathsf U-\mathsf U'\bigr)
		\bigl(\mathsf V-\mathsf V'\bigr)
		+\frac{r_*^2}{2}d_{\mathbb S^2}(\nu,\nu')^2
		+O(|x-x'|^3),
		\label{eq:sigma-null-expansion}
	\end{equation}
	where $A_*>0$.  Insert \eqref{eq:sigma-null-expansion} into the first term of
	\eqref{eq:Hadamard-form}.  Transverse concentration, followed by integration
	of the two $\mathsf U$ derivatives by parts, produces the boundary value
	$(\mathsf U-\mathsf U'+i0)^{-2}$.  Angular concentration on the diagonal
	produces $r_*^2\delta(\nu,\nu')$.
	
	The remaining terms are suppressed for concrete scaling-degree reasons.
	Since $U(x,x')=\Delta^{1/2}(x,x')=1+O(|x-x'|^2)$, its nonconstant part gains a
	positive scaling power.  The $V\log\sigma_\eps$ term is less singular, the
	state-dependent $H$ is smooth, and the cubic remainder in
	\eqref{eq:sigma-null-expansion} again gains a positive scaling power.  A
	partition separating the diagonal from its complement and uniform
	distributional estimates justify passing to the limit; this is the rigorous
	analytic content of the cited scaling theorem.
	
	The resulting exact limit is
	\begin{equation}
		\Lambda_*(f,f')
		=-\frac{r_*^2}{\pi}
		\lim_{\eps\downarrow0}
		\int
		\frac{f(\mathsf U,0,\nu)f'(\mathsf U',0,\nu)}
		{(\mathsf U-\mathsf U'+i\eps)^2}
		\,\dd\mathsf U\,\dd\mathsf U'\,\dd\Omega(\nu),
		\label{eq:KPV-scaling}
	\end{equation}
	up to convention-dependent normalisation of the field and test functions.
	All state dependence lies in the smooth $H$, which has vanished; this is the
	source of universality.
	
	\subsection{Explicit KMS verification}
	
	Restrict first to $\mathsf U>0$ and set
	\begin{equation}
		\mathsf U=e^{\kappa_*u}.
		\label{eq:logarithmic-coordinate}
	\end{equation}
	The limiting dilation is the translation $u\mapsto u-s$.  Including the
	Jacobian factors gives
	\begin{equation}
		\frac{\dd\mathsf U\,\dd\mathsf U'}
		{(\mathsf U-\mathsf U'+i0)^2}
		=\frac{\kappa_*^2}{4}
		\frac{\dd u\,\dd u'}
		{\sinh^2\!\left(\frac{\kappa_*}{2}(u-u'+i0)\right)}.
		\label{eq:KPV-sinh-kernel}
	\end{equation}
	For $\mathsf U<0$, use $\mathsf U=-e^{\kappa_*u}$ and obtain the same
	formula.  Each connected side is therefore stationary under logarithmic
	translations.
	
	Set $\beta_*=2\pi/\kappa_*$ and consider
	\begin{equation}
		F(z)=\frac{\kappa_*^2}{4\sinh^2(\kappa_*z/2)}.
	\end{equation}
	It is holomorphic for $0<\operatorname{Im}z<\beta_*$, because its poles lie
	at $2\pi i n/\kappa_*$.  Also $F(z+i\beta_*)=F(z)$.  The lower and upper
	distributional boundary values are related by
	\begin{equation}
		\lim_{\eta\downarrow0}F(t+i\beta_*-i\eta)
		=F(t-i0)=F(-t+i0),
		\label{eq:KMS-boundaries}
	\end{equation}
	where the last equality uses evenness.  This is exactly the two-point KMS
	boundary condition: the upper boundary reverses the order of the two fields.
	
	Equivalently, Fourier transformation gives
	\begin{equation}
		\Lambda_*(f,f')
		=2r_*^2\int_{\mathbb R\times\mathbb S^2}
		\overline{\widehat f(E,\nu)}\widehat f'(E,\nu)
		\frac{E}{1-e^{-\beta_*E}}\,\dd E\dd\Omega(\nu),
		\label{eq:KMS-Fourier}
	\end{equation}
	with the same normalisation convention as \eqref{eq:KPV-scaling}.  Its
	spectral density satisfies detailed balance,
	\begin{equation}
		\rho(-E)=e^{-\beta_*E}\rho(E),
		\qquad \rho(E)=\frac{E}{1-e^{-\beta_*E}}.
	\end{equation}
	For the quasifree limiting state, the two-point KMS condition extends to the
	Weyl algebra.  The restriction to either side is therefore KMS at
	\begin{equation}
		\beta_*=\frac{2\pi}{\kappa_*}.
		\label{eq:local-temperature}
	\end{equation}
	
	Equation \eqref{eq:KPV-scaling} is local.  It does not assert that
	$\kappa_*$ equals the null-geodesic peeling function measured at $\scrI^+$.
	The quantity $b_U(L)$ in \eqref{eq:window-kappa-variation} measures only the
	difference between the frozen Kodama value and the local radial
	linearisation rate.  Relating either one to the asymptotic ray-tracing
	function requires the localisation bridge and far-zone propagation.
	
	\section{A rigorous reduction to a propagation estimate}
	\label{sec:reduction}
	
	We now record an abstract estimate which turns a norm bound for the propagated
	remainder into a two-sided response bound.  This elementary step is useful
	because it makes precise which part of the desired result follows only from
	positivity and which part requires new PDE analysis.
	
	Let $q$ be a positive sesquilinear form on a complex vector space $\mathscr X$.
	Write $\|x\|_q^2=q(x,x)$.
	
	\begin{lemma}[Positive-form comparison]
		\label{lem:positive-comparison}
		For $f=f_0+r\in\mathscr X$,
		\begin{equation}
			\left|q(f,f)-q(f_0,f_0)\right|
			\leq
			2\sqrt{q(f_0,f_0)q(r,r)}+q(r,r).
			\label{eq:positive-form-bound}
		\end{equation}
		If, in addition, $q(r,r)\leq C_q\|r\|_X^2$ for some norm $\|\cdot\|_X$, then
		\begin{equation}
			\left|q(f,f)-q(f_0,f_0)\right|
			\leq
			2\sqrt{C_q q(f_0,f_0)}\,\|r\|_X
			+C_q\|r\|_X^2.
			\label{eq:norm-response-bound}
		\end{equation}
	\end{lemma}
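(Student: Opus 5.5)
The plan is to expand the quadratic form by sesquilinearity and then control the cross terms with the Cauchy--Schwarz inequality for positive sesquilinear forms. Writing $f=f_0+r$, sesquilinearity gives the exact identity
\[
q(f,f)-q(f_0,f_0)=q(f_0,r)+q(r,f_0)+q(r,r).
\]
Positivity of $q$ means $q(x,x)\geq0$ for all $x$. Over a complex space this forces hermiticity, $q(y,x)=\overline{q(x,y)}$, by polarisation. Hence the two cross terms combine to $2\operatorname{Re}q(f_0,r)$, and
\[
|q(f,f)-q(f_0,f_0)|\leq 2|q(f_0,r)|+q(r,r).
\]
I would state the hermiticity step explicitly, since the lemma only says ``positive''.

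The key step is the Cauchy--Schwarz inequality
\[
|q(x,y)|^2\leq q(x,x)\,q(y,y)
\]
for a positive semidefinite (possibly degenerate) form. I would prove it in the standard way. If $q(y,y)>0$, expand $0\leq q(x-ty,x-ty)$ with $t=q(y,x)/q(y,y)$. If $q(y,y)=0$, use $0\leq q(x+sy,x+sy)=q(x,x)+2\operatorname{Re}\bigl(s\,q(x,y)\bigr)$ (with the appropriate linearity convention) for all complex $s$. This forces $q(x,y)=0$, because otherwise a suitable phase and large $|s|$ make the right-hand side negative. The degenerate case is the only point needing care, since $q$ is merely positive and not a genuine inner product; it is the small technical obstacle here. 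Applying the inequality with $x=f_0$, $y=r$ gives $|q(f_0,r)|\leq\sqrt{q(f_0,f_0)q(r,r)}$, which yields \eqref{eq:positive-form-bound}.

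For \eqref{eq:norm-response-bound}, I would substitute the hypothesis $q(r,r)\leq C_q\|r\|_X^2$ into both occurrences of $q(r,r)$ in \eqref{eq:positive-form-bound}. For the first, use monotonicity of the square root to get $\sqrt{q(f_0,f_0)q(r,r)}\leq\sqrt{C_q q(f_0,f_0)}\,\|r\|_X$. The second is bounded directly by $C_q\|r\|_X^2$. No further analysis is needed; the lemma is purely algebraic, and all PDE content is deferred to verifying the norm bound on $r$.
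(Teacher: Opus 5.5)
Your proposal is correct and matches the paper's proof. The paper also expands $q(f,f)-q(f_0,f_0)=q(f_0,r)+q(r,f_0)+q(r,r)$ by sesquilinearity, bounds the cross terms with the Cauchy--Schwarz inequality for positive forms, and then inserts $q(r,r)\leq C_q\|r\|_X^2$; your explicit treatment of hermiticity and of the degenerate case $q(y,y)=0$ only fills in details the paper takes as standard.
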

	
	\begin{proof}
		Sesquilinearity gives
		\begin{equation}
			q(f,f)-q(f_0,f_0)=q(f_0,r)+q(r,f_0)+q(r,r).
		\end{equation}
		The Cauchy--Schwarz inequality for positive sesquilinear forms gives
		\begin{equation*}
			|q(f_0,r)|\leq\sqrt{q(f_0,f_0)q(r,r)},
		\end{equation*}
		yielding
		\eqref{eq:positive-form-bound}.  The second assertion follows by inserting
		the assumed continuity estimate.
	\end{proof}
	
	Apply the lemma to the pullback of $W_2$ to Cauchy data on a suitable surface.
	Use the exact channel decomposition \eqref{eq:dynamical-decomposition}, and
	suppose, only for this preliminary reduction, that the vacuum part associated
	with $f_U^{\mathrm{out}}$ has been subtracted or shown not to excite a passive
	detector.  Taking
	$f_0=f_U^{\mathrm{hor}}$ and $r=r_U$, Lemma~\ref{lem:positive-comparison}
	immediately produces a certified response interval once $\|r_U\|_X$ is
	bounded.  Section~\ref{sec:certified-bound} removes this simplification: the
	outgoing contribution is retained there as a separately estimated positive
	term.
	
	This also displays why Hadamard regularity alone is not enough.  It controls
	the singular scaling of $q(f_U^{\mathrm{hor}},f_U^{\mathrm{hor}})$, while the
	constant $C_q$ depends on quantitative control of the state in the region
	sampled by the remainder.
	
	\section{Finite-window propagation and horizon localisation}
	\label{sec:target-propagation}
	
	Membership in $\mathfrak V_{\mathrm E}$ specifies a finite slab but not its
	future extension, while $\mathfrak V_{\mathrm E_\infty}$ provides no uniform
	positive lower bound for the mass.  Accordingly, the analysis uses two
	complementary results: an exact scattering theorem for compactly dynamical
	Vaidya backgrounds and a finite-window comparison theorem for evolving
	backgrounds.  Throughout Sections~\ref{sec:target-propagation}--
	\ref{sec:mass-profile-consequences}
	we take $\mu=0$.  Since $R_g=0$ in every smooth Vaidya region, the minimally
	and conformally coupled physical wave equations then coincide.
	
	\subsection{Exact boundary scattering for a Vaidya sandwich}
	
	For the massless conformal equation, a global boundary scattering result is
	available without a mode-by-mode construction.  Put
	$R=r^{-1}$, $\widehat g=R^2g$ and $\phi=r\Phi$.  Because $R_g=0$ in the smooth
	Vaidya region, $\Box_g\Phi=0$ is equivalent to the conformal wave equation for
	$\phi$ on $(\widehat\D,\widehat g)$.  The energy current used in
	\cite{Coudray2024} has the exact divergence
	\begin{equation}
		\widehat\nabla^aJ_a
		=m'(w)R^3|\phi_{\widehat n}|^2+m'(w)R|\phi|^2,
		\label{eq:Coudray-divergence}
	\end{equation}
	up to the sign fixed by time orientation; this is equation (45) of
	\cite{Coudray2024}.  On the foliation used there its positive flux is
	equivalent, on the compact Vaidya region, to an energy containing the
	schematic terms
	\begin{equation}
		|\phi_w|^2+R^4|\phi_R|^2+R^2|\nabla_{\mathbb S^2}\phi|^2
		+R^3|\phi|^2.
		\label{eq:conformal-energy-density}
	\end{equation}
	with coefficient weights depending on the chosen null frame.  Consequently,
	on the compact transition region,
	\begin{equation}
		|\widehat\nabla^aJ_a|\leq c_0|m'(w)|e[\phi].
		\label{eq:Coudray-error-bound}
	\end{equation}
	If $M(\tau)$ denotes the supremum of $|m'(u)|$ on the $\tau$-slice of
	Coudray's compact foliation, Stokes' theorem and Gronwall's inequality give
	\begin{equation}
		C_{\mathrm V}^{-1}E_{\Sigma}[\phi]
		\leq E_{\HH^+}[\phi]+E_{\scrI^+}[\phi]
		\leq C_{\mathrm V}E_{\Sigma}[\phi],
		\qquad
		C_{\mathrm V}=\exp\!\left(c_0\int_{\tau_-}^{\tau_+}M(\tau)\,\dd\tau\right),
		\label{eq:Vaidya-boundary-energy}
	\end{equation}
	where $c_0$ also depends on the fixed foliation, null frame and positive
	bounds for $m$.  The same estimate holds towards the past.  We do not identify
	this comparison constant with the total mass variation; the two are only
	comparable after the foliation is fixed.
	
	\begin{theorem}[Coudray's Vaidya scattering theorem]
		\label{thm:Coudray-scattering}
		For the outgoing, monotonically decreasing white-hole background with
		Schwarzschild past and future ends satisfying the hypotheses of
		\cite{Coudray2024}, the future and past trace maps
		\begin{equation}
			\mathcal T^\pm:\mathcal H_0\longrightarrow
			\mathcal H^\pm_{\HH}\oplus\mathcal H^\pm_{\scrI}
		\end{equation}
		are bounded isomorphisms of the corresponding energy spaces.  Hence
		\begin{equation}
			\mathcal S=\mathcal T^+(\mathcal T^-)^{-1}
			\label{eq:Vaidya-scattering-operator}
		\end{equation}
		is a bounded scattering isomorphism.  The assertion includes all angular
		momenta.
	\end{theorem}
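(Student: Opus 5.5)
The plan is to follow the conformal vector-field method already sketched before the statement, splitting the space-time into the two stationary Schwarzschild ends and the compact Vaidya transition region. On each Schwarzschild end I will use the known conformal scattering theory for the massless conformal wave equation on the compactified exterior $(\widehat\D,\widehat g)$: the Morawetz vector field $T$, or the stationary Killing field, gives a conserved current. Its flux through a spacelike surface, through $\HH^\pm$ and through $\scrI^\pm$ is controlled uniformly in $\ell$, since the energy density \eqref{eq:conformal-energy-density} contains the full angular gradient and involves no mode decomposition. Decay of the flux through late hyperboloidal slices near $i^\pm$ then yields exact energy identities there: the energy on a slice equals the sum of the horizon and null-infinity fluxes. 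These identities are the energy equalities of the Schwarzschild case.

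In the transition region I will use the exact divergence \eqref{eq:Coudray-divergence} and the pointwise bound \eqref{eq:Coudray-error-bound}. Stokes' theorem, applied between slices of the compact foliation, followed by Gronwall's inequality, gives the two-sided estimate \eqref{eq:Vaidya-boundary-energy} with constant $C_{\mathrm V}$. I will compose this estimate with the exact identities on the future end, and do the same towards the past. The result is that, for smooth compactly supported data on $\Sigma$, $\mathcal T^\pm$ satisfies
$C_{\mathrm V}^{-1}\|\cdot\|_{\mathcal H_0}^2\le\|\mathcal T^\pm\cdot\|^2\le C_{\mathrm V}\|\cdot\|_{\mathcal H_0}^2$.
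By density, $\mathcal T^\pm$ therefore extends to a bounded injective map with closed range.

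Surjectivity is done by a Goursat argument. For smooth compactly supported characteristic data on $\HH^+\cup\scrI^+$, I will solve the characteristic problem on the compactified space-time, using H\"ormander's theorem on the rescaled metric, which is smooth up to the boundary away from $i^+$. Near $i^+$ the solution vanishes because the data are compactly supported, and the resulting trace on $\Sigma$ has finite energy by the reverse estimate. Density of such data then gives that $\mathcal T^\pm$ is onto. Consequently $\mathcal S=\mathcal T^+(\mathcal T^-)^{-1}$ is a bounded isomorphism with norm at most $C_{\mathrm V}$. The constant does not depend on $\ell$, so the bound holds for all angular momenta.

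I expect the main obstacle to be the behaviour at $i^\pm$ and at the bifurcation or past-horizon ends. Specifically, one needs flux through late slices to vanish, which in Schwarzschild requires decay or at least uniform boundedness of the conformal energy near $i^\pm$. The first thing I would try is to take these results directly from the hypotheses of \cite{Coudray2024}, which supply the stationary-end scattering of Nicolas type \cite{Nicolas2016}. I would then check that the Vaidya region, being compact in the foliation, does not alter the end behaviour.
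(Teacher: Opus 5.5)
Your proposal is correct and follows the same route as the paper: the two-sided estimate \eqref{eq:Vaidya-boundary-energy} makes $\mathcal T^\pm$ bounded and injective with closed range, the Goursat problem for smooth compactly supported data on $\HH^\pm\cup\scrI^\pm$ gives dense and hence full range, and $\mathcal S$ follows by composition. Your splitting into exact Schwarzschild-end identities (using decay near $i^\pm$) plus the Gronwall estimate on the transition region spells out the details that the paper defers to \cite{Coudray2024}.
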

	
	\begin{proof}
		Estimate \eqref{eq:Vaidya-boundary-energy} makes each trace map bounded,
		injective and of closed range.  Smooth compactly supported data prescribed on
		$\HH^\pm\cup\scrI^\pm$ determine a solution of the characteristic Goursat
		problem.  The Goursat existence theorem, finite propagation speed and
		\eqref{eq:Vaidya-boundary-energy} show that these smooth traces lie in the
		range and are dense in the boundary energy space.  A closed dense range is the
		whole space, so $\mathcal T^\pm$ is onto and has a bounded inverse.  Composition
		gives \eqref{eq:Vaidya-scattering-operator}.  This is the argument proved in
		detail in \cite{Coudray2024}; the essential Vaidya estimate is displayed in
		\eqref{eq:Coudray-divergence}--\eqref{eq:Vaidya-boundary-energy}.
	\end{proof}
	
	Thus an exact horizon/$\scrI$ channel decomposition is not conjectural for the
	specific Vaidya sandwich proved in \cite{Coudray2024}.  Time reversal gives
	the corresponding accreting black-hole statement.  The cited theorem does
	not cover an arbitrary sign-changing sandwich.  For a local, continuing
	evolution we now establish a
	quantitative comparison between the dynamical channel and a frozen
	Schwarzschild channel.
	
	\subsection{Coefficient and propagator comparison}
	
	Fix an observation time $U$ and a backwards \emph{spacelike} Cauchy slab
	$D_{U,L}$ bounded by levels $t_+(U)$ and $t_-(U)=t_+(U)-L$ of
	$t=w-\epsilon r$.  Let $K_t$ be the part of the $t$-slice met by its compact
	causal hull.  Choose the relevant horizon cross-section in the slab, denote
	its null coordinate by $w_+(U)$, and put
	$m_U=m(w_+(U))$ and $\kappa_U=(4m_U)^{-1}$.  We use the common
	$(w,r,\nu)$ chart to identify the Sobolev energy spaces of the dynamical and
	frozen metrics.  Let $X^s$ denote the resulting order-$s$ energy space.
	Fix once and for all a reference length $\ell_0$ on this bounded coefficient
	class; it only makes the coefficient seminorms dimensionally homogeneous and
	is absorbed into the constants.
	
	\begin{lemma}[Uniform coefficient difference]
		\label{lem:coefficient-difference}
		On a causal slab with $r\geq r_{\min}>0$ and
		$0<m_0\leq m(w)\leq m_1$, the first-order generators $A(t)$ of the Vaidya
		wave equation and $A_U$ of the frozen Schwarzschild equation, both written
		with respect to the spacelike $t$-foliation, satisfy
		\begin{equation}
			\|A(t)-A_U\|_{X^{s+1}\to X^s}
			\leq C_A d_{U,s}(t),
			\label{eq:generator-difference}
		\end{equation}
		where
		\begin{equation}
			d_{U,s}(t):=\sup_{x\in K_t}
			\left(|m(w(x))-m_U|+
			\sum_{j=1}^{s+2}\ell_0^j
			|\partial_w^jm(w(x))|\right),
		\end{equation}
		$C_A$ depends only on $s,r_{\min},m_0,m_1$ and the fixed foliation, not
		on the spherical harmonic number.
	\end{lemma}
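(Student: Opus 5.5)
We write both generators in the common $(w,r,\nu)$ chart with respect to the spacelike foliation $t=w-\epsilon r$ and subtract them coefficient by coefficient; the difference is then a first-order operator whose coefficients are controlled by $d_{U,s}$. Concretely, in coordinates $(t,r,\nu)$ with $w=t+\epsilon r$, the inverse metric of \eqref{eq:unified-Vaidya} has the components
\begin{equation*}
g^{tt}=-2+C,\qquad g^{tr}=\epsilon(1-C),\qquad g^{rr}=C,
\end{equation*}
together with the angular part $r^{-2}g_{\mathbb S^2}^{-1}$, and $\sqrt{|g|}=r^2\sin\theta$. Every $w$-dependence therefore enters only through the single scalar function $C(w,r)=1-2m(w)/r$. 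The frozen operator is obtained by replacing $m(w)$ with the constant $m_U$. By \eqref{eq:temporal-calculation}, $g^{tt}\leq -1$ uniformly in both cases. Dividing by $g^{tt}$, we write $\Box_g\Phi=0$ as the first-order system $\partial_t(\Phi,\partial_t\Phi)=A(t)(\Phi,\partial_t\Phi)$, where
\begin{equation*}
A(t)=\begin{pmatrix}0&1\\ -(g^{tt})^{-1}\bigl(g^{rr}\partial_r^2+r^{-2}\Delta_{\mathbb S^2}+b^r\partial_r\bigr)&-(g^{tt})^{-1}\bigl(2g^{tr}\partial_r+b^t\bigr)\end{pmatrix},
\end{equation*}
and the first-order coefficients $b^t,b^r$ contain $\partial_rC$ and $r^{-1}$. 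Using the normal derivative in place of $\partial_t$ changes $A(t)$ only by a conjugation with a smooth bounded lapse factor. That factor is again a function of $C$, so the same argument applies to it.

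The key observation is that, apart from the angular Laplacian, every coefficient of $A(t)-A_U$ has the form $F(C(w,r),r)-F(C_U(r),r)$, with $F$ smooth on the compact parameter set $r\geq r_{\min}$, $C\in[1-2m_1/r_{\min},1]$, and $g^{tt}$ bounded away from zero. By the mean value theorem, each such difference is bounded by $\sup|\partial_CF|\cdot 2|m(w)-m_U|/r_{\min}$. The angular term needs separate care because it carries $\ell(\ell+1)$. Its coefficient is $(g^{tt})^{-1}r^{-2}$, and its difference equals $\bigl[(g^{tt})^{-1}-(g^{tt}_U)^{-1}\bigr]r^{-2}\Delta_{\mathbb S^2}$. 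We bound this as a multiplication operator of size $\lesssim|m-m_U|$ composed with $r^{-2}\Delta_{\mathbb S^2}:H^{s+2}\to H^s$, since the $X^{s+1}\to X^s$ operator norm allows exactly two derivatives on the first component. This is why the estimate comes out uniform in $\ell$: the angular derivatives are measured in the same Sobolev norm, not by a factor of $\ell$. Second-order radial and mixed terms are handled in the same way.

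To obtain the $X^{s+1}\to X^s$ bound, we must estimate $H^s$ norms of the coefficient differences times derivatives of the data. By the Leibniz rule, this requires $C^s$ norms of the coefficient differences on $K_t$. Derivatives in $r$ and $\nu$ act on explicit smooth functions of $r$ and $C$, and these are bounded by the zeroth-order difference together with $m_1$ and $r_{\min}$. Derivatives in $t$, or tangential derivatives producing $\partial_w$, fall on $m(w)$ through $w=t+\epsilon r$. By the Faà di Bruno formula they generate terms $\partial_w^jm$ with $j\leq s$, multiplied by bounded functions. Some of these terms still carry the difference factor and some do not; the latter are simply bounded by $\ell_0^j|\partial_w^jm|$. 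The sum in $d_{U,s}$ extends to $j=s+2$, which leaves room for the lapse-conjugation and normal-derivative conversions, each of which costs one or two extra $w$-derivatives.

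The main obstacle is the bookkeeping that keeps $C_A$ independent of $\ell$ and of the data while Sobolev norms on the moving sets $K_t$ are identified. We handle this by invoking the uniform equivalence of norms stated in Definition~\ref{def:energy-spaces}. We also use the fact that every coefficient is a smooth function of the two scalars $C$ and $r$ on a fixed compact parameter set. Collecting the bounds then gives \eqref{eq:generator-difference}, with $C_A$ depending only on $s$, $r_{\min}$, $m_0$, $m_1$ and the foliation.
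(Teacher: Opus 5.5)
Your proof is correct, and its core is the same as the paper's. Every coefficient depends on $w$ only through $m(w)$; in fact only through $C$, since $C_r=(1-C)/r$ and no $m'$ appears in the second-order operator, as in \eqref{eq:box-Vaidya}. So the coefficient differences are $O(|m-m_U|)$ pointwise, and the spatial derivatives along $\Sigma_t$ needed for the $X^{s+1}\to X^s$ bound produce $\partial_w^jm$ through $w=t+\epsilon r$.

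Where you genuinely depart from the paper is the angular term. The paper argues from the reduced equation \eqref{eq:reduced-Vaidya} in the characteristic chart. There $\ell(\ell+1)/r^2$ has an $m$-independent coefficient and cancels in $\mathcal V_\ell-\mathcal V_{\ell,U}$, and the paper then appeals to commutation with rotations. You work with the generator itself and correctly observe that after solving for $\partial_t^2$, i.e.\ dividing by $g^{tt}=C-2$, the angular coefficient becomes $(2-C)^{-1}r^{-2}$, which depends on $m(w)$. Hence $A(t)-A_U$ retains $\bigl[(2-C)^{-1}-(2-C_U)^{-1}\bigr]r^{-2}\Delta_{\mathbb S^2}$ rather than cancelling. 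You bound this term as a multiplier of size $O(|m-m_U|)$ composed with $r^{-2}\Delta_{\mathbb S^2}:H^{s+2}\to H^s$. That is what actually makes $C_A$ independent of $\ell$, and it makes explicit the sense of that claim: uniformity holds because the norms on $\Sigma_t$ contain angular derivatives. In a fixed harmonic sector with unweighted radial norms the constant would grow like $\ell(\ell+1)$. The paper's cancellation statement is literally true for the second-order operators $P-P_U$. Its remark about estimating the full rotation-commuting operator is what really carries the $\ell$-uniformity, and your version is the accurate one for the generators.

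Two small points to tidy. First, passing from $\partial_t\Phi$ to the unit-normal derivative is not only a lapse conjugation. Since $\partial_t=Nn+\beta^r\partial_r$ with $N=(2-C)^{-1/2}$ and $\beta^r=\epsilon(1-C)/(2-C)$, the change of variables $S$ has a first-order entry $-N^{-1}\beta^r\partial_r$. Its $t$-dependence also contributes $(\partial_tS)S^{-1}$, which involves $m'$. Both pieces are bounded on $X^{s+1}$ and $X^s$, with dynamical-minus-frozen differences controlled by $d_{U,s}$ using $\partial_w^jm$ for $j\le s+1$, so your conclusion stands. Second, your parameter set is compact only because $K_t$ also supplies an upper bound on $r$.
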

	
	\begin{proof}
		From \eqref{eq:reduced-Vaidya},
		\begin{equation}
			C(w,r)-C_U(r)=-\frac{2(m(w)-m_U)}r,
			\qquad
			\mathcal V_\ell(w,r)-\mathcal V_{\ell,U}(r)
			=\frac{2(m(w)-m_U)}{r^3}.
			\label{eq:explicit-coefficient-difference}
		\end{equation}
		In the characteristic $(w,r)$ chart, all derivatives taken at fixed $w$ of
		the displayed differences are bounded by a constant times
		$|m(w)-m_U|$ when $r\geq r_{\min}$.  Passing to the spacelike coordinate
		$t=w-\epsilon r$ makes $w=t+\epsilon r$; spatially differentiating the
		coefficients therefore also produces derivatives of $m$.  Through Sobolev
		order $s$ these are bounded by $d_{U,s}(t)$.  Multiplication by the resulting
		coefficients and one spatial derivative map $X^{s+1}$ continuously to $X^s$.  The angular term
		$\ell(\ell+1)/r^2$ cancels in the difference.  Equivalently, before harmonic
		decomposition the rotations commute with both operators, so the same constant
		works after summing all angular modes.
	\end{proof}
	
	Define the slice-wise coefficient variations
	\begin{align}
		a_{U,s}(L)&:=\int_{t_-(U)}^{t_+(U)}d_{U,s}(t)\,\dd t,
		&v_U(L)&:=\int_{t_-(U)}^{t_+(U)}
		\sup_{x\in K_t}|m'(w(x))|\,\dd t,
		\label{eq:window-variations}\\
		\mathfrak v_{U,s}(L)&:=\int_{t_-(U)}^{t_+(U)}\mathfrak m_s(t)\,\dd t.
		\label{eq:window-high-variation}
	\end{align}
	If the two selected horizon cross-sections have advanced coordinates
	$v_-(U,L)<v_+(U,L)$, define separately
	\begin{equation}
		\Delta v_U(L):=v_+(U,L)-v_-(U,L),\qquad
		b_U(L):=\int_{v_-}^{v_+}
		|\kappa_{\mathrm{lin}}(v)-\kappa_U|\,\dd v.
		\label{eq:window-kappa-variation}
	\end{equation}
	The distinction is essential: $w$ is characteristic and is not the Cauchy
	evolution parameter.  For every fixed regular window all the quantities in
	\eqref{eq:window-variations}--\eqref{eq:window-kappa-variation} are finite:
	the causal hull is compact, $m\in C^{s+2}$ there, and the integration
	intervals are compact.  Uniform bounds for a family of windows require the
	same uniform lower bounds for $r$ and $m$, uniform upper bounds for the
	coefficient derivatives, and a uniform choice of foliation.
	
	\begin{proposition}[Vaidya--Schwarzschild propagator bound]
		\label{prop:Duhamel}
		Let $\mathcal U(t,s)$ and $\mathcal U_U(t,s)$ be the dynamical and frozen
		evolution families.  Define the finite stability factor
		\begin{equation}
			G_U(L):=\sup_{t\in[t_-(U),t_+(U)]}
			\|\mathcal U(t_-,t)\|_{X^s\to X^s}
			\|\mathcal U_U(t,t_+)\|_{X^{s+1}\to X^{s+1}}.
			\label{eq:stability-factor}
		\end{equation}
		On the above window,
		\begin{equation}
			\|\mathcal U(t_-,t_+)-\mathcal U_U(t_-,t_+)\|_{X^{s+1}\to X^s}
			\leq C_A G_U(L)a_{U,s}(L).
			\label{eq:Duhamel-bound}
		\end{equation}
		The compact-slab estimate gives the explicit, though generally non-decaying,
		upper bound $G_U(L)\leq C\exp(C L+C\mathfrak v_{U,s}(L))$: the dynamical
		factor acts on $X^s$, while the $X^{s+1}$ factor is the smooth frozen
		Schwarzschild evolution.
	\end{proposition}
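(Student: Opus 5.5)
The plan is the standard Duhamel (variation-of-parameters) argument for two evolution families on the common energy scale, using Lemma~\ref{lem:coefficient-difference} for the generator difference and Proposition~\ref{prop:compact-evolution} for the stability factor. Both families act on Cauchy data on the $t$-slices of the same causal slab. In the common $(w,r,\nu)$ chart they share the supports $K_t$ and the identification of Sobolev spaces, so their difference can be compared directly.

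First I would prove the Duhamel identity on smooth compactly supported data $\Psi\in X^{s+1}(\Sigma_{t_+})$. Consider the map
\[
t\mapsto \mathcal U(t_-,t)\,\mathcal U_U(t,t_+)\Psi .
\]
It equals $\mathcal U(t_-,t_+)\Psi$ at $t=t_+$ and $\mathcal U_U(t_-,t_+)\Psi$ at $t=t_-$. Differentiating in $t$ uses
\[
\partial_t\,\mathcal U_U(t,t_+)=A_U\,\mathcal U_U(t,t_+),
\qquad
\partial_t\,\mathcal U(t_-,t)=-\mathcal U(t_-,t)A(t),
\]
and gives
\begin{equation*}
	\mathcal U(t_-,t_+)\Psi-\mathcal U_U(t_-,t_+)\Psi
	=\int_{t_-}^{t_+}\mathcal U(t_-,t)\bigl(A(t)-A_U\bigr)\mathcal U_U(t,t_+)\Psi\,\dd t ,
\end{equation*}
up to the overall sign fixed by the backward orientation. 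To justify the differentiation, I would work with smooth data, for which both evolutions are classical solutions by Proposition~\ref{prop:compact-evolution}. I would also note that finite propagation keeps all supports inside $J_D(K_1)$, so no lateral boundary terms arise.

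Next I estimate the integrand factor by factor. The frozen factor $\mathcal U_U(t,t_+)$ is bounded on $X^{s+1}$, since it is Schwarzschild evolution with coefficients in every $C^k$. Lemma~\ref{lem:coefficient-difference} sends $X^{s+1}\to X^s$ with norm at most $C_A d_{U,s}(t)$. The dynamical factor $\mathcal U(t_-,t)$ is bounded on $X^s$. The product of the two evolution norms is at most $G_U(L)$. Taking $X^s$ norms inside the Bochner integral and using $\int d_{U,s}\,\dd t=a_{U,s}(L)$ then yields \eqref{eq:Duhamel-bound} on a dense subspace. The estimate extends to all of $X^{s+1}$ by density, because both sides are bounded operators.

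Finally, for the bound on $G_U(L)$, I would apply \eqref{eq:compact-evolution-bound} to each factor. The dynamical factor on $X^s$ contributes $C\exp(CL+C\mathfrak v_{U,s}(L))$. The frozen factor on $X^{s+1}$ contributes $C e^{CL}$, since $\mathfrak m$ vanishes for constant mass; this requires applying Proposition~\ref{prop:compact-evolution} at order $s+1$, which is legitimate for the smooth Schwarzschild coefficients. Multiplying the two factors gives the stated estimate.

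I expect the main obstacle to be the rigorous justification of the derivative of the composed family in the presence of the loss of one derivative. One must check that $A(t)-A_U$, which maps $X^{s+1}\to X^s$, acts on data that stay in $X^{s+1}$, and that $\mathcal U(t_-,t)$ is strongly differentiable on $X^{s+1}$ with values in $X^s$. This is exactly why the frozen evolution is placed on the right, acting on the smoother space. I would handle it through the smooth-data density argument, together with the $C^{s+2}$ regularity of $m$, which guarantees that the generator $A(t)$ is strongly continuous from $X^{s+1}$ to $X^s$.
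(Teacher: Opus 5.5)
Your proposal is correct and follows the paper's own argument. You differentiate $\mathcal U(t_-,t)\,\mathcal U_U(t,t_+)$ to get the exact Duhamel identity, insert Lemma~\ref{lem:coefficient-difference}, and integrate against the stability factor; the bound on $G_U(L)$ then comes from Proposition~\ref{prop:compact-evolution}, applied at order $s$ to the dynamical factor and at order $s+1$ to the frozen factor, where $\mathfrak m$ vanishes. Your remarks on the density argument and on keeping the frozen evolution on the smoother space supply regularity details that the paper leaves implicit.
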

	
	\begin{proof}
		Differentiate
		$\mathcal U(t_-,t)\mathcal U_U(t,t_+)$ and integrate from $t_-$ to $t_+$.
		This gives the exact Duhamel identity
		\begin{equation}
			\mathcal U(t_-,t_+)-\mathcal U_U(t_-,t_+)
			=\int_{t_-}^{t_+}\mathcal U(t_-,t)
			(A(t)-A_U)\mathcal U_U(t,t_+)\,\dd t,
		\end{equation}
		up to the harmless sign convention for generators.  Insert
		\eqref{eq:generator-difference}, take the two propagator suprema, and integrate
		to obtain
		\eqref{eq:Duhamel-bound}.
	\end{proof}
	
	In particular, $G_U(L)<\infty$ on every fixed regular slab.  No smallness of
	$G_U$ is assumed.  The perturbative quantity in
	\eqref{eq:Duhamel-bound} is the product $G_U(L)a_{U,s}(L)$; a late-time
	argument must control this product rather than either factor in isolation.
	
	\subsection{The quantitative dynamical propagation estimate}
	
	For each detector smearing $h_U$, define its final Cauchy data by
	\begin{equation}
		F_U:=\Gamma_{\Sigma_{t_+(U)}}E_gh_U.
		\label{eq:detector-final-data}
	\end{equation}
	The common chart and the equivalence of the dynamical and frozen energy norms
	on the compact final slice allow the same $F_U$ to be used as initial data for
	the frozen Schwarzschild evolution.  Thus the comparison below evolves
	identical data backwards with two different wave operators; it does not
	identify $E_gh_U$ with $E_{g_U}h_U$.
	
	The stationary input can then be made exact at the boundary-energy level.  Let
	\begin{equation}
		\mathcal R_M:\mathcal H_E(M)\longrightarrow
		L^2(\mathbb R_s\!\times\mathbb S^2)_{\HH}
		\oplus L^2(\mathbb R_s\!\times\mathbb S^2)_{\scrI}
		\label{eq:Sch-radiation-map}
	\end{equation}
	be the Schwarzschild trace map written in terms of the null derivatives of
	the conformally rescaled field, with the conventional factor $2M$ absorbed
	into the horizon norm.  Nicolas proves that this trace map is an isometry
	\emph{onto} the two boundary energy spaces by solving the Goursat problem
	\cite{Nicolas2016}.  Baskin and Wang independently prove the radiation-field
	energy identity, but explicitly do not characterise the range of their map;
	their result alone would not justify the inverse used below
	\cite{BaskinWang2014}.  Choose a bounded multiplication operator $\Pi_L$ on
	the two $L^2$ radiation variables, equal to one for $|s|\leq L$ and zero for
	$|s|\geq L+1$, with a fixed smooth transition.  Thus
	$1-\Pi_L$ is supported where $|s|\geq L$.  For
	frozen data $F_U$ put
	\begin{equation}
		\varepsilon^{\mathrm{Sch}}_{U}(L)
		:=\|(1-\Pi_L)\mathcal R_{m_U}F_U\|_{L^2\oplus L^2}.
		\label{eq:exact-Sch-tail}
	\end{equation}
	
	\begin{proposition}[Exact stationary channel tail]
		\label{prop:stationary-tail}
		The frozen backwards solution has the exact decomposition
		\begin{equation}
			f^{\mathrm{fr}}_{U,L}
			=f^{\mathrm{hor,fr}}_{U,L}+f^{\mathrm{out,fr}}_{U,L}
			+r^{\mathrm{fr}}_{U,L},
			\qquad
			\|r^{\mathrm{fr}}_{U,L}\|_{\mathcal H_E}
			=\varepsilon^{\mathrm{Sch}}_U(L).
			\label{eq:stationary-finite-remainder}
		\end{equation}
		If $j\geq1$ and the weighted radiation norm is finite, then
		\begin{equation}
			\varepsilon^{\mathrm{Sch}}_U(L)
			\leq \langle\kappa_UL\rangle^{-j}
			\|\langle\kappa_Us\rangle^j\mathcal R_{m_U}F_U\|_{L^2\oplus L^2}.
			\label{eq:Sch-moment-tail}
		\end{equation}
		The same statements hold after commuting with time translations and rotations,
		and hence in the corresponding order-$s$ radiation energy.
	\end{proposition}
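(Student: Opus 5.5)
We will obtain the decomposition by working entirely on the boundary side, using the Nicolas trace map $\mathcal R_{m_U}$ as a unitary identification. Since $\mathcal R_{m_U}:\mathcal H_E(m_U)\to L^2_{\HH}\oplus L^2_{\scrI}$ is an isometry onto \cite{Nicolas2016}, its inverse is also an isometry. Let $\Pi_L^{\HH}$ and $\Pi_L^{\scrI}$ be the restrictions of the cutoff multiplier $\Pi_L$ to the two summands. We then define
\[
f^{\mathrm{hor,fr}}_{U,L}:=\mathcal R_{m_U}^{-1}\bigl(\Pi_L^{\HH}\mathcal R_{m_U}F_U\oplus0\bigr),\qquad
f^{\mathrm{out,fr}}_{U,L}:=\mathcal R_{m_U}^{-1}\bigl(0\oplus\Pi_L^{\scrI}\mathcal R_{m_U}F_U\bigr),
\]
\[
r^{\mathrm{fr}}_{U,L}:=\mathcal R_{m_U}^{-1}(1-\Pi_L)\mathcal R_{m_U}F_U .
\]
Summing these three expressions and using linearity of $\mathcal R_{m_U}^{-1}$ gives $\mathcal R_{m_U}^{-1}\mathcal R_{m_U}F_U=F_U$ on the common slice. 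The frozen evolution is unique, so the identity holds for the corresponding solutions everywhere, and \eqref{eq:stationary-finite-remainder} follows as an exact decomposition. Applying the isometry property to $r^{\mathrm{fr}}_{U,L}$ gives
\[
\|r^{\mathrm{fr}}_{U,L}\|_{\mathcal H_E}=\|(1-\Pi_L)\mathcal R_{m_U}F_U\|,
\]
which is $\varepsilon^{\mathrm{Sch}}_U(L)$ by definition \eqref{eq:exact-Sch-tail}. One point needs checking here: $F_U$ must lie in $\mathcal H_E(m_U)$. The data are compactly supported on the final slice and the dynamical and frozen energy norms are equivalent there, so this holds.

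For \eqref{eq:Sch-moment-tail} we argue pointwise in the radiation variable. Since $0\le 1-\Pi_L\le1$ and $1-\Pi_L$ vanishes for $|s|<L$, we have on its support
\[
|1-\Pi_L(s)|\le \langle\kappa_U s\rangle^{j}\langle\kappa_U L\rangle^{-j},
\]
because $\langle\cdot\rangle$ is increasing in $|s|$. Multiplying this bound into the $L^2$ norm of each channel gives the inequality, provided the weighted norm on the right is finite. For this step we need $\Pi_L$ to take values in $[0,1]$, so we fix the smooth transition to be monotone.

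For the higher-order statement, we use that Schwarzschild time translation $\partial_t$ and the rotation generators are Killing. They therefore commute with the frozen evolution, and under $\mathcal R_{m_U}$ they act as $\partial_s$ and as rotations on the radiation fields. Rotations commute with $\Pi_L$ exactly. The translation derivative does not: the commutator $[\partial_s,\Pi_L]=\Pi_L'$ is a bounded multiplier supported in $L\le|s|\le L+1$. This commutator is the only real obstacle, since it means $\partial_s$ does not preserve the decomposition exactly.

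To handle it, we define the order-$s$ radiation energy as the sum over commuted fields $\partial_t^aR^b F_U$ of their boundary energies. The tail defined by the cutoffs then satisfies the same identity term by term. Each commutator $\Pi_L'$ is again supported in the tail region $|s|\ge L$, and we bound it by a constant times the same tail norm of a lower-order commuted field. The weighted estimate then repeats for each commuted field, with a constant depending on $\|\Pi_L'\|_\infty$ and therefore independent of $L$, $U$ and $\ell$.
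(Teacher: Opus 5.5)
Your proposal is correct and follows the paper's proof. The three pieces are $\mathcal R_{m_U}^{-1}$ applied to the horizon and null-infinity parts of $\Pi_L\mathcal R_{m_U}F_U$ and to $(1-\Pi_L)\mathcal R_{m_U}F_U$. Linearity and Nicolas's onto isometry then give the exact decomposition and the norm identity. The pointwise inequality $1\leq\langle\kappa_UL\rangle^{-j}\langle\kappa_Us\rangle^j$ on $\{|s|\geq L\}$ gives the moment bound, and commutation with the Killing fields gives the order-$s$ version.

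You are more careful than the paper on two points it leaves implicit: the requirement $0\leq\Pi_L\leq1$, and the commutator $[\partial_s,\Pi_L]=\Pi_L'$. One correction to your handling of that commutator. The term it produces is bounded by the sharp tail $\|\mathbf 1_{\{|s|\geq L\}}\,\cdot\,\|$, and hence by the weighted norm. It is not bounded by the $(1-\Pi_L)$-tail itself, because that cutoff vanishes at $|s|=L$ and $|\Pi_L'|/(1-\Pi_L)$ is unbounded there. Consequently the order-$s$ moment bound carries a constant depending on finitely many derivatives of the fixed transition profile, rather than the constant one.
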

	
	\begin{proof}
		Define the three terms by applying the inverse of
		\eqref{eq:Sch-radiation-map} to the horizon part of $\Pi_L\mathcal R F_U$,
		the null-infinity part, and $(1-\Pi_L)\mathcal R F_U$, respectively.
		Linearity gives the exact decomposition and the trace isometry gives the equality in
		\eqref{eq:stationary-finite-remainder}.  On the support of $1-\Pi_L$,
		$1\leq\langle\kappa_UL\rangle^{-j}
		\langle\kappa_Us\rangle^j$.  Multiplication and integration prove
		\eqref{eq:Sch-moment-tail}.  Commutation follows because the stationary wave
		operator commutes with Schwarzschild time translations and rotations.
	\end{proof}
	
	In Kruskal coordinate $\mathsf U=-e^{-\kappa_Us}$, translation of the
	horizon trace by the horizon-coordinate separation $\Delta v_U(L)$ is
	exactly dilation:
	\begin{equation}
		G_{\HH}(s-\Delta v_U(L),\nu)
		=G_{\HH}\!\left(-\kappa_U^{-1}
		\log(-\mathsf U/e^{-\kappa_U\Delta v_U(L)}),\nu\right).
		\label{eq:frozen-horizon-profile}
	\end{equation}
	This is an exact boundary statement.  It does not imply that
	$\mathcal R_M^{-1}(G_{\HH},0)$ is, on a finite interior Cauchy slice, a
	compactly supported function of the pointwise form
	$\psi(\rho/e^{-\kappa_U\Delta v_U(L)})$.  Establishing precisely that redshift
	localisation, with a norm estimate for the part outside a convex horizon
	neighbourhood, is the bridge required by the local Hadamard theorem.  It is
	isolated in Assumption~\ref{ass:localisation-bridge} below.
	
	As in the original Fredenhagen--Haag argument and in the KPV scaling theorem,
	we exclude the infrared zero mode.  Concretely, after expressing the
	localised horizon radiation variable in the Kruskal coordinate $\mathsf U$,
	we assume it is of the form $2\partial_{\mathsf U}p_U$ with
	$p_U\in C_0^\infty$; otherwise its nonzero integral must be retained as an
	additional infrared error.  Detector filters whose Fourier profile vanishes
	near zero frequency satisfy this condition.
	
	\begin{proposition}[Exact peeling coefficient and nonlinear remainder]
		\label{prop:peeling-remainder}
		In the advanced Vaidya metric, let $r_H$ be a positive solution of
		\eqref{eq:event-horizon-ode} on an interval $[v_-,v_+]$, put
		$\Delta v=v_+-v_-$ and $\rho=r-r_H$.  Then
		\begin{equation}
			\dot\rho=\kappa_{\mathrm{lin}}(v)\rho+N(v,\rho),
			\qquad
			\kappa_{\mathrm{lin}}(v)=\frac{m(v)}{r_H(v)^2},
			\qquad
			|N(v,\rho)|\leq\frac{m_1}{r_{\min}^3}|\rho|^2
			\label{eq:exact-peeling-equation}
		\end{equation}
		on every collar with $r_H+\rho\geq r_{\min}>0$.  The derivative at
		$\rho=0$ of the backwards ray map from $v_+$ to $v_-$ is
		\begin{equation}
			\lambda_{U,L}:=\exp\!\left[-\int_{v_-}^{v_+}
			\kappa_{\mathrm{lin}}(v)\,\dd v\right].
			\label{eq:dynamic-scaling-form}
		\end{equation}
		If $K=\sup_I|\kappa_{\mathrm{lin}}|$, the ray remains in the collar and
		$2(m_1/r_{\min}^3)\Delta v e^{K\Delta v}|\rho(v_+)|\leq1$, then
		\begin{equation}
			|\rho_{\mathrm{exact}}(v_-)-\lambda_{U,L}\rho(v_+)|
			\leq \frac{4m_1}{r_{\min}^3}\Delta v
			e^{3K\Delta v}|\rho(v_+)|^2.
			\label{eq:nonlinear-peeling-bound}
		\end{equation}
	\end{proposition}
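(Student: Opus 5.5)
The plan is to derive \eqref{eq:exact-peeling-equation} by subtracting the horizon ODE from the ray equation, to read off $\lambda_{U,L}$ from the variational equation, and to bound the quadratic remainder with a Riccati comparison followed by variation of constants.

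\emph{Step 1: the exact equation and $N$.} An outgoing radial null ray in \eqref{eq:ingoing-Vaidya} satisfies $\dot r=\frac12\bigl(1-2m(v)/r\bigr)$, which is the same ODE \eqref{eq:event-horizon-ode} obeyed by $r_H$. Subtracting gives
\[
\dot\rho=m(v)\Bigl(\frac1{r_H}-\frac1{r_H+\rho}\Bigr)=\frac{m\rho}{r_H(r_H+\rho)} .
\]
I then split off the linear part using $\frac{1}{r_H(r_H+\rho)}=\frac1{r_H^2}-\frac{\rho}{r_H^2(r_H+\rho)}$. This yields $\kappa_{\mathrm{lin}}=m/r_H^2$ and $N=-m\rho^2/\bigl(r_H^2(r_H+\rho)\bigr)$. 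In the collar both $r_H$ (the point $\rho=0$) and $r_H+\rho$ are at least $r_{\min}$, and $m\le m_1$. Hence $|N|\le (m_1/r_{\min}^3)\rho^2$, and \eqref{eq:exact-peeling-equation} is exact.

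\emph{Step 2: the linear coefficient.} Since $N=O(\rho^2)$ and $N$ is smooth in $\rho$, the variational equation along $\rho\equiv0$ is $\dot\eta=\kappa_{\mathrm{lin}}\eta$. The forward derivative of the flow from $v_-$ to $v_+$ is therefore $\exp\int_{v_-}^{v_+}\kappa_{\mathrm{lin}}$, and the backward ray map is its inverse, which is $\lambda_{U,L}$ as in \eqref{eq:dynamic-scaling-form}.

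\emph{Step 3: a priori bound on the backward ray.} Write $c=m_1/r_{\min}^3$ and $\rho_+=\rho(v_+)$. Running time backwards, $|\rho|$ satisfies the differential inequality $-\frac{d}{dv}|\rho|\le K|\rho|+c|\rho|^2$. By comparison with the Riccati equation $y'=Ky+cy^2$ with $y(0)=|\rho_+|$, whose solution is explicit, I obtain
\[
|\rho(v)|\le \frac{e^{K(v_+-v)}|\rho_+|}{1-c|\rho_+|\,(e^{K\Delta v}-1)/K}.
\]
Then $(e^{K\Delta v}-1)/K\le \Delta v\,e^{K\Delta v}$ together with the hypothesis $2c\Delta v e^{K\Delta v}|\rho_+|\le1$ gives $|\rho(v)|\le 2e^{K\Delta v}|\rho_+|$ on $[v_-,v_+]$. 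The standing assumption that the ray stays in the collar justifies using the bound on $N$ throughout.

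\emph{Step 4: the remainder estimate.} Variation of constants gives
\[
\rho(v_-)-\lambda_{U,L}\rho_+=-\int_{v_-}^{v_+}e^{-\int_{v_-}^{v'}\kappa_{\mathrm{lin}}}\,N(v',\rho(v'))\,\dd v'.
\]
The exponential factor is at most $e^{K\Delta v}$, and $|N|\le c\cdot 4e^{2K\Delta v}\rho_+^2$ by Step 3. Integrating over an interval of length $\Delta v$ yields $4c\,\Delta v\,e^{3K\Delta v}\rho_+^2$, which is \eqref{eq:nonlinear-peeling-bound}.

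The only delicate step is Step 3. A naive bootstrap would need the stronger smallness condition $4c\Delta v e^{2K\Delta v}|\rho_+|<1$. The explicit Riccati comparison is what matches the stated hypothesis exactly, so that is the route to follow.
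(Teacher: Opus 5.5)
Your proposal is correct and follows the same route as the paper. You split off $\kappa_{\mathrm{lin}}=m/r_H^2$, read $\lambda_{U,L}$ off the variational equation, prove the a priori bound $|\rho(v)|\le 2e^{K\Delta v}|\rho(v_+)|$, and finish by variation of constants with the propagator bounded by $e^{K\Delta v}$. The differences are minor: you use the exact identity $N=-m\rho^2/(r_H^2(r_H+\rho))$ where the paper uses Taylor's formula, and your explicit Riccati comparison replaces the paper's one-line ``bootstrap'' and shows why the stated hypothesis $2(m_1/r_{\min}^3)\Delta v\,e^{K\Delta v}|\rho(v_+)|\le 1$ gives exactly the constant $2$.
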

	
	\begin{proof}
		Write $F(v,r)=\frac12(1-2m(v)/r)$.  Taylor's formula at $r_H$ gives
		$F(v,r_H+\rho)-F(v,r_H)=F_r(v,r_H)\rho+N$, where
		$F_r=m/r^2$ and $|F_{rr}|/2=m/r^3$.  This proves
		\eqref{eq:exact-peeling-equation}.  The variational equation at $\rho=0$ is
		$J'=\kappa_{\mathrm{lin}}J$; integration backwards gives
		\eqref{eq:dynamic-scaling-form}.  A bootstrap gives
		$|\rho(v)|\leq2e^{K\Delta v}|\rho(v_+)|$ under the stated smallness condition.
		Variation of constants, the quadratic bound for $N$, and
		$|J(v,s)|\leq e^{K|v-s|}$ then give
		\eqref{eq:nonlinear-peeling-bound}.
	\end{proof}
	For the retarded Vaidya form the relevant family of radial null curves and
	the orientation of the ray map are different; one must repeat this elementary
	linearisation with the appropriate null equation.  Proposition
	\ref{prop:peeling-remainder} is not a sign-independent formula for both
	Vaidya orientations.  Moreover $\kappa_{\mathrm{lin}}(v)$ is a local
	linearisation rate in the advanced coordinate.  It is not automatically the
	asymptotic ray-tracing function $-P''(U)/P'(U)$ defined below; relating them
	also requires propagation through the far zone.
	The logarithmic ratio satisfies
	\begin{equation}
		\left|\log\frac{\lambda_{U,L}}
		{e^{-\kappa_U\Delta v_U(L)}}\right|
		\leq b_U(L).
		\label{eq:lambda-ratio-bound}
	\end{equation}
	
	\begin{lemma}[Detector and stationary reconstruction constants]
		\label{lem:detector-reconstruction-constants}
		Let $\mathcal O_U\Subset D_{U,L}$ contain the support of the detector
		smearing.  The map which assigns final Cauchy data to the detector,
		\begin{equation}
			\mathcal T_{D,U}:H^{s+k}_0(\mathcal O_U)\longrightarrow X^{s+1},
			\qquad
			\mathcal T_{D,U}h
			:=\Gamma_{\Sigma_{t_+(U)}}E_gh,
			\label{eq:detector-data-map}
		\end{equation}
		is bounded.  Define
		\begin{equation}
			C_D(U):=\|\mathcal T_{D,U}\|_{H^{s+k}_0\to X^{s+1}}<\infty.
			\label{eq:detector-constant}
		\end{equation}
		Let $\mathcal J_{U,s}$ denote restriction of frozen Schwarzschild data to the
		initial finite slice, followed by the fixed identification with $X^s$.  On
		the order-$s$ commuted radiation space, the reconstruction map
		\begin{equation}
			\mathcal K_{R,U}
			:=\mathcal J_{U,s}\mathcal R_{m_U}^{-1}
			\label{eq:reconstruction-map}
		\end{equation}
		is bounded, and we set
		\begin{equation}
			C_R(U):=\|\mathcal K_{R,U}\|_{\mathrm{rad},s\to X^s}<\infty.
			\label{eq:reconstruction-constant}
		\end{equation}
		For detector regions, masses and finite-slice geometries in fixed bounded
		families, with supports contained in one compact causal tube, the suprema of
		$C_D(U)$ and $C_R(U)$ are finite.  We denote these uniform suprema by $C_D$
		and $C_R$.  When only one window is considered, the same symbols below may
		denote the corresponding fixed-window norms $C_D(U)$ and $C_R(U)$.
	\end{lemma}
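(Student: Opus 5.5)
The lemma makes three claims: $\mathcal T_{D,U}$ is bounded, $\mathcal K_{R,U}$ is bounded, and both norms are uniformly bounded over bounded families. I would prove them in that order, keeping the two maps separate because they rest on different inputs.

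\textbf{Step 1: the detector map.} I split $\mathcal T_{D,U}$ into a spacetime-to-data step followed by Cauchy evolution. Choose a Cauchy level $t_0$ of the slab lying strictly to the past of $\mathcal O_U$. Since $E_g h=E^-h-E^+h$, and only one of the two terms is nonzero on either side of $\supp h$, the data $\Gamma_{\Sigma_{t_+}}E_gh$ agree up to sign with the data of a single retarded/advanced solution of $Pf=h$. I then apply the inhomogeneous version of the multiplier identity \eqref{eq:multiplier-identity}: the term $(\Box\Phi-\mu^2\Phi)X\Phi$ becomes $hX\Phi$. Commuting with coordinate derivatives through order $s+1$ gives, by Cauchy--Schwarz and Gronwall on the compact slab,
\[
\|\Gamma_{\Sigma_t}E^\pm h\|_{X^{s+1}}\le C\int\|h(t')\|_{H^{s+1}(\Sigma_{t'})}\,\dd t'\le C'\|h\|_{H^{s+k}_0(\mathcal O_U)}
\qquad (k\ge0).
\]
The constants depend only on the bounds used in Proposition~\ref{prop:compact-evolution}. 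Finite propagation keeps the supports inside $J_D(\overline{\mathcal O_U})$, which is compact and lies in $r\ge r_{\min}$. This gives $C_D(U)<\infty$.

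\textbf{Step 2: the reconstruction map.} By Nicolas's theorem \cite{Nicolas2016}, $\mathcal R_{m_U}$ is an isometric isomorphism onto the radiation space, so $\mathcal R_{m_U}^{-1}$ is an isometry into $\mathcal H_E(m_U)$. The map $\mathcal J_{U,s}$ then restricts this finite-energy frozen solution to the compact initial slice $K_{t_-}$. At order $s=0$, the Schwarzschild energy density is equivalent to the $X^0$ density on $\{r\ge r_{\min}\}$, with constants depending on $r_{\min}$, $m_0$, $m_1$; this bounds $\mathcal J_{U,0}$. For higher $s$ I commute with $\partial_t$ and the rotation generators, as in the last sentence of Proposition~\ref{prop:stationary-tail}, so that the commuted radiation norm controls the energies of $T^a\Omega^b\phi$. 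Elliptic regularity for the spatial part of the stationary operator on the compact region $r\ge r_{\min}>2m_U$ then recovers all spatial derivatives. The condition $r_{\min}>2m_U$ holds because the slice stays in the exterior collar. Together these give $C_R(U)<\infty$.

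\textbf{Step 3: uniformity, and the expected obstacle.} Every constant above depends only on $s$, $k$, $r_{\min}$, $m_0$, $m_1$, the slab length, and the coefficient seminorms along one fixed compact causal tube. Taking the supremum over the bounded family therefore gives finite $C_D$ and $C_R$. The main obstacle is Step 2 near the horizon. If the initial slice meets $\HH^+$, the stationary operator degenerates there ($\partial_t$ becomes null), and the elliptic step fails. In that case I would replace $\partial_t$ by the Dafermos--Rodnianski redshift vector field as the commutator, giving control of the nondegenerate energy on the slice with constants uniform in $m_U\in[m_0,m_1]$. If that is not available, I would restrict $\mathcal J_{U,s}$ to the part of the slice with $r\ge r_{\min}>2m_1$.
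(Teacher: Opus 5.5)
Your Steps 1 and 2 follow the paper's own proof. For $C_D$ you use an inhomogeneous energy estimate on the compact slab, finite propagation, and the trace on $\Sigma_{t_+(U)}$. For $C_R$ you use Nicolas's onto isometry, followed by restriction and identification of the commuted canonical energy with $X^s$. You also supply details the paper omits: the reduction to a single advanced or retarded solution, and commutation with $T$ and rotations plus elliptic regularity.

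One small correction in Step~1. The bound $\int\|h(t')\|_{H^{s+1}}\,\dd t'\le C'\|h\|_{H^{s+k}}$ requires $k\ge1$, and $k=0$ genuinely fails. Take a source $h_\lambda=\chi e^{i\lambda\varphi}$ with $\dd\varphi$ null. It produces a resonant solution $\approx\lambda^{-1}a\,e^{i\lambda\varphi}$ whose $X^{s+1}$-norm grows like $\lambda^{s+1}$, while $\|h_\lambda\|_{H^s}\sim\lambda^s$.

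The real gap is at the horizon in Step~2, and the repair you propose in Step~3 goes the wrong way in time.

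\begin{itemize}
\item The inference ``$r_{\min}>2m_U$ because the slice stays in the exterior collar'' is not valid for the detector--horizon windows used here, whose slab contains a horizon cross-section. A collar of $\{r=2m_U\}$ has no positive margin above $2m_U$.
\item There the identification fails already at $s=0$, not only in the elliptic step. The $T$-energy density controls the nondegenerate $X^0$ density only with the factor $(1-2m_U/r)^{-1}$.
\item A redshift commutator cannot help. $\mathcal K_{R,U}$ reconstructs \emph{backwards} from $\HH^+\cup\scrI^+$, whereas the redshift estimate bounds nondegenerate energy to the \emph{future} by degenerate energy. Backwards it acts as a blueshift.
\end{itemize}

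Concretely, take $\scrI^+$ radiation data $G'(s-S)$ with $G'$ fixed and zero horizon component. Its radiation norm is independent of $S$. Its reconstruction on the initial slice contains a transmitted packet, with amplitude set by the greybody factor, located at $r-2m_U\sim e^{-\kappa_U S}$; this is the compression \eqref{eq:FH-dilation} itself. The tangential energy of that packet is of order $\int (r-2m_U)^{-1}|G'(u-S)|^2\,\dd u\sim e^{\kappa_U S}$. Consequently:
\begin{itemize}
\item $\mathcal K_{R,U}$ is unbounded into $X^s$ on any slice region reaching the frozen horizon.
\item Points with $r<2m_U$ are not determined by exterior scattering data at all.
\item On a region at distance $\delta$ from $r=2m_U$, the norm grows like a negative power of $\delta$ ($\delta^{-1/2}$ at $s=0$).
\end{itemize}

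What your argument does prove is your second fallback. $C_R(U)<\infty$ holds when the compact initial region lies in the open frozen exterior. A finite uniform $C_R$ holds only when $\min_{K_{t_-}}r-2m_U$ is bounded below along the family. That margin is not bounded below if the initial region contains the causal footprint of a late detector, since that footprint approaches the horizon like $e^{-\kappa_U L}$.

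The paper's proof simply asserts that the identification of the commuted canonical energy with $X^s$ is bounded. It therefore tacitly needs the same separation hypothesis, or a degenerate $T$-weighted norm near the horizon. You should state that hypothesis explicitly and drop the redshift step, rather than claim the lemma for slices that reach or cross the horizon.
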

	
	\begin{proof}
		The inhomogeneous energy estimate for $\Box_g u=h$, followed by the trace
		theorem on $\Sigma_{t_+(U)}$, gives
		$\|\Gamma_{\Sigma_{t_+}}E_gh\|_{X^{s+1}}
		\leq C_D(U)\|h\|_{H^{s+k}}$; finite propagation confines the estimate to the
		compact causal hull of $\mathcal O_U$.  Smooth coordinate changes and norm
		equivalence on that hull make the constant finite.  For reconstruction,
		Proposition~\ref{prop:stationary-tail} uses the onto Schwarzschild trace map
		$\mathcal R_{m_U}$, whose inverse is an isometry in the canonical radiation
		energy.  Restriction to a fixed finite slice and identification of the
		commuted canonical energy with $X^s$ are bounded, giving
		\eqref{eq:reconstruction-constant}.  Standard uniform energy and trace
		estimates give the final assertion on a uniformly bounded geometric family.
	\end{proof}
	
	\begin{theorem}[Quantitative finite-window propagation comparison]
		\label{thm:target}
		Under Assumption~\ref{ass:causal-domain}, suppose the causal window obeys the bounds of
		Lemma~\ref{lem:coefficient-difference}.  Let $h_U$ range over a
		bounded subset of $H^{s+k}_0$ in the detector region whose frozen radiation
		fields have the weighted norm in \eqref{eq:Sch-moment-tail}.  Then its exact
		backwards Cauchy data on the initial end of the window decompose as
		\begin{equation}
			f_U=f^{\mathrm{hor,fr}}_{U,L}+f^{\mathrm{out,fr}}_{U,L}+r_{U,L},
			\label{eq:proved-dynamical-decomposition}
		\end{equation}
		and
		\begin{equation}
			\|r_{U,L}\|_{X^s}
			\leq \mathcal B_U^{\mathrm{PDE}}(L;h_U),
			\label{eq:target-propagation-bound}
		\end{equation}
		where
		\begin{align}
			\mathcal B_U^{\mathrm{PDE}}(L;h_U)={}&
			C_D C_A G_U(L)a_{U,s}(L)\|h_U\|_{H^{s+k}}
			+C_R\varepsilon^{\mathrm{Sch}}_U(L).
			\label{eq:explicit-propagation-error}
		\end{align}
		The constants $C_D$ and $C_R$ have the fixed-window or uniform meaning
		specified in Lemma~\ref{lem:detector-reconstruction-constants}.  Uniformity
		holds when the window geometry, detector family and mass remain in the
		bounded families specified there.
	\end{theorem}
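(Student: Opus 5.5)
The plan is to construct the decomposition \eqref{eq:proved-dynamical-decomposition} by first evolving the final data with the frozen Schwarzschild propagator, splitting that frozen solution exactly, and then absorbing the discrepancy between the dynamical and frozen evolutions into the remainder. Let $F_U=\mathcal T_{D,U}h_U$ be the final Cauchy data from \eqref{eq:detector-final-data}. By Lemma~\ref{lem:detector-reconstruction-constants},
\[
\|F_U\|_{X^{s+1}}\leq C_D\|h_U\|_{H^{s+k}}.
\]
The exact backwards data of $f_U$ on the initial slice are $\mathcal U(t_-,t_+)F_U$. The frozen backwards data are $\mathcal U_U(t_-,t_+)F_U$. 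We write
\[
\mathcal U(t_-,t_+)F_U=\mathcal U_U(t_-,t_+)F_U+\bigl(\mathcal U(t_-,t_+)-\mathcal U_U(t_-,t_+)\bigr)F_U .
\]

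The first term is handled with Proposition~\ref{prop:stationary-tail}. We apply the radiation map $\mathcal R_{m_U}$ to $F_U$ and split the result with the cutoff $\Pi_L$. We then define $f^{\mathrm{hor,fr}}_{U,L}$ and $f^{\mathrm{out,fr}}_{U,L}$ as $\mathcal K_{R,U}$ applied to the horizon and $\scrI$ components of $\Pi_L\mathcal R_{m_U}F_U$. By linearity and the exact isometric inverse, the frozen initial data equal these two channel pieces plus $\mathcal K_{R,U}(1-\Pi_L)\mathcal R_{m_U}F_U$.

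By \eqref{eq:reconstruction-constant}, this frozen tail has $X^s$ norm at most $C_R\varepsilon^{\mathrm{Sch}}_U(L)$. Two points need care here. First, $\mathcal K_{R,U}$ includes the restriction to the initial finite slice $\Sigma_{t_-(U)}$. Second, the order-$s$ commuted radiation norm must be the one in which the tail is measured; the commutation clause of Proposition~\ref{prop:stationary-tail} supplies this. The weighted-norm hypothesis on $h_U$ is used only to make the tail finite and, via \eqref{eq:Sch-moment-tail}, quantitatively decaying.

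The second term is the Duhamel difference. Proposition~\ref{prop:Duhamel} bounds it in $X^{s+1}\to X^s$ by $C_AG_U(L)a_{U,s}(L)$, and this applies because the window satisfies Lemma~\ref{lem:coefficient-difference}. Combined with the bound on $\|F_U\|_{X^{s+1}}$, it contributes at most $C_DC_AG_U(L)a_{U,s}(L)\|h_U\|_{H^{s+k}}$.

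We set $r_{U,L}$ equal to the frozen tail plus the Duhamel difference. The decomposition \eqref{eq:proved-dynamical-decomposition} is then an identity, and the triangle inequality gives \eqref{eq:target-propagation-bound} with exactly \eqref{eq:explicit-propagation-error}. Assumption~\ref{ass:causal-domain} together with finite propagation ensures that both evolutions stay in the compact tube where the common chart identifies the energy spaces. Uniformity in $U$ then follows from the uniform meaning of $C_D$, $C_R$, and $C_A$ in the lemmas, and $C_A$ is independent of $\ell$.

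The main obstacle I expect is the norm bookkeeping between $X^{s+1}$, $X^s$, and the radiation spaces. The Duhamel bound loses one derivative, so the detector map must land in $X^{s+1}$; this is what the choice of index $k$ is for. We also need the frozen evolution to be bounded on $X^{s+1}$ inside $G_U$. In addition, the reconstruction $\mathcal K_{R,U}$ lands on the same initial slice where the dynamical data are compared, which requires the common chart and the norm equivalence of Definition~\ref{def:energy-spaces}.

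I would first verify that $\mathcal R_{m_U}F_U$ lies in the commuted radiation space, using the frozen energy at order $s+1\geq s$. I would then check that all three pieces are compared in one fixed $X^s$ on $\Sigma_{t_-(U)}$. No interior redshift localisation is claimed here, so the bridge defect does not enter.
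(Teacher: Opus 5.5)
Your proposal is correct and follows the paper's own argument. You write the exact backward data as the frozen evolution of $F_U$ plus the Duhamel difference, split the frozen part exactly through $\mathcal R_{m_U}$ and the cutoff $\Pi_L$ as in Proposition~\ref{prop:stationary-tail}, and bound the two pieces of $r_{U,L}$ by Proposition~\ref{prop:Duhamel} with $C_D$ and by $C_R\varepsilon^{\mathrm{Sch}}_U(L)$ respectively; your remark that $\varepsilon^{\mathrm{Sch}}_U(L)$ must be read in the order-$s$ commuted radiation norm for the $C_R$ estimate to apply is precisely what the paper leaves implicit in the commutation clause of Proposition~\ref{prop:stationary-tail}.
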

	
	\begin{proof}
		Use the frozen decomposition \eqref{eq:stationary-finite-remainder}.
		Proposition~\ref{prop:Duhamel} bounds the difference between the exact and
		frozen Cauchy data by the first term in
		\eqref{eq:explicit-propagation-error}, using
		\eqref{eq:detector-constant}.  Add the stationary remainder
		from \eqref{eq:stationary-finite-remainder}.  Defining $r_{U,L}$ as the sum of
		these two differences and applying \eqref{eq:reconstruction-constant} gives
		the exact identity and the stated bound.
	\end{proof}
	
	The theorem supplies a hard finite-window PDE bound: its first term measures frequency
	mixing caused by the time-dependent wave operator and its second term is the
	quantitatively decaying Schwarzschild boundary tail.  It deliberately makes
	no claim that the horizon trace has already become a local squeezed bulk test
	function.  Null peeling and that localisation enter next.
	
	\section{Certified detector bound}
	\label{sec:certified-bound}
	\label{sec:conditional-result}
	
	Let $q_U$ be the pullback of $W_2$ to the Cauchy data on
	$\Sigma_{t_-(U)}$.
	
	\begin{proposition}[Automatic compact-support state continuity]
		\label{prop:state-continuity}
		For every compact Cauchy-data region $K$ there are an integer $s_\omega$ and
		$C_{\omega,K}<\infty$ such that, for data supported in $K$,
		\begin{equation}
			q_U(r,r)\leq C_{\omega,K}\|r\|_{X^s}^2.
			\label{eq:state-continuity}
		\end{equation}
		One may choose $C_{\omega,K}$ uniformly for a family of Hadamard two-point
		functions bounded in the corresponding distribution seminorms and for a
		compact family of slices.
	\end{proposition}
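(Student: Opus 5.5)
The plan is to reduce the claim to the elementary fact that a distribution of finite order on a compact set is controlled by a finite Sobolev norm, and then to transfer this from spacetime test functions to Cauchy data via the causal propagator.

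\textbf{Step 1: from Cauchy data to spacetime test functions.} Fix a neighbourhood $\mathcal N$ of $\Sigma_{t_-(U)}$ with compact closure containing the development of $K$ over a short time interval, and choose a cutoff $\chi(t)$ with $\chi=0$ to the past and $\chi=1$ to the future of that interval. For Cauchy data $r=(r_0,r_1)$ supported in $K$, let $\phi_r$ be the corresponding solution and set $h_r:=P(\chi\phi_r)=[P,\chi]\phi_r$. The standard time-slice identity \eqref{eq:detector-timeslice} gives $E h_r=\pm\phi_r$. Consequently
\[
q_U(r,r)=W_2(\overline{h_r},h_r),
\]
up to the sign convention. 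Here $h_r$ is supported in a fixed compact set $\widetilde K\subset\mathcal N$, determined by $K$ and the finite-propagation hull. Since $[P,\chi]$ is a first-order operator with smooth compactly supported coefficients, the compact-slab energy estimate of Proposition~\ref{prop:compact-evolution}, together with Sobolev embedding, gives
\[
\|h_r\|_{C^{j}(\widetilde K)}\le C_j\|r\|_{X^{s}}
\]
whenever $s\ge j+2+1$, that is, whenever $s$ exceeds $j$ plus half the spatial dimension plus margin.

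\textbf{Step 2: finite order of $W_2$.} The two-point function $W_2$ is a distribution on $\D\times\D$. On the compact set $\widetilde K\times\widetilde K$ it therefore has some finite order $N$:
\[
|W_2(F)|\le C\sum_{|\alpha|\le N}\sup|\partial^\alpha F|
\]
for $F\in C_0^\infty(\widetilde K\times\widetilde K)$. Apply this to $F=\overline{h_r}\otimes h_r$ and use the Leibniz rule to obtain $|q_U(r,r)|\le C\|h_r\|_{C^N}^2$. Combined with Step~1, this yields \eqref{eq:state-continuity} with $s_\omega:=N+3$, and hence for every $s\ge s_\omega$. Positivity of $q_U$ is not needed here beyond the fact that the left-hand side is nonnegative. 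A sharper order $N$, hence a sharper $s_\omega$, follows from the Hadamard form \eqref{eq:Hadamard-form}, since $\sigma_\epsilon^{-1}$ and $\log\sigma_\epsilon$ have low order. This refinement is optional.

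\textbf{Step 3: uniformity.} The constant obtained is $C_{\omega,K}=C\cdot C_N^2$. The factor $C$ is exactly a distribution seminorm of $W_2$ on $\widetilde K\times\widetilde K$, so it is uniform over any family of two-point functions bounded in those seminorms. The factor $C_N$ depends on the energy and commutator constants, which are uniform over a compact family of slices by the uniformity statement in Proposition~\ref{prop:compact-evolution} and by choosing $\chi$ and $\widetilde K$ continuously in the slice parameter.

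\textbf{Main obstacle.} The delicate point is Step~1. One must verify that the map $r\mapsto h_r$ is bounded from $X^s$ into $C^N$ with support in a fixed compact set. This requires that the finite-propagation hull stays inside the regular region, which follows from the finite-window hypotheses of Proposition~\ref{prop:local-global-hyperbolicity}. It also requires that the loss of derivatives from the commutator and the Sobolev embedding is accounted for correctly. I would handle this by taking $\chi$ supported in a thin slab, so that Proposition~\ref{prop:compact-evolution} applies with $|t_2-t_1|$ bounded, and by simply choosing $s$ large enough to absorb the losses.
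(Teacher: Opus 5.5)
Your argument is correct, but it takes a different route from the paper's.

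The paper works directly on the Cauchy surface. Because the Hadamard wavefront set contains only null covectors, none of which is conormal to the spacelike $\Sigma\times\Sigma$, the pullback theorem restricts $W_2$ and its normal derivatives to four Cauchy-data kernels on $\Sigma\times\Sigma$. Finite order of these kernels on $K\times K$ and Sobolev embedding on the three-dimensional slice then give the bound. You go the other way and push the data into spacetime through $h_r=[P,\chi]\phi_r$ with $Eh_r=\pm\phi_r$. This gives $q_U(r,r)=W_2(\overline{h_r},h_r)$, the same time-slice identity the paper itself uses in the proof of Theorem~\ref{thm:certified-response}. You then apply finite order of $W_2$ on the fixed spacetime compact $\widetilde K\times\widetilde K$, with Proposition~\ref{prop:compact-evolution} and the field equation turning $\|r\|_{X^s}$ into a $C^N$ bound on $h_r$.

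The paper's route is shorter and stays entirely on the slice, with no cutoff or energy-estimate bookkeeping. However, it rests on microlocal input. Since the pullback is continuous only in H\"ormander's topology, not under plain $\mathcal D'$ bounds, its uniformity clause is naturally controlled by seminorms of the restricted Cauchy-data kernels.

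Your route needs no wavefront-set information at all. It works for any state whose two-point function is a distribution, and Hadamard regularity enters only if one wants an explicit $N$. Your constant is literally an order-$N$ seminorm of $W_2$ on a fixed spacetime compact, so the uniformity assertion for families bounded in ordinary distribution seminorms is immediate. The price is the derivative bookkeeping for $h_r$ and the requirement that the thin slab carrying $\nabla\chi$ stay inside the regular finite-propagation hull. You identify and handle both correctly; your count $s\geq j+3$ is in fact more than needed, since $s>j+3/2$ already suffices.
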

	
	\begin{proof}
		The Hadamard wavefront set has only null covectors.  Its pullback to a
		spacelike Cauchy surface, including either normal derivative, is therefore
		well-defined by the distributional pullback theorem.  Every distribution has
		finite order on a compact set, so each of the four Cauchy-data kernels is
		bounded by finitely many $C^N$ seminorms of the two entries.  Sobolev
		embedding, with $s>N+3/2$ on the three-dimensional Cauchy surface, bounds
		those seminorms by $H^s$ norms.  Summing the four terms and increasing $s$ by
		one for the field component gives
		\eqref{eq:state-continuity}.  Boundedness of the relevant distribution
		seminorms gives the uniform assertion.
	\end{proof}
	
	\begin{assumption}[State control on the error subspace]
		\label{ass:state-energy-control}
		The Cauchy-data form obeys
		\begin{equation}
			q_U(z,z)\leq C_{\omega,U}\|z\|_{X^s}^2
			\label{eq:global-state-energy-control}
		\end{equation}
		on the linear span of the Duhamel error, the bridge error and the inverse
		radiation tail used below, together with the frozen outgoing channel
		$f^{\mathrm{out,fr}}_{U,L}$.  This is automatic from Proposition
		\ref{prop:state-continuity} when that span has support in one fixed compact
		set.  For the generally noncompact inverse radiation tail and outgoing
		channel it is an additional infrared/energy-continuity hypothesis on the
		state.  Without it one must retain their exact state quadratic forms rather
		than estimate them by radiation-energy norms.
	\end{assumption}
	
	Let $\Gamma_\Sigma g$ denote the Cauchy data on $\Sigma$ of the solution
	$Eg$ generated by a compact spacetime test function $g$.  Starting from the
	localised frozen horizon radiation profile, choose a smooth
	angular/transverse cut-off and form the KPV test function
	\begin{equation}
		g^{\mathrm{loc}}_{U,L,\mu}
		=v_\mu u_{\lambda_{U,L}}
		(2\partial_{\mathsf U}p_U),
		\qquad
		\lambda_{U,L}=\exp\!\left(-\int_{v_-(U,L)}^{v_+(U,L)}
		\kappa_{\mathrm{lin}}(v)\,\dd v\right),
		\label{eq:local-bridge-test}
	\end{equation}
	in a fixed convex horizon neighbourhood.  Here $u_\lambda$ and $v_\mu$ are
	exactly the longitudinal scaling and transverse concentration maps of
	Theorem~4.1 of \cite{KurpiczPinamontiVerch2021}; in particular that theorem
	has the iterated limit $\lambda\downarrow0$ followed by $\mu\downarrow0$.
	Equation~\eqref{eq:local-bridge-test} uses the advanced-orientation peeling
	formula.  In a retarded model $\lambda_{U,L}$ must instead be computed from
	its appropriately oriented ray-tracing map.
	
	\begin{definition}[Redshift-localisation bridge defect]
		\label{ass:localisation-bridge}
		On the initial slice $\Sigma_{t_-(U)}$ define the exact, nonnegative number
		\begin{equation}
			\eta_U(L,\mu):=
			\bigl\|f^{\mathrm{hor,fr}}_{U,L}
			-\Gamma_{\Sigma_{t_-(U)}}g^{\mathrm{loc}}_{U,L,\mu}\bigr\|_{X^s}.
			\label{eq:bridge-defect}
		\end{equation}
		We say that a detector family satisfies the quantitative localisation bridge
		if one can choose $L(U)\to\infty$ and $\mu(U)\downarrow0$ so that
		$\eta_U(L(U),\mu(U))\to0$.  A finite-window estimate may instead retain
		$\eta_U(L,\mu)$ explicitly without assuming it is small.
	\end{definition}
	
	To separate pure localisation from the peeling mismatch, let
	$g^{\mathrm{loc,fr}}_{U,L,\mu}$ be \eqref{eq:local-bridge-test} with
	$\lambda_{U,L}$ replaced by
	$e^{-\kappa_U\Delta v_U(L)}$ and put
	\begin{equation}
		\eta_U^{\mathrm{fr}}(L,\mu)=
		\|f^{\mathrm{hor,fr}}_{U,L}
		-\Gamma_{\Sigma_{t_-(U)}}g^{\mathrm{loc,fr}}_{U,L,\mu}\|_{X^s}.
	\end{equation}
	To make the comparison constant explicit, write $\theta=\log\lambda$ and
	let $B_{U,\mu}(\theta)$ be the linear map from $h_U$ to
	$\Gamma_{\Sigma_{t_-(U)}}v_\mu u_{e^\theta}
	(2\partial_{\mathsf U}p_U)$.  Put
	\begin{equation}
		I_{U,L}:=
		\left[\min\!\left\{\log\lambda_{U,L},-\kappa_U\Delta v_U(L)\right\},
		\max\!\left\{\log\lambda_{U,L},-\kappa_U\Delta v_U(L)\right\}\right]
	\end{equation}
	and define the fixed-window operator norm
	\begin{equation}
		C_{\Gamma,F;U}(L,\mu):=
		\sup_{\theta\in I_{U,L}}
		\|\partial_\theta B_{U,\mu}(\theta)\|_{H^{s+k}\to X^s}.
		\label{eq:GammaF-constant}
	\end{equation}
	This number is finite for every fixed regular window and $\mu>0$: dilation
	depends smoothly on $\theta$ on the compact interval $I_{U,L}$, and
	$\Gamma_\Sigma$ is bounded into Cauchy data.  For a bounded family of
	profiles and uniformly bounded trace geometries one may also define the
	uniform constant
	\begin{equation}
		C_{\Gamma,F}(\mu):=
		\sup_U C_{\Gamma,F;U}(L(U),\mu),
	\end{equation}
	provided the displayed supremum is finite.  Neither the fixed-window nor the
	uniform constant is asserted to remain bounded as $\mu\downarrow0$ without
	an additional estimate on the scaled profile family.  The mean-value theorem
	and \eqref{eq:lambda-ratio-bound} give the sharper fixed-window bound
	\begin{equation}
		\eta_U(L,\mu)
		\leq \eta_U^{\mathrm{fr}}(L,\mu)
		+C_{\Gamma,F;U}(L,\mu)b_U(L)
		\|h_U\|_{H^{s+k}}.
		\label{eq:bridge-peeling-bound}
	\end{equation}
	Thus $b_U$ is a controlled contribution to the bridge error, while the decay
	of $\eta_U^{\mathrm{fr}}$ is a separate localisation condition.
	
	The global isomorphism property of the trace map does not provide this
	condition because it does not quantify concentration of the inverse map in a
	fixed bulk neighbourhood.  Decay of \eqref{eq:bridge-defect} follows from a
	quantitative redshift or local-energy estimate compatible with the chosen
	channel cut-off.  Neither \cite{Nicolas2016},
	\cite{BaskinWang2014}, nor \cite{KurpiczPinamontiVerch2021} states that
	estimate, so the detector bound retains $\eta_U$ as an exact term.
	
	Let
	\begin{equation}
		R_U(L)=
		C_D C_A G_U(L)a_{U,s}(L)\|h_U\|_{H^{s+k}}
		+C_R\varepsilon^{\mathrm{Sch}}_U(L).
		\label{eq:response-propagation-error}
	\end{equation}
	This is the norm distance between the exact data and the sum of the
	\emph{frozen} horizon and outgoing channels.  Define the measurable
	asymptotic-state error
	\begin{equation}
		\mathcal V_U(L):=
		q_U(f^{\mathrm{out,fr}}_{U,L},f^{\mathrm{out,fr}}_{U,L}).
		\label{eq:outgoing-state-error}
	\end{equation}
	Whether this term tends to zero is a separate state-and-detector assertion.
	Compactly switched detectors generally retain vacuum fluctuations, so
	ground-state spectral support alone does not make it vanish.  For a different
	state it also records incoming or coherent radiation.
	Define the exact local thermal reference form
	\begin{equation}
		\mathcal F_{\mathrm{th},U}[p_U]
		:=\Lambda_*(\overline{p_U},p_U).
		\label{eq:thermal-reference-definition}
	\end{equation}
	If $p_U$ is obtained from the frozen Schwarzschild horizon channel with the
	Fredenhagen--Haag normalization, then the stationary calculation following
	\eqref{eq:FH-final} identifies this with
	$\mathcal F_{\mathrm{FH},m_U}[h_U]$.  Keeping
	\eqref{eq:thermal-reference-definition} until that normalization is imposed
	keeps the normalization convention distinct from the detector estimate.
	
	\begin{proposition}[Future-Unruh state on an eventually stationary model]
		\label{prop:future-Unruh}
		Suppose a globally hyperbolic Vaidya or matched background is exactly
		Schwarzschild of mass $M_+$ in a neighbourhood of, and to the future of, a
		Cauchy surface $\Sigma_+$, and that this stationary Cauchy neighbourhood is
		isometric to the corresponding Schwarzschild domain on which the Unruh
		covariance is defined.
		There exists a quasifree Hadamard state $\omega_{\mathrm{fU}}$ on the full
		field algebra whose Cauchy covariance on $\Sigma_+$ is that of the
		Schwarzschild Unruh state.
	\end{proposition}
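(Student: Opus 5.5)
The plan is to transport the Schwarzschild Unruh covariance backwards along the Cauchy evolution, using the time-slice property. First I restrict the Unruh two-point function $W_2^{\mathrm U}$ to the stationary neighbourhood $\mathcal N_+$ of $\Sigma_+$, via the assumed isometry with the Schwarzschild domain. On $\mathcal N_+$ it is a positive, Hadamard bisolution whose antisymmetric part is $iE$; for Schwarzschild this is known from the literature on the Unruh state, e.g.\ Dappiaggi--Moretti--Pinamonti. Pulling it back to Cauchy data on $\Sigma_+$ defines a covariance $\lambda_+$ on the symplectic space of data. The pullback is well defined because the wavefront set is null, as in the proof of Proposition~\ref{prop:state-continuity}.

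Next I define the state on the full algebra by
\begin{equation*}
	W_2(h_1,h_2):=\lambda_+\bigl(\Gamma_{\Sigma_+}Eh_1,\Gamma_{\Sigma_+}Eh_2\bigr),
	\qquad h_1,h_2\in\Cinf(\D).
\end{equation*}
By global hyperbolicity, $\Gamma_{\Sigma_+}E$ maps $\Cinf(\D)$ onto the compactly supported smooth Cauchy data. Its kernel is $P\Cinf(\D)$, and it intertwines $E(h_1,h_2)$ with the symplectic form on $\Sigma_+$; this is the time-slice identity \eqref{eq:detector-timeslice}. Consequently $W_2$ is a bisolution, it has the correct commutator, and positivity is inherited from $\lambda_+$. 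The quasifree state $\omega_{\mathrm{fU}}$ is then defined on the Weyl or polynomial algebra in the standard way. By construction its Cauchy covariance on $\Sigma_+$ is that of the Unruh state. In fact $W_2$ coincides with $W_2^{\mathrm U}$ on $\mathcal N_+$: for $h_i$ supported in $\mathcal N_+$, both are bisolutions with the same data on $\Sigma_+$, and the domain of dependence of $\Sigma_+$ within $\mathcal N_+$ is stationary.

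It remains to show the Hadamard property on all of $\D$. On $\mathcal N_+$, $W_2$ equals $W_2^{\mathrm U}$, which is Hadamard. Radzikowski's equivalence combined with propagation of singularities, in the form of the Fulling--Sweeny--Wald deformation argument or its microlocal version for bisolutions, then shows that \eqref{eq:microlocal-spectrum} holds everywhere. Every null geodesic through a point of $\D$ meets the open neighbourhood $\mathcal N_+$ of the Cauchy surface $\Sigma_+$. The wavefront set of a bisolution of the normally hyperbolic $P\otimes P$ is invariant under the bicharacteristic flow. Hence the positive-frequency pairs in $\mathcal N_+$ propagate to the required set. Positivity, via the Strohmaier--Verch--Wollenberg argument, excludes the other components.

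The main obstacle is the first step, as far as this paper supplies it. The Unruh state on Schwarzschild is defined by data on $\HH^-$ and $\scrI^-$ of the full Kruskal/Schwarzschild exterior, not on a finite Cauchy neighbourhood. So I must make sure that "the corresponding Schwarzschild domain on which the Unruh covariance is defined" contains a neighbourhood of $\Sigma_+$ in which its Hadamard property is established, and that its restriction there is an honest covariance. I would cite the Dappiaggi--Moretti--Pinamonti Hadamard theorem for the Unruh state and use the isometry hypothesis exactly at this point. The hypothesis excludes an interior region through which null geodesics could reach $\D$ without crossing the stationary region; causal convexity of $\D$ in Assumption~\ref{ass:causal-domain} handles this.
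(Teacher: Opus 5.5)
Your proposal is correct and follows essentially the same route as the paper. Both arguments restrict the Dappiaggi--Moretti--Pinamonti Unruh two-point function through the assumed isometry to Cauchy covariances on $\Sigma_+$, pull these back along the symplectic Cauchy evolution to obtain a positive quasifree state on the full algebra with the correct CCR, and conclude the Hadamard property everywhere by propagation of the wavefront set. You only add detail the paper leaves implicit, namely the explicit formula via $\Gamma_{\Sigma_+}E$, its kernel, and the positivity argument excluding zero-covector components.
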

	
	\begin{proof}
		Dappiaggi, Moretti and Pinamonti construct the Schwarzschild Unruh state as a
		positive quasifree Hadamard state \cite{DappiaggiMorettiPinamonti2011}.
		Use the stated isometry to restrict its two-point distribution and normal
		derivatives to $\Sigma_+\times\Sigma_+$, obtaining Cauchy covariances
		with the correct antisymmetric part.  The global Cauchy evolution is a real
		symplectic isomorphism; pullback of those covariances therefore defines a
		positive quasifree state on the full Weyl algebra and preserves the CCR.
		Propagation of the Hadamard wavefront set for a normally hyperbolic operator
		proves that the resulting state is Hadamard everywhere.  This construction
		does not by itself prove $\mathcal V_U(L)\to0$ for a compactly switched
		detector; that requires the precise detector spectral limit and is retained
		as a hypothesis in the late-time statements.
	\end{proof}
	
	\begin{proposition}[Finite Hadamard scaling defect]
		\label{lem:finite-Hadamard}
		For the actual Hadamard two-point function and the local test
		\eqref{eq:local-bridge-test}, define the exact finite-scale defect
		\begin{equation}
			\mathfrak h_U(L,\mu):=
			\left|W_2(\overline{g^{\mathrm{loc}}_{U,L,\mu}},
			g^{\mathrm{loc}}_{U,L,\mu})
			-\mathcal F_{\mathrm{th},U}[p_U]\right|.
			\label{eq:Hadamard-modulus-definition}
		\end{equation}
		It is finite for every positive $\lambda_{U,L}$ and $\mu$.  For every fixed
		Hadamard kernel and fixed test profile, Theorem~4.1 of
		\cite{KurpiczPinamontiVerch2021} gives the iterated limit
		\begin{equation}
			\lim_{\mu\downarrow0}\lim_{\lambda_{U,L}\downarrow0}
			\mathfrak h_U(L,\mu)=0.
			\label{eq:finite-Hadamard-error}
		\end{equation}
		With the stationary channel normalization just described,
		$\mathcal F_{\mathrm{th},U}[p_U]$ is the frozen Fredenhagen--Haag quadratic
		form.
	\end{proposition}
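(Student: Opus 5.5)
The statement makes three claims, and I would prove them in the order stated: finiteness of $\mathfrak h_U(L,\mu)$ at positive scales, the iterated limit \eqref{eq:finite-Hadamard-error}, and the identification of $\mathcal F_{\mathrm{th},U}[p_U]$ with the frozen Fredenhagen--Haag form.

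\emph{Finiteness.} For fixed $\lambda_{U,L}>0$ and $\mu>0$, the test function $g^{\mathrm{loc}}_{U,L,\mu}$ in \eqref{eq:local-bridge-test} is obtained from the fixed profile $2\partial_{\mathsf U}p_U$, with $p_U\in C_0^\infty$, by the KPV dilation $u_\lambda$ and transverse concentration $v_\mu$. Both maps preserve smoothness and compact support. After the angular and transverse cut-off, the support lies in the fixed convex horizon neighbourhood. Hence $g^{\mathrm{loc}}_{U,L,\mu}\in\Cinf(\D)$, and $W_2(\overline{g},g)$ is finite simply because $W_2\in\D'(\D\times\D)$. The term $\mathcal F_{\mathrm{th},U}[p_U]=\Lambda_*(\overline{p_U},p_U)$ is finite by the Fourier representation \eqref{eq:KMS-Fourier}. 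The density $E/(1-e^{-\beta_*E})$ grows at most linearly, and $\widehat p_U$ is Schwartz in $E$ and smooth in $\nu$. This holds for the $\partial_{\mathsf U}$-tested form, which is exactly where the zero-mode exclusion enters: $p_U$ is compactly supported, so its derivative integrates to zero. Their difference is therefore a finite nonnegative number.

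\emph{Iterated limit.} I would hold $U$, the kernel $W_2$ and the profile $p_U$ fixed, and regard $\mathfrak h_U$ as a function of the two scaling parameters $(\lambda,\mu)$. Its only dependence on $L$ is through $\lambda=\lambda_{U,L}$. Theorem~4.1 of \cite{KurpiczPinamontiVerch2021} then applies verbatim. It covers Hadamard two-point functions on spherically symmetric space-times with an outer trapping horizon. Its hypotheses (Hadamard $W_2$, a cross-section $S_*$ with $\kappa_*>0$, test functions of the form $v_\mu u_\lambda\partial_{\mathsf U}f$) are met by the construction in Section~\ref{sec:scaling} together with \eqref{eq:Kodama-kappa}. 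The theorem gives $\lim_{\mu\downarrow0}\lim_{\lambda\downarrow0}W_2(\overline{v_\mu u_\lambda\partial p},v_\mu u_\lambda\partial p)=\Lambda_*(\overline p,p)$. Subtracting yields \eqref{eq:finite-Hadamard-error}.

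The main obstacle is checking that the normalisations coincide: the factor $2$ in $2\partial_{\mathsf U}p_U$, the $\lambda$-dependent prefactors built into $u_\lambda$, and the $r_*^2/\pi$ in \eqref{eq:KPV-scaling}. Everything must match the cited theorem's conventions, so that the limit is exactly $\Lambda_*$ and not a multiple of it. I would handle this by defining $u_\lambda$, $v_\mu$ literally as in the reference, so that $\Lambda_*$ is that theorem's limit by definition. A related point is that $\lambda_{U,L}\downarrow0$ is realised as $L\to\infty$ through \eqref{eq:dynamic-scaling-form}. For a fixed window, however, the claim concerns only the abstract limit in $\lambda$, so no monotonicity in $L$ is needed.

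\emph{Identification.} For the last claim I would use Kruskal coordinates $\mathsf U=-e^{-\kappa_Us}$ with the frozen mass $m_U$, so that $\kappa_*=\kappa_U=1/(4m_U)$. Substituting the logarithmic coordinate as in \eqref{eq:KPV-sinh-kernel} turns $\Lambda_*$ into the translation-invariant $\sinh^{-2}$ form \eqref{eq:logarithmic-horizon-form}. Its Fourier transform carries the Planck density with $\beta=8\pi m_U$. Here $p_U$ is taken as the horizon channel with the Fredenhagen--Haag normalisation, and the frozen horizon profile is related by the exact dilation \eqref{eq:frozen-horizon-profile}. The result is then the frozen Fredenhagen--Haag quadratic form, which the paper records as \eqref{eq:FH-final}.
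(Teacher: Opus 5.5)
Your proposal is correct and follows the paper's own proof. Finiteness comes from the scaled tests remaining smooth and compactly supported while $W_2$ is a distribution. The iterated limit is read off from parts (I) and (II) of Theorem~4.1 of \cite{KurpiczPinamontiVerch2021} with $U$, the Hadamard kernel and $p_U$ held fixed. The identification comes from inserting the frozen Kruskal horizon profile into the Fourier form \eqref{eq:KMS-Fourier} to recover \eqref{eq:FH-final}.

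Your extra finiteness check for $\mathcal F_{\mathrm{th},U}[p_U]$ is sound, and it also follows directly from pairing the distribution in \eqref{eq:KPV-scaling} with $p_U\in\Cinf$. However, the zero-mode exclusion is what guarantees that a compactly supported primitive $p_U$ exists. It is not needed for convergence of the Fourier integral, whose density is bounded at $E=0$.
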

	
	\begin{proof}
		Finiteness follows because the scaled functions remain smooth and compactly
		supported for positive parameters and $W_2$ is a distribution.  The iterated
		limit is precisely parts (I) and (II) of the cited theorem.  Its limiting
		kernel is \eqref{eq:KPV-scaling}; inserting the horizon profile obtained from
		the frozen trace gives the same Fourier quadratic form as
		\eqref{eq:FH-final}.  The cited theorem is pointwise in the Hadamard kernel
		and test functions.  Uniform convergence for a $U$-dependent family is not
		automatic and must be assumed or proved by uniform bounds on the Hadamard
		coefficients and the scaled test family.
	\end{proof}
	
	For positive $L$ and $\mu$ define
	\begin{equation}
		\mathcal F_U^\sharp=
		\mathcal F_{\mathrm{th},U}[p_U]+\mathfrak h_U(L,\mu),
		\qquad
		S_U(L,\mu)=R_U(L)+\eta_U(L,\mu).
		\label{eq:Fsharp-S-definition}
	\end{equation}
	Set
	\begin{align}
		\mathcal E_U(L,\mu)
		:={}&2\sqrt{C_{\omega,U}}
		\left(\sqrt{\mathcal F_U^\sharp}+\sqrt{\mathcal V_U(L)}\right)S_U(L,\mu)
		+C_{\omega,U}S_U(L,\mu)^2\notag\\
		&+\mathfrak h_U(L,\mu)
		+2\sqrt{\mathcal F_U^\sharp\mathcal V_U(L)}+\mathcal V_U(L).
		\label{eq:certified-error}
	\end{align}
	
	\begin{lemma}[Finiteness of the certified-error data]
		\label{lem:certified-error-finiteness}
		Fix a regular detector--horizon window, $L<\infty$ and $\mu>0$, and impose
		Assumption~\ref{ass:state-energy-control}.  Then every quantity entering the
		certified response error is finite.  More explicitly,
		\begin{align}
			&C_A,\ C_D,\ C_R,\ G_U(L),\ a_{U,s}(L),\
			\varepsilon_U^{\mathrm{Sch}}(L),\ b_U(L),\ C_{\Gamma,F;U}(L,\mu),\
			C_{\omega,U}<\infty,
			\label{eq:finite-structural-constants}\\
			&R_U(L),\ \eta_U(L,\mu),\ \mathfrak h_U(L,\mu),\
			\mathcal V_U(L),\ \mathcal F_{\mathrm{th},U}[p_U],\
			\mathcal F_U^\sharp,\ S_U(L,\mu)<\infty.
			\label{eq:finite-error-quantities}
		\end{align}
		Consequently $\mathcal E_U(L,\mu)<\infty$.  For a family with varying $U$,
		these quantities are uniformly bounded only under the uniform bounded-geometry,
		support, weighted-radiation and state-seminorm hypotheses stated in the
		preceding results.
	\end{lemma}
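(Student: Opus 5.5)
The plan is a bookkeeping argument: go through the structural constants in \eqref{eq:finite-structural-constants} one at a time and cite the result that makes each finite. Then show the derived quantities in \eqref{eq:finite-error-quantities} are finite, since they are built from the constants by finitely many sums, products and square roots. Finiteness of $\mathcal E_U(L,\mu)$ then follows at once from \eqref{eq:certified-error}.

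For the structural constants, the order is as follows.
\begin{enumerate}[label=(\roman*)]
\item $C_A$ is finite by Lemma~\ref{lem:coefficient-difference}. The regular window gives $r\geq r_{\min}>0$ and $0<m_0\leq m\leq m_1$.
\item $C_D(U)$ and $C_R(U)$ are finite by Lemma~\ref{lem:detector-reconstruction-constants}. For a single window we use the fixed-window meaning.
\item $G_U(L)$ is bounded by $C\exp(CL+C\mathfrak v_{U,s}(L))$, using Proposition~\ref{prop:Duhamel} together with Proposition~\ref{prop:compact-evolution}. Here $\mathfrak v_{U,s}(L)$ is the integral over a compact interval of the continuous function $\mathfrak m_s$, because $m\in C^{s+2}$ on a compact hull.
\item $a_{U,s}(L)$ and $b_U(L)$ are integrals over compact intervals of bounded functions. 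This is the remark after \eqref{eq:window-kappa-variation}; for $b_U$ it uses $\kappa_{\mathrm{lin}}=m/r_H^2$ with $r_H\geq r_{\min}$.
\item $\varepsilon^{\mathrm{Sch}}_U(L)\leq\|\mathcal R_{m_U}F_U\|$, since $0\leq 1-\Pi_L\leq1$. Nicolas' isometry makes this norm equal to $\|F_U\|_{\mathcal H_E}$, which is at most $C_D\|h_U\|$.
\item $C_{\Gamma,F;U}(L,\mu)$ is finite by the argument given after \eqref{eq:GammaF-constant}: for fixed $\mu>0$, the map $\theta\mapsto B_{U,\mu}(\theta)$ is smooth on the compact interval $I_{U,L}$.
\item $C_{\omega,U}$ is finite by Assumption~\ref{ass:state-energy-control}, which is imposed.
\end{enumerate}

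For the derived quantities, the order is as follows.
\begin{enumerate}[label=(\roman*)]
\item $R_U(L)$ is finite by its definition \eqref{eq:response-propagation-error}.
\item $\eta_U(L,\mu)$ is bounded by \eqref{eq:bridge-peeling-bound}, provided $\eta_U^{\mathrm{fr}}$ is finite. This holds because both $f^{\mathrm{hor,fr}}_{U,L}$ and $\Gamma_\Sigma g^{\mathrm{loc,fr}}_{U,L,\mu}$ lie in $X^s$. The first lies in $X^s$ by boundedness of $\mathcal K_{R,U}$ applied to $\Pi_L\mathcal R F_U$. The second lies in $X^s$ because $g^{\mathrm{loc}}$ is smooth and compactly supported for $\lambda_{U,L}>0$ and $\mu>0$, and $\Gamma_\Sigma E$ maps such functions into $X^s$.
\item $\mathfrak h_U(L,\mu)$ is finite by Proposition~\ref{lem:finite-Hadamard}.
\item $\mathcal V_U(L)\leq C_{\omega,U}\|f^{\mathrm{out,fr}}_{U,L}\|_{X^s}^2$, by Assumption~\ref{ass:state-energy-control}. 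The right side is finite by the reconstruction bound.
\item $\mathcal F_{\mathrm{th},U}[p_U]=\Lambda_*(\overline{p_U},p_U)$ is finite via \eqref{eq:KMS-Fourier}. The function $p_U$ is smooth and compactly supported, so $\widehat{p_U}$ is Schwartz in $E$. The density $E/(1-e^{-\beta_*E})$ is continuous and of polynomial growth, with removable value $1/\beta_*$ at $E=0$, so the integral converges. Nonnegativity also makes the square root of $\mathcal F_U^\sharp$ in \eqref{eq:certified-error} meaningful.
\item $\mathcal F_U^\sharp$ and $S_U(L,\mu)$ are then finite as sums of finite quantities.
\end{enumerate}

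The main obstacle is the finiteness of $\mathcal V_U(L)$ and of the inverse radiation tail, because these objects are not compactly supported. At this point I would not use Proposition~\ref{prop:state-continuity}; I would invoke Assumption~\ref{ass:state-energy-control} explicitly, which is exactly why the lemma imposes it. A secondary subtlety is to check that $\eta_U$ involves only $X^s$ norms of quantities already shown to lie in $X^s$, so that no uniformity as $\mu\downarrow0$ is needed.

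The final sentence of the lemma, about families with varying $U$, is handled by noting that every bound above depends only on the following:
\begin{enumerate}[label=(\roman*)]
\item $r_{\min}$, $m_0$ and $m_1$, together with the $C^{s+2}$ norms of $m$;
\item $L$;
\item the supports;
\item the weighted radiation norm;
\item the state seminorms.
\end{enumerate}
Uniform bounds on these inputs therefore give uniform bounds on all the listed quantities.
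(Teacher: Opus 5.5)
Your proposal is correct and is essentially the paper's own proof. It checks each structural constant in turn using the same cited results (Lemma~\ref{lem:coefficient-difference}, Proposition~\ref{prop:Duhamel}, Lemma~\ref{lem:detector-reconstruction-constants}, the smooth-dependence remark after \eqref{eq:GammaF-constant}, Assumption~\ref{ass:state-energy-control}, Proposition~\ref{lem:finite-Hadamard}), and then gets the derived quantities by arithmetic on nonnegative finite numbers. The deviations are minor: routing $\eta_U$ through \eqref{eq:bridge-peeling-bound} is unnecessary, since your $X^s$-membership argument applies directly to $\Gamma_\Sigma g^{\mathrm{loc}}_{U,L,\mu}$ as in the paper. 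The paper's proof does not treat the final uniformity sentence, which is only a caveat, so your sufficiency claim there is extra, and as stated it would also need a uniform $H^{s+k}$ bound on the detector family among its inputs.
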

	
	\begin{proof}
		Lemma~\ref{lem:coefficient-difference} and the compactness observation after
		\eqref{eq:window-kappa-variation} give finiteness of $C_A$, $a_{U,s}$ and
		$b_U$.  Definition~\eqref{eq:GammaF-constant} and the smooth-dependence
		argument following it give finite $C_{\Gamma,F;U}(L,\mu)$ for fixed
		$L<\infty$ and $\mu>0$.
		Proposition~\ref{prop:Duhamel} gives $G_U(L)<\infty$, while
		Lemma~\ref{lem:detector-reconstruction-constants} gives finite $C_D$ and
		$C_R$.  The Schwarzschild trace is in the order-$s$ radiation space, so
		\eqref{eq:exact-Sch-tail} is finite.  Assumption
		\ref{ass:state-energy-control} supplies finite $C_{\omega,U}$ and, because it
		includes the frozen outgoing channel, finite $\mathcal V_U(L)$.  Hence
		$R_U(L)<\infty$.  Both data entering \eqref{eq:bridge-defect} belong to
		$X^s$, so $\eta_U(L,\mu)<\infty$.  Proposition
		\ref{lem:finite-Hadamard} gives finite $\mathfrak h_U$, and the thermal
		quadratic form is finite on the compact profile $p_U$.  The remaining
		assertions follow from the definitions of $\mathcal F_U^\sharp$, $S_U$ and
		$\mathcal E_U$ by arithmetic operations on nonnegative finite numbers.
	\end{proof}
	
	\begin{theorem}[Certified thermal response interval]
		\label{thm:certified-response}
		Under Theorem~\ref{thm:target} and Assumption
		\ref{ass:state-energy-control}, fix positive $L$ and $\mu$ and use
		\eqref{eq:Fsharp-S-definition} and \eqref{eq:certified-error}.  Then
		\begin{equation}
			\max\{0,\mathcal F_{\mathrm{th},U}[p_U]-\mathcal E_U(L,\mu)\}
			\leq\mathcal F_g[h_U]
			\leq\mathcal F_{\mathrm{th},U}[p_U]+\mathcal E_U(L,\mu).
			\label{eq:two-sided-final}
		\end{equation}
	\end{theorem}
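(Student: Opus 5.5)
The argument is a chain of Lemma~\ref{lem:positive-comparison} applications with one carefully chosen splitting of the exact data. We work with the positive form $q_U$ on Cauchy data on $\Sigma_{t_-(U)}$. By the time-slice identity \eqref{eq:detector-timeslice}, $\mathcal F_g[h_U]=q_U(f_U,f_U)$, where $f_U$ denotes the exact backward data. We then write
\[
f_U=\Gamma_{\Sigma_{t_-(U)}}g^{\mathrm{loc}}_{U,L,\mu}+f^{\mathrm{out,fr}}_{U,L}+z_U,
\qquad
z_U:=r_{U,L}+\bigl(f^{\mathrm{hor,fr}}_{U,L}-\Gamma_{\Sigma_{t_-(U)}}g^{\mathrm{loc}}_{U,L,\mu}\bigr).
\]
Theorem~\ref{thm:target} supplies the decomposition \eqref{eq:proved-dynamical-decomposition} with $\|r_{U,L}\|_{X^s}\le R_U(L)$. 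Definition~\ref{ass:localisation-bridge} gives the second summand norm $\eta_U(L,\mu)$. Hence $\|z_U\|_{X^s}\le S_U(L,\mu)$. The vector $z_U$ lies in the span of the Duhamel error, the inverse radiation tail and the bridge error, so Assumption~\ref{ass:state-energy-control} yields $q_U(z_U,z_U)\le C_{\omega,U}S_U^2$.

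\textbf{First step.} Put $f_0:=\Gamma g^{\mathrm{loc}}+f^{\mathrm{out,fr}}$ and apply \eqref{eq:norm-response-bound} with $r=z_U$:
\[
|q_U(f_U,f_U)-q_U(f_0,f_0)|\le 2\sqrt{C_{\omega,U}q_U(f_0,f_0)}\,S_U+C_{\omega,U}S_U^2 .
\]
To bound $\sqrt{q_U(f_0,f_0)}$ we use the triangle inequality for the seminorm $\|\cdot\|_{q_U}$. The local term gives $q_U(\Gamma g^{\mathrm{loc}},\Gamma g^{\mathrm{loc}})=W_2(\overline{g^{\mathrm{loc}}},g^{\mathrm{loc}})\le\mathcal F_{\mathrm{th},U}[p_U]+\mathfrak h_U=\mathcal F_U^\sharp$ by \eqref{eq:Hadamard-modulus-definition}. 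The outgoing term is exactly $\mathcal V_U(L)$ by \eqref{eq:outgoing-state-error}. Therefore $\sqrt{q_U(f_0,f_0)}\le\sqrt{\mathcal F_U^\sharp}+\sqrt{\mathcal V_U}$, which produces the first line of \eqref{eq:certified-error}.

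\textbf{Second step.} Apply \eqref{eq:positive-form-bound} again, now with base $\Gamma g^{\mathrm{loc}}$ and remainder $f^{\mathrm{out,fr}}$:
\[
|q_U(f_0,f_0)-W_2(\overline{g^{\mathrm{loc}}},g^{\mathrm{loc}})|\le 2\sqrt{\mathcal F_U^\sharp\mathcal V_U}+\mathcal V_U .
\]
By the definition of $\mathfrak h_U$, we also have $|W_2(\overline{g^{\mathrm{loc}}},g^{\mathrm{loc}})-\mathcal F_{\mathrm{th},U}[p_U]|\le\mathfrak h_U$. Summing the three differences by the triangle inequality gives $|\mathcal F_g[h_U]-\mathcal F_{\mathrm{th},U}[p_U]|\le\mathcal E_U(L,\mu)$. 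The lower bound $0$ in \eqref{eq:two-sided-final} comes from positivity of $\omega(Q_U^*Q_U)$. Lemma~\ref{lem:certified-error-finiteness} guarantees that every term is finite, so nothing degenerates.

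\textbf{Main obstacle.} The delicate step is making sure all vectors live in one space on which $q_U$ is a genuine positive sesquilinear form, with consistent complex conjugation. The pieces $\Gamma g^{\mathrm{loc}}$, $f^{\mathrm{out,fr}}$ and the tails are not compactly supported spacetime-generated data. I would extend $q_U$ from $\Gamma_\Sigma(\Cinf)$ to the completion controlled by \eqref{eq:global-state-energy-control}, using Proposition~\ref{prop:state-continuity} on compact supports and Assumption~\ref{ass:state-energy-control} otherwise. Positivity and Cauchy--Schwarz persist in this limit.

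I would also check two identifications. First, $q_U(\Gamma_\Sigma g,\Gamma_\Sigma g)=W_2(\overline g,g)$, which holds by the time-slice identity. Second, the sign convention of $E$ in \eqref{eq:detector-timeslice} must not affect quadratic expressions, and it does not, since an overall sign cancels in a quadratic form.
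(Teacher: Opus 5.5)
Your proof is correct and follows the paper's own proof essentially step by step. Both use the same splitting $f_0=\Gamma_{\Sigma_{t_-(U)}}g^{\mathrm{loc}}_{U,L,\mu}+f^{\mathrm{out,fr}}_{U,L}$ with $\|f_U-f_0\|_{X^s}\le S_U(L,\mu)$, one application of Lemma~\ref{lem:positive-comparison} with $C_q=C_{\omega,U}$, the bound $\sqrt{q_U(f_0,f_0)}\le\sqrt{\mathcal F_U^\sharp}+\sqrt{\mathcal V_U}$, a second Cauchy--Schwarz step for the outgoing diagonal and cross terms, and finally the Hadamard defect $\mathfrak h_U$. One minor correction: $\Gamma_\Sigma g^{\mathrm{loc}}$ is generated by the compactly supported test function $g^{\mathrm{loc}}$, so only $f^{\mathrm{out,fr}}_{U,L}$ and the inverse radiation tail actually need the extension of $q_U$ provided by Assumption~\ref{ass:state-energy-control}.
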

	
	\begin{proof}
		Put $g_0=\Gamma_{\Sigma_{t_-(U)}}g^{\mathrm{loc}}_{U,L,\mu}$,
		$f_0=g_0+f^{\mathrm{out,fr}}_{U,L}$ and $r=f_U-f_0$.
		Theorem~\ref{thm:target} and \eqref{eq:bridge-defect} give
		$\|r\|_{X^s}\leq S_U(L,\mu)$.
		The time-slice identity gives
		$q_U(g_0,g_0)=W_2(\overline{g^{\mathrm{loc}}},g^{\mathrm{loc}})$, so
		Proposition~\ref{lem:finite-Hadamard} bounds it by
		$\mathcal F_U^\sharp$.
		By Cauchy--Schwarz for the positive form,
		\begin{equation}
			q_U(f_0,f_0)^{1/2}
			\leq q_U(g_0,g_0)^{1/2}
			+\mathcal V_U(L)^{1/2}
			\leq\sqrt{\mathcal F_U^\sharp}+\sqrt{\mathcal V_U(L)}.
		\end{equation}
		Lemma~\ref{lem:positive-comparison} therefore bounds the change from $f_0$ to
		$f_U$ by the first line of \eqref{eq:certified-error}.  A second application of
		Cauchy--Schwarz bounds the outgoing diagonal and local-horizon--outgoing cross term
		by $\mathcal V_U+2\sqrt{\mathcal F_U^\sharp\mathcal V_U}$.  Finally use
		Proposition~\ref{lem:finite-Hadamard}.  Adding the three bounds proves
		\eqref{eq:two-sided-final}.
	\end{proof}
	
	Thus ``approximately thermal'' means membership in the explicit closed
	interval \eqref{eq:two-sided-final}.  The interval is rigorous at finite
	parameters because the bridge and Hadamard defects are retained as exact
	nonnegative quantities.  Convergence to the thermal reference follows when
	these quantities and the PDE and state terms tend to zero.
	
	If the ray-tracing map from an affine null coordinate $P$ in the past to Bondi
	retarded time $U$ is known, define
	\begin{equation}
		\kappa_{\mathrm{rt}}(U)
		=-\frac{P''(U)}{P'(U)}.
		\label{eq:peeling-function}
	\end{equation}
	An exactly exponential $P(U)$ gives an exact thermal ray-tracing kernel.  If
	the localisation bridge and the other errors in
	\eqref{eq:certified-error} vanish under controlled variation, the leading
	local temperature is
	\begin{equation}
		T_{\mathrm{eff}}(U)=\frac{\kappa_{\mathrm{rt}}(U)}{2\pi},
		\label{eq:effective-temperature}
	\end{equation}
	with the conditional meaning supplied by \eqref{eq:two-sided-final}, rather than by an
	unquantified adiabatic assertion.  The relevance of the peeling function, and
	its distinction from an arbitrary dynamical definition of surface gravity,
	was emphasised in \cite{BarceloLiberatiSonegoVisser2011}.
	
	When $m$ is constant, $a_{U,s}=v_U=b_U=0$,
	$\kappa_{\mathrm{rt}}=\kappa_{\mathrm{lin}}=\kappa_{\mathrm K}=1/(4M)$,
	frequency mixing becomes
	diagonal, and the desired result reduces to the Fredenhagen--Haag response
	\begin{equation}
		\mathcal F_{\mathrm{FH}}[h]
		=\mathrm{const.}
		\sum_{\ell,m}\int_{-\infty}^{+\infty}
		|D_\ell(\omega)|^2
		\frac{|\widetilde h_{\ell m}(\omega,-\omega)|^2}
		{\omega(1-e^{-8\pi M\omega})}\,\dd\omega.
		\label{eq:FH-final}
	\end{equation}
	Indeed, in the normalization of \cite{FredenhagenHaag1990} the Fourier
	coefficient of the horizon profile is
	\begin{equation}
		a_{\ell m}(\omega)
		=D_\ell(\omega)
		\frac{\widetilde h_{\ell m}(\omega,-\omega)}{2i\omega}.
	\end{equation}
	Inserting this into the Fourier form \eqref{eq:KMS-Fourier}, whose density is
	proportional to
	$\omega/(1-e^{-8\pi M\omega})$, gives \eqref{eq:FH-final}; the factors
	$r_*^2$, $2$ and the Fourier convention are contained in the displayed
	overall constant.  This establishes the convention matching between
	\eqref{eq:thermal-reference-definition} and the stationary
	Fredenhagen--Haag form.
	
	\section{Consequences for the admissible mass profiles}
	\label{sec:mass-profile-consequences}
	
	The finite-window comparison and the certified detector bound can now be
	applied to the mass profiles introduced in Section~\ref{sec:geometry}.  The
	resulting statements distinguish finite-time estimates, conditional
	late-time limits and the dependence on the global completion of the
	space-time.
	
	\subsection{The spherical massless sector on a compact slab}
	
	For $\ell=0$ and $\mu=0$, equation \eqref{eq:reduced-Vaidya} is
	\begin{equation}
		2\epsilon\psi_{wr}+\partial_r(C\psi_r)-\frac{2m(w)}{r^3}\psi=0.
	\end{equation}
	The multiplier estimate \eqref{eq:first-energy-estimate}, the explicit
	coefficient calculation \eqref{eq:explicit-coefficient-difference}, and
	Proposition~\ref{prop:Duhamel} prove finite-slab well-posedness and the frozen
	comparison.  No WKB or geometric-optics approximation is used.  Thus the
	spherical massless sector satisfies the finite-window propagation estimate,
	and its detector response is governed by the certified interval of
	Theorem~\ref{thm:certified-response}.
	
	\subsection{All angular momenta}
	
	Spherical symmetry gives $[P,\Omega_i]=0$ for the three rotation fields.
	Commuting with them and summing the energy estimates controls every angular
	Sobolev derivative.  More importantly, the term
	$\ell(\ell+1)/r^2$ cancels in \eqref{eq:explicit-coefficient-difference}; the
	time-dependent perturbation constant is uniform in $\ell$.  The Schwarzschild
	radiation-field isometry of Proposition~\ref{prop:stationary-tail} includes
	all modes; the weighted moment is imposed after commuting with rotations.
	Coudray's
	boundary energy \eqref{eq:conformal-energy-density} likewise contains the
	full angular gradient.  Consequently the finite-window comparison extends to
	all angular momenta with perturbation constants uniform in $\ell$.  The
	horizon-localisation contribution remains explicitly present in the error
	term of Theorem~\ref{thm:certified-response}.
	
	\subsection{Asymptotically stationary accretion}
	
	\begin{corollary}[Late-time accretion]
		\label{cor:accretion}
		Suppose $m(U)\to M_+>0$ and the selected null separatrix satisfies
		$r_H(U)\to2M_+$.  Choose spacelike slabs $D_{U,L(U)}$ with
		$L(U)\to\infty$ whose earliest sampled null coordinate tends to the future,
		\begin{equation}
			\underline w_U:=\inf_{x\in D_{U,L(U)}}w(x)\longrightarrow+\infty.
		\end{equation}
		Assume the constants in Theorem~\ref{thm:certified-response} are uniform on
		these slabs, the frozen profiles have the stationary normalization
		\eqref{eq:thermal-reference-definition}, and
		\begin{align}
			&G_U(L(U))a_{U,s}(L(U))\longrightarrow0,
			&&\varepsilon_U^{\mathrm{Sch}}(L(U))\longrightarrow0,\notag\\
			&\mathcal V_U(L(U))\longrightarrow0,
			&&\eta_U(L(U),\mu(U))+\mathfrak h_U(L(U),\mu(U))\longrightarrow0
			\label{eq:accretion-vanishing-assumptions}
		\end{align}
		for some $\mu(U)\downarrow0$.  Assume also uniform continuity in $M$ of the
		frozen Fredenhagen--Haag form on the detector family.  Then
		\begin{equation}
			\mathcal E_U(L(U),\mu(U))\to0.
		\end{equation}
		Consequently
		\begin{equation}
			\mathcal F_g[h_U]-\mathcal F_{\mathrm{FH},M_+}[h_U]\longrightarrow0.
			\label{eq:accretion-limit}
		\end{equation}
	\end{corollary}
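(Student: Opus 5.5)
The plan is to feed the four vanishing hypotheses \eqref{eq:accretion-vanishing-assumptions} into the explicit error functional \eqref{eq:certified-error}. This gives $\mathcal E_U\to0$. Theorem~\ref{thm:certified-response} then pins $\mathcal F_g[h_U]$ to $\mathcal F_{\mathrm{th},U}[p_U]=\mathcal F_{\mathrm{FH},m_U}[h_U]$, and a final continuity-in-mass step replaces $m_U$ by $M_+$.

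\textbf{Step 1: the combined norm error $S_U$ tends to zero.} By \eqref{eq:response-propagation-error},
\[
R_U(L(U))=C_DC_AG_U(L(U))a_{U,s}(L(U))\|h_U\|_{H^{s+k}}+C_R\varepsilon^{\mathrm{Sch}}_U(L(U)).
\]
The constants $C_D$, $C_R$, $C_A$ are uniform by hypothesis, and $\|h_U\|_{H^{s+k}}$ is bounded because $h_U$ ranges over a bounded set (Theorem~\ref{thm:target}). So the first two limits in \eqref{eq:accretion-vanishing-assumptions} give $R_U\to0$. Since $\eta_U\ge0$, $\mathfrak h_U\ge0$ and their sum tends to zero, each tends to zero separately. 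Hence $S_U=R_U+\eta_U\to0$.

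\textbf{Step 2: every term of $\mathcal E_U$ tends to zero.} We need $\mathcal F_U^\sharp=\mathcal F_{\mathrm{th},U}[p_U]+\mathfrak h_U$ to stay bounded.
\begin{itemize}
\item The stationary normalisation identifies $\mathcal F_{\mathrm{th},U}[p_U]$ with $\mathcal F_{\mathrm{FH},m_U}[h_U]$ (Proposition~\ref{lem:finite-Hadamard}).
\item Along the sampled windows $\underline w_U\to\infty$, so $m_U\to M_+$ and the masses lie in a compact subset of $(0,\infty)$.
\item Uniform continuity in $M$ of the Fredenhagen--Haag form on the bounded detector family then gives $\sup_U\mathcal F_{\mathrm{FH},m_U}[h_U]<\infty$. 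I would also note that the Planck factor and $|D_\ell|^2\le1$ make the form bounded by a fixed Sobolev norm of $h$, uniformly for $M$ in compacts.
\end{itemize}
With $C_{\omega,U}$ uniformly bounded, each term of \eqref{eq:certified-error} is a product of a bounded factor with one of $S_U$, $\sqrt{\mathcal V_U}$, $\mathfrak h_U$, or a power of these. Therefore $\mathcal E_U(L(U),\mu(U))\to0$.

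\textbf{Step 3: conclusion.} Theorem~\ref{thm:certified-response} gives $|\mathcal F_g[h_U]-\mathcal F_{\mathrm{FH},m_U}[h_U]|\le\mathcal E_U\to0$. The lower bound uses that the $\max\{0,\cdot\}$ truncation only improves it, since $\mathcal F_g\ge0$. Finally, $|\mathcal F_{\mathrm{FH},m_U}[h_U]-\mathcal F_{\mathrm{FH},M_+}[h_U]|\to0$ by the assumed uniform continuity in $M$ together with $m_U\to M_+$, and the triangle inequality yields \eqref{eq:accretion-limit}.

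\textbf{Main obstacle.} The delicate point is justifying $m_U\to M_+$. This needs care because $m_U=m(w_+(U))$ is evaluated at the selected horizon cross-section inside the slab, and $w_+(U)\ge\underline w_U\to\infty$ must be checked for that cross-section. A related subtlety is that the convergence $\mathcal F_{\mathrm{th},U}[p_U]\to\mathcal F_{\mathrm{FH}}$ must be uniform over the $U$-dependent detector family; this is exactly why uniform continuity in $M$ is assumed rather than pointwise continuity. The hypothesis $r_H(U)\to2M_+$ also gives $\kappa_{\mathrm{lin}}\to1/(4M_+)$, consistent with $b_U$ being absorbed into $\eta_U$ via \eqref{eq:bridge-peeling-bound}, so no separate peeling term needs control.
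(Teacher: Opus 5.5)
Your proposal is correct and follows the paper's own argument. The vanishing hypotheses together with the uniform state constant kill every term of \eqref{eq:certified-error}; the stationary normalization identifies $\mathcal F_{\mathrm{th},U}[p_U]$ with $\mathcal F_{\mathrm{FH},m_U}[h_U]$; and continuity in the mass then passes from $m_U$ to $M_+$.

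You are more explicit than the paper in noting that $\sup_U\mathcal F_U^\sharp<\infty$ is needed; the paper leaves this implicit in its uniformity hypotheses. Be aware, though, that uniform continuity in $M$ yields this bound only together with a bound at one reference mass. Likewise, $|D_\ell|^2\le1$ by itself does not control the $1/(\beta\omega^2)$ infrared behaviour of the Planck weight. So the bound really rests on the paper's zero-mode exclusion or on its uniformity assumption.
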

	
	\begin{proof}
		All terms in \eqref{eq:certified-error} vanish by virtue of \eqref{eq:accretion-vanishing-assumptions} together with the uniform state constant. The stationary normalization identifies the reference form with \(\mathcal{F}_{\mathrm{FH},m_U}[h_U]\), and the assumed continuity of this form as \(m_U\to M_+\) then establishes \eqref{eq:accretion-limit}.
		
		Here is a concrete sufficient condition for the first line of
		\eqref{eq:accretion-vanishing-assumptions}.  Suppose, in addition, that
		\begin{equation}
			d_s(T)=\sup_{w\geq T}\left(
			\left|\frac{m(w)}{M_+}-1\right|
			+\sum_{j=1}^{s+2}M_+^{j-1}|\partial_w^jm(w)|
			+M_+\left|\kappa_{\mathrm{lin}}(w)-(4M_+)^{-1}\right|\right)
		\end{equation}
		decreases to zero, that the chosen slabs obey
		$\underline w_U\geq U/2$ and $\Delta v_U(L)\leq C_vL$, and that
		$G_U(L)$ is uniformly bounded.  An explicit admissible choice is
		\begin{equation}
			L(U)=M_+\min\left\{\sqrt{U/M_+},\ d_s(U/2)^{-1/2}\right\},
			\label{eq:accretion-window-choice}
		\end{equation}
		with the second entry omitted when $d_s(U/2)=0$.  Then $L(U)\to\infty$ and
		$L(U)d_s(U/2)/M_+\to0$.  On these slabs the definitions give, with constants
		depending only on the bounded coefficient class,
		\begin{equation}
			a_{U,s}(L)\leq C M_+L d_s(U/2),\qquad
			b_U(L)\leq C\frac{\Delta v_U(L)}{M_+}d_s(U/2),
		\end{equation}
		so the coefficient and peeling mismatches vanish.  A uniform weighted
		radiation norm in \eqref{eq:Sch-moment-tail} gives
		$\varepsilon_U^{\mathrm{Sch}}(L)\to0$ because
		$\kappa_UL\to\infty$.  The remaining bridge, KPV and state-channel limits
		are additional assumptions; the derivative envelope does not
		prove them.
	\end{proof}
	
	Corollary~\ref{cor:accretion} therefore identifies sufficient quantitative
	conditions for convergence to the Fredenhagen--Haag response.  Profiles in
	$\mathfrak V_{\mathrm S}$ are the special case in which $d_s(T)$ vanishes
	identically after a finite time.
	
	\subsection{Evaporation followed by accretion}
	
	The class $\mathfrak V_{\mathrm T}$ permits the same finite-window analysis;
	neither Lemma~\ref{lem:coefficient-difference} nor Proposition~\ref{prop:Duhamel}
	uses a fixed sign of $m'$.  If the null-coordinate range $[a,b]$ sampled by
	a slab crosses the turnaround $w_b$, the one-dimensional total variation of
	the mass is exactly
	\begin{equation}
		\operatorname{Var}_{[a,b]}m=m(a)-m_b+m(b)-m_b.
		\label{eq:turnaround-total-variation}
	\end{equation}
	The slice integral $v_U(L)$ in \eqref{eq:window-variations} is not literally
	this variation because the spacelike slices sample ranges of $w$; on a fixed
	regular foliation it is bounded in terms of the same absolute derivative and
	the foliation constants.  Higher coefficient seminorms in $a_{U,s}$ must be
	controlled separately.  Thus the evaporating and accreting portions enter
	energy estimates through absolute variations rather than cancelling.  A
	return to the initial mass does not undo frequency mixing produced during the
	first phase.
	
	\begin{proposition}[Finite-window and late-time turnaround statements]
		\label{prop:turnaround}
		Let $m\in\mathfrak V_{\mathrm T}$ and assume the causal and analytic
		hypotheses of Theorems~\ref{thm:target} and
		\ref{thm:certified-response}.  Then every compact observation window with
		$m\geq m_b>0$ obeys the certified response interval
		\eqref{eq:two-sided-final}; when its sampled null-coordinate range crosses
		$w_b$, the mass variation is given by
		\eqref{eq:turnaround-total-variation}.  If in
		addition $m(w)\to M_+>0$, the peeling, state and uniformity hypotheses of
		Corollary~\ref{cor:accretion} hold on the future tail, and the chosen windows
		eventually lie to the future of $w_b$, then
		\begin{equation}
			\mathcal F_g[h_U]-\mathcal F_{\mathrm{FH},M_+}[h_U]\longrightarrow0.
			\label{eq:turnaround-late-limit}
		\end{equation}
	\end{proposition}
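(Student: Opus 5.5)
The proof has three parts, in this order: the finite-window claim, the turnaround variation identity, and the late-time limit.

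For the finite-window claim, I first check that nothing earlier in the chain depends on the sign of $m'$. Fix a compact observation window on which $m\geq m_b>0$. Corollary~\ref{cor:mass-class-hyperbolicity}, applied to $\mathfrak V_{\mathrm T}$, supplies the globally hyperbolic development $\D_K$, because the mass is bounded below by $m_b$ and the horizon collar therefore has $r_{\min}\geq 2m_b-\delta>0$. Next I go through the ingredients of Theorem~\ref{thm:target}. Lemma~\ref{lem:coefficient-difference} depends only on $|m(w)-m_U|$ and on $|\partial_w^j m|$, and Proposition~\ref{prop:Duhamel} is a sign-free Duhamel identity. Proposition~\ref{prop:peeling-remainder} needs only a positive $r_H$ on the interval, and Lemma~\ref{lem:detector-reconstruction-constants} needs only bounded geometry. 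Lemma~\ref{lem:certified-error-finiteness} then makes every term of $\mathcal E_U(L,\mu)$ finite, and Theorem~\ref{thm:certified-response} gives \eqref{eq:two-sided-final} directly.

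For the variation identity, suppose the sampled range $[a,b]$ contains $w_b$. I split it as $[a,w_b]\cup[w_b,b]$. On the first piece $m'\leq0$, so $\int_a^{w_b}|m'|=m(a)-m_b$. On the second piece $m'\geq0$, so $\int_{w_b}^b|m'|=m(b)-m_b$. Adding the two gives \eqref{eq:turnaround-total-variation}. I would also note, as in the preceding discussion, that this is the total variation of $m$ and not literally $v_U(L)$. It only bounds $v_U(L)$ up to foliation constants.

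For the late-time statement, I reduce to Corollary~\ref{cor:accretion}. Since the windows eventually lie to the future of $w_b$, there is $U_0$ with $\underline w_U>w_b$ for $U\geq U_0$. On those windows $m'\geq0$, and the profile restricted to $[w_b,\infty)$ is nondecreasing with limit $M_+>0$. This restriction is an accreting profile in the sense of $\mathfrak V_{\mathrm A}$, except that its domain is a half-line. Every quantity in \eqref{eq:certified-error} depends only on $m$ on the causal hull of the window, by finite propagation and the definitions of $a_{U,s}$, $b_U$, $G_U$, $\eta_U$, $\mathfrak h_U$ and $\mathcal V_U$. So the pre-turnaround history does not enter any of these finite-window quantities. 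By hypothesis, the vanishing conditions \eqref{eq:accretion-vanishing-assumptions} and the uniformity assumptions hold on the future tail. Corollary~\ref{cor:accretion} then yields $\mathcal E_U\to0$ and \eqref{eq:turnaround-late-limit}.

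The main obstacle is the claim that the evaporating phase does not leak into the late-time error terms. The local coefficient quantities are harmless. The difficulty is in the state-dependent terms $\mathcal V_U$ and $C_{\omega,U}$, and in the global null separatrix $r_H$, since the event horizon depends on the whole geometry. I would handle this by noting that the proposition assumes the state and peeling hypotheses on the future tail. The only thing left to check is that $r_H(U)\to2M_+$ is required solely for $w>w_b$, which is how Corollary~\ref{cor:accretion} states it.
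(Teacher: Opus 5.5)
Your proposal is correct and follows the paper's own argument. The finite-window interval comes from the sign-independence of the coefficient and Duhamel estimates together with Theorem~\ref{thm:certified-response}, \eqref{eq:turnaround-total-variation} comes from monotonicity on each side of $w_b$, and the late-time limit is a direct appeal to Corollary~\ref{cor:accretion} once the windows lie after $w_b$. One claim is overstated: not every term of $\mathcal E_U$ depends only on $m$ on the window's causal hull, since $\mathcal V_U$, $C_{\omega,U}$ and $\mathfrak h_U$ depend on the state and hence on the pre-turnaround history. As you note yourself, these are exactly the terms covered by the assumed state and uniformity hypotheses on the future tail, so the argument stands.
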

	
	\begin{proof}
		The coefficient and Duhamel estimates contain absolute coefficient
		seminorms, so they remain valid across a change of sign of $m'$.  Monotonicity on each side of $w_b$
		gives \eqref{eq:turnaround-total-variation}.  The first assertion is then
		Theorem~\ref{thm:certified-response}.  For late windows lying wholly after
		$w_b$, the profile is monotone increasing and Corollary~\ref{cor:accretion}
		applies, proving \eqref{eq:turnaround-late-limit}.
	\end{proof}
	
	The transient response on windows meeting the first phase retains its effect
	through the dynamical scattering kernel and the outgoing-state term
	$\mathcal V_U$.  The late thermal limit \eqref{eq:turnaround-late-limit}, when
	its hypotheses hold, says only that those transients disperse from the chosen
	detector channel; it does not say that the intermediate evaporation was
	physically absent.
	
	\subsection{Finite-slab evaporation and the extension obstruction}
	
	If a decreasing finite slab is completed to a future Schwarzschild exterior
	of mass $M_+>0$, Theorem~\ref{thm:Coudray-scattering} supplies the exact
	boundary scattering map in the orientation covered there, and the stationary
	future version of Corollary~\ref{cor:accretion} gives the late detector limit.
	For a black-hole interpretation one must use a globally consistent matched
	geometry, as explained after \eqref{eq:Vaidya-stress}.
	
	\begin{proposition}[A finite slab has no intrinsic late-time response]
		\label{prop:extension-obstruction}
		The geometry and state on a compact Vaidya slab do not determine a unique detector response as \(U\to\infty\).
	\end{proposition}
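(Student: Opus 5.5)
We prove the proposition by exhibiting two admissible completions that agree on the compact slab but give different late-time responses. Let $N$ be a compact slab with mass $m\in\mathfrak V_{\mathrm E}$ on $I=[w_-,w_+]$, and fix a Hadamard state on $N$. Any late-time statement must involve detectors $h_U$ whose supports leave $N$ as $U\to\infty$, because the slab has no intrinsic $\scrI^+$. The response $\mathcal F_g[h_U]=W_2(\overline{h_U},h_U)$ therefore depends on the geometry and state outside $N$. We build two smooth extensions $m^{(1)},m^{(2)}$ of $m|_I$ that coincide near $w_+$ to all orders. For example, take smooth monotone decreasing continuations that become constant, $m^{(i)}(w)=M_+^{(i)}$ for $w\geq w_++1$, with distinct limits $0<M_+^{(2)}<M_+^{(1)}\leq m(w_+)$. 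Each is an eventually Schwarzschild profile of class $\mathfrak V_{\mathrm S}$ type to the future. We then attach the same past data to both.

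For each extension we use Proposition~\ref{prop:future-Unruh} to construct a Hadamard state $\omega^{(i)}_{\mathrm{fU}}$. Its restriction to $N$ should be the same for both. To arrange this, we do not use the future Unruh state directly. Instead we fix the state on a Cauchy surface of a common development $\D_K\supset N$, via Proposition~\ref{prop:local-global-hyperbolicity}, and propagate it by Cauchy evolution into each extension. The time-slice property of Section~\ref{sec:detectors} shows that the restrictions to $N$ coincide, since both states come from the same Cauchy covariance on a surface inside the common region. There is a simpler alternative that uses only one geometry. Keep one extension and take two quasifree Hadamard states that agree on $N$ but differ to the future. For instance, add a smooth coherent or incoming excitation whose data are supported on $\scrI^-$ or on a Cauchy surface portion outside $J^-(N)$ and $J^+(N)$. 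Such an excitation changes the one-point function only in the region it influences. This realises exactly the freedom recorded in $\mathcal V_U$ in \eqref{eq:outgoing-state-error}.

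Next we compute the late limits. Corollary~\ref{cor:accretion}, in the stationary future version noted before the proposition, gives $\mathcal F_{g^{(i)}}[h_U]-\mathcal F_{\mathrm{FH},M_+^{(i)}}[h_U]\to0$ under its hypotheses. These hypotheses hold trivially on the exactly Schwarzschild future tail, since $a_{U,s}=b_U=0$ there. The detectors $h_U$ are chosen with a fixed frequency profile. Because $\beta=8\pi M$ enters \eqref{eq:FH-final} through the Planck factor, we can pick a filter $\widetilde h$ for which $\mathcal F_{\mathrm{FH},M}[h]$ is strictly monotone in $M$. The two limits therefore differ, while the data on $N$ are identical. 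This proves non-uniqueness. In the state-variation alternative, a coherent excitation simply adds $|\omega(\Phi(h_U))|^2>0$ to the response for suitable late $h_U$, which is even more direct.

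The main obstacle is justifying the hypotheses of Corollary~\ref{cor:accretion} in the two-completion construction, in particular $\mathcal V_U\to0$ and the bridge defect $\eta_U\to0$, which the paper treats as assumptions. To avoid this dependence, I would first try the coherent-state argument. There the difference of responses is an explicit positive term, $|\langle E h_U, \phi\rangle|^2$ for a classical solution $\phi$ vanishing on $N$. One only has to show that this term stays bounded away from zero along a late detector family, which can be arranged by choosing $\phi$ as an outgoing packet whose radiation field overlaps the frozen outgoing channel of $h_U$ at each $U$.
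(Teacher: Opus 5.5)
There is a genuine gap in your main route. On an exactly Schwarzschild tail only the coefficient and peeling terms $a_{U,s}$ and $b_U$ vanish, so the hypotheses of Corollary~\ref{cor:accretion} do not hold trivially. The limits $\varepsilon^{\mathrm{Sch}}_U\to0$, $\mathcal V_U\to0$, $\eta_U+\mathfrak h_U\to0$ and the uniformity of constants are exactly the ``additional assumptions'' the paper says the mass profile does not prove. Nothing supplies $\mathcal V_U\to0$ for your Cauchy-transported states. You are right that the two future-Unruh states need not agree on $N$, but that is why the thermal input is lost. Even granting the limits, ``strictly monotone in $M$'' is unjustified, since $|D_\ell(\omega)|^2$ also depends on $M\omega$. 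Also, a Cauchy surface of $\D_K$ is not a Cauchy surface of either extension; you need a common full Cauchy surface in the shared past, as the paper stipulates.

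The idea you are missing is what makes the paper's proof independent of all these hypotheses. The antisymmetric part of every two-point function is the state-independent commutator $iE_j$, which the geometry fixes. Suppose $W_2^{(1)}(\overline h,h)=W_2^{(2)}(\overline h,h)$ for all complex $h$ in a future region. Polarisation then gives $W_2^{(1)}=W_2^{(2)}$ there, hence $E_1=E_2$, which fails where the wave operators differ. So late detectors with different responses exist without any thermal limit. The Planck factors $8\pi M_+^{(j)}$ appear in the paper only as a conditional addendum.

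Your coherent-state alternative is a genuinely different route and can be made unconditional, but its decisive step fails as phrased. For a single finite-energy packet $\phi$, the stationary end gives local energy decay, i.e.\ its radiation field is square-integrable. Hence $\phi(h_U)=\int\phi\,h_U\,\dd\mathrm{vol}\to0$ for detectors translated to late $U$, and the extra term $|\phi(h_U)|^2$ disappears in the limit. You need a smooth solution whose radiation field is not square-integrable. One choice is a locally finite sum $\sum_k c_k\phi_k$ of packets with Cauchy data in $\Sigma\setminus(J^+(N)\cup J^-(N))$. Place them at radii growing fast enough that, by finite propagation speed, $\phi_j$ vanishes on $\supp h_{U_k}$ for $j>k$, and choose $c_k$ inductively so that $|\phi(h_{U_k})|\geq1$. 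This requires the Cauchy data of $Eh_{U_k}$ not to vanish on the open set $\Sigma\setminus J(N)$; nondegeneracy of the symplectic form then yields $\phi_k$ with $\phi_k(h_{U_k})\neq0$.

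Displacement by any smooth real solution is still a $*$-automorphism of the Weyl algebra. It leaves the state on $N$ unchanged and adds only the smooth kernel $\phi\otimes\phi$, so Hadamardness is preserved (the difference $|\phi(h)|^2$ also uses a zero-mean reference state). With this repair you prove the literal statement by varying the state outside the slab, i.e.\ the freedom recorded in $\mathcal V_U$, with the geometry fixed. The paper's argument makes the stronger point it draws afterwards: even with identical initial covariance, the late response depends on the future geometric completion.
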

	
	\begin{proof}
		Choose two globally hyperbolic future extensions which agree on an open
		neighbourhood of the closed slab, together with the restriction of the field
		state there, but which later settle to Schwarzschild masses
		$M_+^{(1)}\neq M_+^{(2)}$ after different exterior fluxes are supplied.
		Their complete Einstein--matter Cauchy data outside the common region are not
		being held fixed; such data were never part of the finite-slab hypothesis.
		Finite propagation speed makes all observables in the common slab identical.
		Choose the construction so that the common past contains a Cauchy surface
		$\Sigma$ for the test-field problem and prescribe the same Hadamard Cauchy
		covariance there.  Transport by the two future Klein--Gordon operators gives
		two future covariances.  They cannot agree on every future smearing: by
		polarisation, equality of all their quadratic forms would force equality of
		their antisymmetric parts $iE_j$, whereas the causal propagators differ for
		some test function whose causal hull meets the region where the metrics
		differ.  Thus some future detector has different responses in the two
		extensions.  If each completion
		is additionally supplied with its corresponding future-Unruh state and the
		hypotheses of Corollary~\ref{cor:accretion}, the two limiting Planck factors
		have inverse temperatures $8\pi M_+^{(1)}$ and $8\pi M_+^{(2)}$.  Therefore no
		theorem whose assumptions mention only the common finite slab can select one
		late-time answer.
	\end{proof}
	
	Proposition~\ref{prop:extension-obstruction} therefore separates the two
	relevant conclusions.  The certified finite-window bound depends only on the
	corresponding causal development, whereas a late-time limit requires a
	specified future completion.
	
	\subsection{Weighted estimates for asymptotic evaporation}
	\label{sec:asymptotic-evaporation}
	
	For the outgoing metric put
	\begin{equation}
		\tau(u)=\int_{u_0}^{u}\frac{\dd s}{m(s)},
		\qquad \rho=\frac r{m(u)}.
		\label{eq:scaled-coordinates}
	\end{equation}
	A direct substitution gives the exact conformal factorisation
	\begin{equation}
		g=m(u)^2\widehat g,
		\qquad
		\widehat g=
		-\left(1-\frac2\rho+2\rho m'(u)\right)\dd\tau^2
		-2\,\dd\tau\dd\rho+\rho^2\dd\Omega^2.
		\label{eq:rescaled-evaporation-metric}
	\end{equation}
	For the massless conformal equation the field rescaling converts the physical
	problem exactly to the conformal operator
	$\Box_{\widehat g}-R_{\widehat g}/6$, not in general to the bare wave
	operator $\Box_{\widehat g}$.  Thus fixed compact
	$\rho$ regions have uniform geometry whenever the dimensionless quantities
	$m'$, $mm''$, and their higher $\tau$ derivatives are bounded.
	
	The coordinate $\tau$ is null and must not be used as an energy time.  Put
	\begin{equation}
		T=\tau+\rho.
		\label{eq:rescaled-temporal-time}
	\end{equation}
	The inverse of the $(\tau,\rho)$ part of \eqref{eq:rescaled-evaporation-metric}
	gives
	\begin{equation}
		\widehat g^{-1}(\dd T,\dd T)
		=-1-\frac2\rho+2\rho m'(u)<0
		\label{eq:rescaled-temporal-norm}
	\end{equation}
	for $m'(u)\leq0$.  Thus $T$ is a temporal function on the evaporating region,
	uniformly so on every compact positive $\rho$-annulus.
	
	\begin{proposition}[Scale-covariant finite-window estimate]
		\label{prop:scaled-window}
		Let $q(U)>0$ and consider a $T$-window of length $q(U)$ in a fixed compact
		positive $\rho$-annulus whose causal development remains in the annulus.
		Assume the unit-mass Schwarzschild reference evolution is uniformly bounded
		on the chosen rescaled energy space (as it is for the standard global
		Schwarzschild energy space).  Define
		\begin{equation}
			\widehat\delta_U(q)=
			\sup\left\{|m'|+|mm''|+\cdots+|\partial_\tau^{N-1}m'|\right\}
			\label{eq:rescaled-variation}
		\end{equation}
		on that window.  For detector profiles and state norms supported in the
		annulus and uniformly bounded in the rescaled energy spaces, the
		propagation part of the response error satisfies
		\begin{equation}
			\widehat B_U(q)
			\leq Cq e^{Cq\widehat\delta_U(q)}\widehat\delta_U(q)
			+\varepsilon^{\mathrm{Sch}}_{1}(q).
			\label{eq:weighted-evaporation-bound}
		\end{equation}
		Hence if $q(U)\to\infty$ and
		$q(U)\widehat\delta_U(q(U))\to0$, the scale-covariant frozen response error
		tends to zero whenever the unit-mass Schwarzschild radiation tail tends to
		zero.  Under the weighted moment hypothesis \eqref{eq:Sch-moment-tail} that
		tail is $O(q^{-j})$.  For example, if an a priori envelope
		$\overline\delta_U\to0$ bounds \eqref{eq:rescaled-variation} throughout the
		window of length $\overline\delta_U^{-1/2}$, taking
		$q(U)=\overline\delta_U^{-1/2}$ gives an
		$O(\overline\delta_U^{1/2})+O(\overline\delta_U^{j/2})$ bound.
	\end{proposition}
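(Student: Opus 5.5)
The plan is to repeat the argument of Theorem~\ref{thm:target} in the rescaled geometry \eqref{eq:rescaled-evaporation-metric}, with the unit-mass Schwarzschild metric as the frozen reference.

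\emph{Step 1: reduction to a rescaled problem.} Since $g=m(u)^2\widehat g$ and $R_g=0$, the field $\widehat\phi=m\,\Phi$ solves $(\Box_{\widehat g}-R_{\widehat g}/6)\widehat\phi=0$ exactly. The smearings transform covariantly, so the response is unchanged. I then work on the window of $T$-length $q(U)$ inside the fixed $\rho$-annulus. By \eqref{eq:rescaled-temporal-norm}, the levels of $T$ are uniformly spacelike there. Proposition~\ref{prop:local-global-hyperbolicity} and the speed-bound argument give a globally hyperbolic development contained in the annulus. Proposition~\ref{prop:compact-evolution} then gives uniform energy estimates with constants depending only on the annulus and on bounds for $\widehat\delta_U$.

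\emph{Step 2: coefficient comparison.} Write $\widehat g=\widehat g_1+\widehat h$, where $\widehat g_1$ is unit-mass Schwarzschild in these coordinates and $\widehat h_{\tau\tau}=-2\rho m'(u)$. The potential $R_{\widehat g}$ is a combination of $m'$ and $\partial_\tau m'=m m''$, because $\partial_\tau=m\partial_u$. Every $\tau$-derivative of a coefficient is likewise a combination of the terms listed in \eqref{eq:rescaled-variation}, with factors of $\rho$ bounded on the annulus. Arguing as in Lemma~\ref{lem:coefficient-difference}, and noting that the angular term still cancels, I get $\|\widehat A(T)-\widehat A_1\|_{X^{s+1}\to X^s}\leq C\widehat\delta_U(q)$ uniformly in $\ell$ and in $U$.

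\emph{Step 3: Duhamel and Gronwall.} The Duhamel identity of Proposition~\ref{prop:Duhamel} gives a difference bounded by $\widehat G_U(q)\,q\,C\widehat\delta_U(q)$. This is the main obstacle: one must bound the stability factor $\widehat G_U$ without the generic $e^{Cq}$ growth of \eqref{eq:compact-evolution-bound}. For the frozen factor I will use the hypothesis that the unit-mass Schwarzschild evolution is uniformly bounded. For the dynamical factor I will treat $\widehat A(T)=\widehat A_1+(\widehat A(T)-\widehat A_1)$ as a bounded perturbation of the uniformly bounded frozen evolution. A Gronwall argument on the Duhamel integral equation then yields $\|\widehat{\mathcal U}\|\leq Ce^{Cq\widehat\delta_U(q)}$. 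If the perturbation costs a derivative, I will pass to the $X^{s}$ level with the higher seminorms already included in $\widehat\delta_U$ and use the energy identity \eqref{eq:multiplier-identity} directly: the deformation tensor term is $O(\widehat\delta_U)$ relative to the frozen energy current. This produces the first term of \eqref{eq:weighted-evaporation-bound}.

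\emph{Step 4: stationary tail and conclusion.} Apply Proposition~\ref{prop:stationary-tail} with mass $1$ to get the remainder $\varepsilon^{\mathrm{Sch}}_1(q)$. The moment bound \eqref{eq:Sch-moment-tail} gives $O(\langle q/4\rangle^{-j})=O(q^{-j})$, and the constants $C_D$ and $C_R$ are uniform because the profiles are uniformly bounded in the rescaled spaces. Summing the two contributions as in Theorem~\ref{thm:target} gives \eqref{eq:weighted-evaporation-bound}. If $q\to\infty$ and $q\widehat\delta_U\to0$, the exponential factor stays bounded and both terms vanish. Finally, the choice $q=\overline\delta_U^{-1/2}$ gives $q\overline\delta_U=\overline\delta_U^{1/2}$ and a tail of order $\overline\delta_U^{j/2}$, as claimed.
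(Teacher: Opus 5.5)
Your proposal is correct and follows essentially the same route as the paper's proof. The paper's steps are: finite-window energy estimates on the fixed annulus via \eqref{eq:rescaled-temporal-norm} and Proposition~\ref{prop:compact-evolution}; coefficient and conformal-potential differences from unit-mass Schwarzschild bounded by $C\widehat\delta_U(q)$; Duhamel's formula relative to the assumed uniformly bounded Schwarzschild evolution, followed by Gronwall over a $T$-interval of length $q$; and Proposition~\ref{prop:stationary-tail} with mass $1$ for the tail. Your Step~3 is more explicit than the paper, which invokes Duhamel and Gronwall without addressing the derivative loss in $\widehat A(T)-\widehat A_1$ that you flag and handle through the multiplier identity.
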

	
	\begin{proof}
		Equation~\eqref{eq:rescaled-temporal-norm} and
		Proposition~\ref{prop:compact-evolution} give finite-window commuted energy
		estimates on the fixed annulus.  In \eqref{eq:rescaled-evaporation-metric} the difference
		from the unit-mass Schwarzschild coefficients, the conformal curvature
		potential, and all commuted differences through the required order are
		bounded by $C\widehat\delta_U(q)$.  Duhamel's formula relative to the assumed
		uniformly bounded Schwarzschild evolution, followed by Gronwall on an
		interval of length $q$, gives the first term in
		\eqref{eq:weighted-evaporation-bound}.  The exact
		Schwarzschild radiation decomposition, Proposition~\ref{prop:stationary-tail},
		gives the second.  The stated choices of $q$ make both terms vanish.
	\end{proof}
	
	Restoring physical units, the frozen reference in
	Proposition~\ref{prop:scaled-window} has inverse temperature
	$\beta_U=8\pi m(U)$.  The PDE proposition controls comparison with that
	reference scale; it does not alone establish a measured temperature.  After
	the bridge and state hypotheses are added, the associated scale-following
	temperature diverges as the horizon shrinks, but this still does not assert
	that a fixed far-away detector receives an unlimited thermal flux.
	
	There is nevertheless no ordinary short-range scattering limit at
	$u=+\infty$.  For every monotone $m(u)\downarrow0$,
	\begin{equation}
		\int_{\tau(u_0)}^{\infty}|m'(u(\tau))|\,\dd\tau
		=\int_{u_0}^{\infty}\frac{|m'(u)|}{m(u)}\,\dd u
		=\lim_{u\to\infty}\log\frac{m(u_0)}{m(u)}=+\infty.
		\label{eq:long-range-obstruction}
	\end{equation}
	Thus the rescaled perturbation is inevitably long-range even when $m'(u)$
	tends to zero.  Proposition~\ref{prop:scaled-window} proves a local-in-scale
	adiabatic theorem, not a global wave-operator theorem.  It also concerns
	detectors whose profiles are fixed relative to the shrinking radius; a
	detector at fixed physical radius requires a separate far-zone matching
	estimate.  The rescaled PDE comparison therefore applies to scale-following
	profiles on compact $\rho$-annuli.  A thermal detector theorem
	uses the localisation and state inputs above, and a detector at fixed
	physical radius requires a separate far-zone estimate.
	
	By \eqref{eq:horizon-curvature}, the physical
	horizon curvature diverges as $m\to0$.  Conformal covariance controls the
	massless test field but does not make semiclassical backreaction uniformly
	small.  Moreover the usual law $m'=-\alpha/m^2$ reaches zero in finite time;
	an asymptotic profile necessarily modifies that law at small mass.
	
	\section{Conclusions}
	
	We have developed a finite-window extension of the
	Fredenhagen--Haag strategy for massless scalar fields on controlled Vaidya
	backgrounds.  Regular detector--horizon causal windows admit globally
	hyperbolic developments, and the corresponding nonautonomous wave evolution
	can be compared with a frozen Schwarzschild propagator on explicit Sobolev
	energy spaces.  Theorem~\ref{thm:target} quantifies this comparison in terms
	of coefficient variation and a stationary radiation tail.  The null-ray
	analysis supplies the exact dynamical peeling coefficient together with a
	quadratic remainder estimate.  Combining these results with the local
	Hadamard scaling distribution and positivity yields the certified detector
	interval of Theorem~\ref{thm:certified-response}.  In particular, the
	statement at finite parameters is a two-sided inequality with explicitly
	defined error terms rather than an uncontrolled thermal approximation.
	
	The applications distinguish the late-time behaviour permitted by the
	different mass profiles.  Under the quantitative decay assumptions of
	Corollary~\ref{cor:accretion}, asymptotically stationary accretion converges
	to the Fredenhagen--Haag response of the limiting Schwarzschild geometry.
	The same conclusion holds for evaporation--accretion turnaround profiles
	when the observation windows enter the stationary future regime.  A finite
	evaporating slab determines the finite-window response but not a unique
	late-time limit until its future extension is prescribed.  For asymptotic
	evaporation with $m(u)>0$ and $m(u)\to0$, the conformally rescaled problem
	satisfies the scale-covariant estimate
	\eqref{eq:weighted-evaporation-bound}, while
	\eqref{eq:long-range-obstruction} proves that the global rescaled
	perturbation is necessarily long-range.
	
	The asymptotic detector limits require the decay of the localisation defect
	$\eta_U$, together with the state-channel and uniformity conditions stated in
	the corresponding corollaries.  A fixed-radius detector in the asymptotically
	evaporating geometry additionally requires a far-zone propagation estimate.
	The one-stream Vaidya model also does not incorporate the self-consistent
	backreaction of simultaneous ingoing and outgoing fluxes.  These restrictions
	do not affect the finite-window estimates or the certified detector
	inequality proved here; they specify the additional hypotheses needed when
	those results are promoted to global late-time statements.


\begin{thebibliography}{99}
		
		\bibitem{FredenhagenHaag1990}
		K.~Fredenhagen and R.~Haag,
		``On the derivation of Hawking radiation associated with the formation of a
		black hole,''
		\emph{Commun. Math. Phys.} \textbf{127} (1990), 273--284.
		\href{https://doi.org/10.1007/BF02096757}{doi:10.1007/BF02096757}.
		
		\bibitem{DimockKay1987}
		J.~Dimock and B.~S.~Kay,
		``Classical and quantum scattering theory for linear scalar fields on the
		Schwarzschild metric. I,''
		\emph{Ann. Phys.} \textbf{175} (1987), 366--426.
		\href{https://doi.org/10.1016/0003-4916(87)90214-4}
		{doi:10.1016/0003-4916(87)90214-4}.
		
		\bibitem{DafermosRodnianski2013}
		M.~Dafermos and I.~Rodnianski,
		``Lectures on black holes and linear waves,'' in
		\emph{Evolution Equations}, Clay Mathematics Proceedings, vol.~17,
		American Mathematical Society, 2013, pp.~97--205.
		\href{https://arxiv.org/abs/0811.0354}{arXiv:0811.0354}.
		
		\bibitem{BaskinWang2014}
		D.~Baskin and F.~Wang,
		``Radiation fields on Schwarzschild spacetime,''
		\emph{Commun. Math. Phys.} \textbf{331} (2014), 477--506.
		\href{https://arxiv.org/abs/1305.5273}{arXiv:1305.5273}.
		
		\bibitem{Nicolas2016}
		J.-P.~Nicolas,
		``Conformal scattering on the Schwarzschild metric,''
		\emph{Ann. Inst. Fourier} \textbf{66} (2016), 1175--1216.
		\href{https://arxiv.org/abs/1312.1386}{arXiv:1312.1386}.
		
		\bibitem{DappiaggiMorettiPinamonti2011}
		C.~Dappiaggi, V.~Moretti and N.~Pinamonti,
		``Rigorous construction and Hadamard property of the Unruh state in
		Schwarzschild spacetime,''
		\emph{Adv. Theor. Math. Phys.} \textbf{15} (2011), 355--447.
		\href{https://arxiv.org/abs/0907.1034}{arXiv:0907.1034}.
		
		\bibitem{Coudray2024}
		A.~Coudray,
		``Conformal scattering of the wave equation in the Vaidya spacetime,''
		\emph{Rev. Math. Phys.}, online publication (2025), 2550035.
		\href{https://doi.org/10.1142/S0129055X25500357}%
		{doi:10.1142/S0129055X25500357};
		\href{https://arxiv.org/abs/2405.08659}{arXiv:2405.08659}.
		
		\bibitem{ChirentiSaa2012}
		C.~Chirenti and A.~Saa,
		``Double-null formulation of the general Vaidya metric,''
		\emph{Class. Quantum Grav.} \textbf{29} (2012), 135003.
		\href{https://doi.org/10.1088/0264-9381/29/13/135003}%
		{doi:10.1088/0264-9381/29/13/135003}.
		
		\bibitem{KurpiczPinamontiVerch2021}
		F.~Kurpicz, N.~Pinamonti and R.~Verch,
		``Temperature and entropy--area relation of quantum matter near spherically
		symmetric outer trapping horizons,''
		\emph{Lett. Math. Phys.} \textbf{111} (2021), 110.
		\href{https://doi.org/10.1007/s11005-021-01445-7}{doi:10.1007/s11005-021-01445-7}.
		
		\bibitem{Kodama1980}
		H.~Kodama,
		``Conserved energy flux for the spherically symmetric system and the back
		reaction problem in the black hole evaporation,''
		\emph{Prog. Theor. Phys.} \textbf{63} (1980), 1217--1228.
		\href{https://doi.org/10.1143/PTP.63.1217}{doi:10.1143/PTP.63.1217}.
		
		\bibitem{Hayward1998}
		S.~A.~Hayward,
		``Unified first law of black-hole dynamics and relativistic thermodynamics,''
		\emph{Class. Quantum Grav.} \textbf{15} (1998), 3147--3162.
		\href{https://doi.org/10.1088/0264-9381/15/10/017}{doi:10.1088/0264-9381/15/10/017}.
		
		\bibitem{JanssenVerch2023}
		D.~W.~Janssen and R.~Verch,
		``Hadamard states on spherically symmetric characteristic surfaces, the
		semi-classical Einstein equations and the Hawking effect,''
		\emph{Class. Quantum Grav.} \textbf{40} (2023), 045002.
		\href{https://doi.org/10.1088/1361-6382/acb039}%
		{doi:10.1088/1361-6382/acb039}.
		
		\bibitem{BaerGinouxPfaeffle2007}
		C.~B\"ar, N.~Ginoux and F.~Pf\"affle,
		\emph{Wave Equations on Lorentzian Manifolds and Quantization},
		ESI Lectures in Mathematics and Physics, European Mathematical Society, 2007.
		\href{https://arxiv.org/abs/0806.1036}{arXiv:0806.1036}.
		
		\bibitem{Radzikowski1996}
		M.~J.~Radzikowski,
		``Micro-local approach to the Hadamard condition in quantum field theory on
		curved space-time,''
		\emph{Commun. Math. Phys.} \textbf{179} (1996), 529--553.
		
		\bibitem{Hiscock1981}
		W.~A.~Hiscock,
		``Models of evaporating black holes. I,''
		\emph{Phys. Rev. D} \textbf{23} (1981), 2813--2822.
		\href{https://doi.org/10.1103/PhysRevD.23.2813}{doi:10.1103/PhysRevD.23.2813}.
		
		\bibitem{Hiscock1981II}
		W.~A.~Hiscock,
		``Models of evaporating black holes. II. Effects of the outgoing created
		radiation,''
		\emph{Phys. Rev. D} \textbf{23} (1981), 2823--2827.
		\href{https://doi.org/10.1103/PhysRevD.23.2823}{doi:10.1103/PhysRevD.23.2823}.
		
		\bibitem{DwivediJoshi1989}
		I.~H.~Dwivedi and P.~S.~Joshi,
		``On the nature of naked singularities in Vaidya spacetimes,''
		\emph{Class. Quantum Grav.} \textbf{6} (1989), 1599--1606.
		\href{https://doi.org/10.1088/0264-9381/6/11/013}%
		{doi:10.1088/0264-9381/6/11/013}.
		
		\bibitem{PegasEtAl2025}
		J.~V.~O.~P\^egas, A.~G.~S.~Landulfo, G.~E.~A.~Matsas and D.~A.~T.~Vanzella,
		``Globally hyperbolic evaporating black hole and the information loss issue,''
		\emph{Class. Quantum Grav.} \textbf{42} (2025), 065009.
		\href{https://doi.org/10.1088/1361-6382/adb534}%
		{doi:10.1088/1361-6382/adb534}.
		
		\bibitem{BarceloLiberatiSonegoVisser2011}
		C.~Barcel\'o, S.~Liberati, S.~Sonego and M.~Visser,
		``Minimal conditions for the existence of a Hawking-like flux,''
		\emph{Phys. Rev. D} \textbf{83} (2011), 041501.
		\href{https://doi.org/10.1103/PhysRevD.83.041501}{doi:10.1103/PhysRevD.83.041501}.
		
	\end{thebibliography}
\end{document}